\newcommand{\1}{\mathbbm{1}}
\newcommand{\todoi}[1]{\todo[inline]{#1}}
\newtheorem{theorem}{Theorem}
\newtheorem{definition}[theorem]{Definition}
\newtheorem{lemma}[theorem]{Lemma}
\newtheorem{corollary}[theorem]{Corollary}
\newtheorem{observation}[theorem]{Observation}
\numberwithin{theorem}{section}
\renewcommand\vec{\mathbf}
\DeclarePairedDelimiter\ceil{\lceil}{\rceil}
\DeclarePairedDelimiter\floor{\lfloor}{\rfloor}
\newcommand{\N}{\mathbb{N}}
\newcommand{\Z}{\mathbb{Z}}
\newcommand{\R}{\mathbb{R}}
\newcommand{\Q}{\mathbb{Q}}
\newcommand{\restrictf}{f_{[\vec{x}(i) \to j]}}
\newcommand{\barx}{\overline{\vec{x}}}
\newcommand{\bara}{\overline{\vec{a}}}
\newcommand{\baru}{\overline{\vec{u}}}
\newcommand{\rec}{\mathrm{recc}}
\newcommand{\restateableLemma}[4]{
    \begin{lemma} \label{#1}
        #3
    \end{lemma}
    \expandafter\newcommand\csname restate#2\endcsname{
        \noindent{\bf Lemma~\ref{#1}.}
        \emph{#3}
    }
    \expandafter\newcommand\csname proof#2\endcsname{
        \begin{proof}
            #4
        \end{proof}
    }
}
\newcommand{\restateableTheorem}[4]{
    \begin{theorem} \label{#1}
        #3
    \end{theorem}
    \expandafter\newcommand\csname restate#2\endcsname{
        \noindent{\bf Theorem~\ref{#1}.}
        \emph{#3}
    }
    \expandafter\newcommand\csname proof#2\endcsname{
        \begin{proof}
            #4
        \end{proof}
    }
}
\newcommand{\restateableObservation}[4]{
    \begin{observation} \label{#1}
        #3
    \end{observation}
    \expandafter\newcommand\csname restate#2\endcsname{
        \noindent{\bf Observation~\ref{#1}.}
        \emph{#3}
    }
    \expandafter\newcommand\csname proof#2\endcsname{
        \begin{proof}
            #4
        \end{proof}
    }
}
\title{Composable computation in discrete chemical reaction networks\opt{full,final}{\footnote{Authors supported by NSF grants $1619343$ and $1844976$.}}}
    \author[1]{Eric E. Severson}
    \author[1]{David Haley}
    \author[1]{David Doty}
    \affil{University of California, Davis. {\tt \{eseverson,drhaley,doty\}@ucdavis.edu}.}
\date{}
\begin{document}
\maketitle
\vspace{-1.0cm}
\opt{sub}{\vspace{-1cm}}

\begin{abstract}
We study the composability of discrete chemical reaction networks (CRNs) that \emph{stably compute}
(i.e., with probability 0 of error)
integer-valued functions $f:\N^d\to\N$.
We consider \emph{output-oblivious} CRNs in which the output species is never a reactant (input) to any reaction.
The class of output-oblivious CRNs is fundamental,
appearing in earlier studies of CRN computation,
because it is precisely the class of CRNs that can be composed by simply renaming the output of the upstream CRN to match the input of the downstream CRN.

Our main theorem precisely characterizes the functions $f$ stably computable by output-oblivious CRNs with an initial leader.
The key necessary condition is that for sufficiently large inputs, $f$ is the minimum of a finite number of nondecreasing \emph{quilt-affine} functions.
(An affine function is linear with a constant offset; a \emph{quilt-affine} function is linear with a periodic offset).
\end{abstract}

\section{Introduction}
\label{sec:intro}



A foundational model of chemistry commonly used in natural sciences is that of chemical reaction networks (CRNs): 
finite sets of chemical reactions such as $A + B \to A + C$.
The model is described as a continuous time, discrete state, Markov process~\cite{Gillespie77}.
A configuration of the system is a vector of non-negative integers specifying the molecular counts of the species (e.g., $A$, $B$, $C$), a reaction can occur only when all its reactants are present, and transitions between configurations correspond to reactions 
(e.g., when the above reaction occurs the count of $B$ is decreased by $1$ and the count of $C$ increased by $1$).
CRNs are widely used to describe natural biochemical systems such as the intricate cellular regulatory networks responsible for the information processing within cells.
Looking beyond the \emph{scientific} goal of understanding natural CRNs, 
to the \emph{engineering} goal of constructing programmable, autonomous smart molecules, artificial CRNs have been implemented using
the physical primitive of nucleic-acid strand displacement cascades~\cite{SolSeeWin10, cardelli2011strand, chen2013programmable, srinivas2017enzyme}.


Population protocols,
a widely-studied model of distributed computing with very limited agents,
are a restricted subset of CRNs 
(those with two reactants and two products in each reaction) 
that nevertheless capture many of the interesting features of CRNs.
The key feature is the inability of agents (molecules) to control their schedule of communication (collisions).
The decision problems solvable by population protocols have been studied extensively:
they can simulate Turing machines with high probability in polylogarithmic time
(with~\cite{angluin2006fast} or without~\cite{kosowski2018brief} an initial leader),
whereas requiring probability 0 of error limits the computable predicates to being semilinear~\cite{angluin2007computational}.


\subsection{Function computation}

Computation of functions $f:\N^d \to \N$ was discussed briefly in the first population protocols paper \cite[Section 3.4]{angluin2006passivelymobile},
which focused more on Boolean predicate computation,
and it was defined formally first for CRNs~\cite{CheDotSolDetFuncNaCo, DotHajLDCRNNaCo}
and later for population protocols~\cite{belleville2017hardness}.
The class of functions stably computable in either model is the same: 
the semilinear functions~\cite{angluin2007computational, CheDotSolDetFuncNaCo}.
We use the CRN model because it is more natural for describing functions,
but our results also apply to the population protocol model.

To represent an input $\vec{x} \in \N^d$,
we start in a configuration with counts $\vec{x}(i)$ of species $X_i$ for each $i \in \{1,\ldots,d\}$,
and count 1 of a ``leader'' species $L$.\footnote{
    The  leader is discussed in Section~\ref{sec:role-of-leader}.
    A CRN may ignore its leader, as in Fig.~\ref{fig:CRN-computing-examples}.
}
A function $f:\N^d\to\N$ is \emph{stably computable} by a CRN if a correct 
and \emph{stable} configuration $\vec{O}$
(i.e., on input $\vec{x}$ the count of $Y$ is $f(\vec{x})$ in all configurations reachable from $\vec{O}$)
remains reachable no matter what reactions occur.\footnote{
    We use this definition throughout the paper,
    but we mention here that it is equivalent to two other natural definitions.
    The first definition is that any fair sequence of reactions \emph{will} take the CRN to such a correct stable configuration,
    where \emph{fair} means that any configuration that is infinitely often reachable is eventually reached.
    The second definition is that a correct stable configuration is actually reached with probability 1.
}
\begin{figure}{}
    \opt{full}{\centering}
    \begin{tabular}{cc}
        \hbox{
            \minibox[c,frame]{
                $f(x) = 2x$
                \\ \hline
                $X \to 2Y$
            }
            \minibox[c,frame]{
                $f(x_1,x_2) = \min(x_1,x_2)$
                \\ \hline
                    $X_1+X_2 \to Y$
            }
        }
    \end{tabular}
        
    \vskip0.05cm
        
    \begin{tabular}{cc}
        \minibox[c,frame]{
            $f(x_1,x_2) = \max(x_1,x_2)$
            \\ \hline
            $\begin{array}{rcl}
            X_1 & \to & Z_1 + Y
            \\
            X_2 & \to & Z_2 + Y
            \\
            Z_1+Z_2 & \to & K
            \\
            K+Y & \to & \emptyset
            \end{array}$
        }
    \end{tabular}

    \caption{\footnotesize Functions stably computed by CRNs. Note max is computed as $x_1+x_2-\min(x_1,x_2)$.}
    \label{fig:CRN-computing-examples}
\end{figure}

See Fig.~\ref{fig:CRN-computing-examples} for examples.
It is known that a function $f:\N^d\to\N$ is stably computable by a CRN if and only if it is \emph{semilinear}:
intuitively, it is a piecewise affine function.
(See Definition~\ref{semilinear function def}.)

\subsection{Composability}
Note a key difference between the CRNs for $\min$ and $\max$ in Fig.~\ref{fig:CRN-computing-examples}:
the former only produces the output species $Y$, 
whereas the latter also contains reactions that consume $Y$.
In one possible sequence of reactions for the $\max$ CRN, the inputs can be exhausted through the first two reactions before ever executing the last two reactions.  In doing so,
the count of $Y$ overshoots its correct value of $\max(x_1,x_2)$
before 
the excess 
is consumed by the reaction $K+Y \to \emptyset$.

For this reason that the $\min$ CRN is more easily composed with a downstream CRN.
For example, the function $2 \cdot \min(x_1, x_2)$ is stably computed by the reactions $X_1 + X_2 \to W$ 
(computing $w = x_1+x_2$)
and $W \to 2Y$
(computing $y = 2w$), 
renaming the output of the $\min$ CRN to match the input of the multiply-by-2 CRN.
However, this approach does not work to compute $2 \cdot \max(x_1,x_2)$;
changing $Y$ to $W$ in the four-reaction $\max$ CRN and adding the reaction $W \to 2Y$ can erroneously result in up to $2(x_1 + x_2)$ copies of $Y$ being produced.
Intuitively,
the multiply-by-2 reaction $W \to 2Y$ competes with the upstream reaction $K+W \to \emptyset$ from the $\max$ CRN.

This motivates us to study the class of functions $f:\N^d\to\N$ stably computable by \emph{output-oblivious} CRNs:
those in which the output species $Y$ appears only as a product,
never as a reactant.
We call such a function \emph{obliviously-computable}.
Any obliviously-computable function must be nondecreasing, 
otherwise reactions could incorrectly overproduce output 
(see Observation \ref{MC is nondecreasing}).

Obliviously-computable functions must also be semilinear, so it is reasonable to conjecture that a function is obliviously-computable if and only if it is semilinear and nondecreasing.
In fact, this is true for 1D functions $f:\N\to\N$
(see Section~\ref{sec:one-dim-case}).
However, in higher dimensions, the function $\max:\N^2\to\N$ is semilinear and nondecreasing, 
yet not obliviously-computable;
its consumption of output turns out to be unavoidable. 
%
Assuming there is no leader,
this is simple to prove:
Since $\max(1,0)=1$, starting with one $X_1$, a $Y$ can be produced.
Similarly, a $Y$ can be produced starting with one $X_2$.
Then with one $X_1$ and one $X_2$, 
these reactions can happen in parallel and produce two $Y$'s, 
too many since $\max(1,1) = 1$. 
It is more involved to prove that even with a leader,
it remains impossible to obliviously compute max; 
see Section~\ref{sec:max}.\footnote{
    This result was obtained independently by Chugg, Condon, and Hashemi~\cite{chugg2018output}. 
}

\subsection{The role of the leader}
\label{sec:role-of-leader}
Our model includes an initial leader, which is essential for our general constructions (see Sections~\ref{sec:one-dim-case} and \ref{Construction}). The class of stably computable functions is identical whether an initial leader is allowed or not~\cite{DotHajLDCRNNaCo},
    as is the class of stably computable predicates~\cite{angluin2007computational}.

    Interestingly,
    the class of obliviously-computable functions we study
    is provably larger when an initial leader is allowed. For example, consider the function $f(x)=\min(1,x)$ (see Fig.~\ref{fig:CRN-reactions-with-leader}). $f$ is stably computable with or without a leader, but only the construction with a leader is output-oblivious. Without using a leader, $f$ is not obliviously-computable 
    (see Observation~\ref{MC0 is superadditive}).
    
\begin{figure}{}
    \opt{full}{\centering}
    \begin{tabular}{cc}
            \minibox[c,t,frame]{
                $f(x) = \min(1,x)$
                \\ \hline
                $X \to Y$
                \\
                $2Y \to Y$
            }
            \minibox[c,t,frame]{
                $f(x) = \min(1,x)$
                \\ \hline
                    $L + X \to Y$
            }
    \end{tabular}
    \caption{\footnotesize $\min(1,x)$ is stably computed by a \emph{leaderless} non-output-oblivious CRN (left), and an output-oblivious CRN with a single leader $L$ (right).}
    \label{fig:CRN-reactions-with-leader}
\end{figure}

Including the leader gives additional power to the model. This gives more power to our CRN constructions,
but makes our impossibility results stronger.
Fully classifying the obliviously-computable functions 
in a leaderless model remains an open question.

\newcommand{\quiltAffineExampleOneDCaption}{ 
    A 1D (single-input) quilt-affine function $\floor{\frac{3x}{2}} = \frac{3}{2}x + B(\overline{x} \mod 2)$,
    where $B(\overline{0}) = 0$ and $B(\overline{1}) = -\frac{1}{2}$.
}
\newcommand{\quiltAffineExampleTwoDCaption}{ 
    A 2D quilt-affine function $g(\vec{x}) = (1,2)\cdot\vec{x} + B(\barx\mod 3),$
    where $B(\barx) = 0$ except when 
    $\barx \in \{\overline{(1,2)}, \overline{(2,2)}, \overline{(2,1)}\}$.
}
\newcommand{\twoRepresentativeCaption}{
    A 2D function satisfying Theorem~\ref{thm:main result}.
}
\newcommand{\scalingLimitCaption}{
    The scaling limit gives a 2D real-valued obliviously-computable function from \cite{ContinuousMC}.
}

\opt{final}{
    \begin{figure*}[t!]
    \centering
    \begin{subfigure}[t]{0.47\textwidth}
        \centering
        \includegraphics[width=0.4\textwidth]{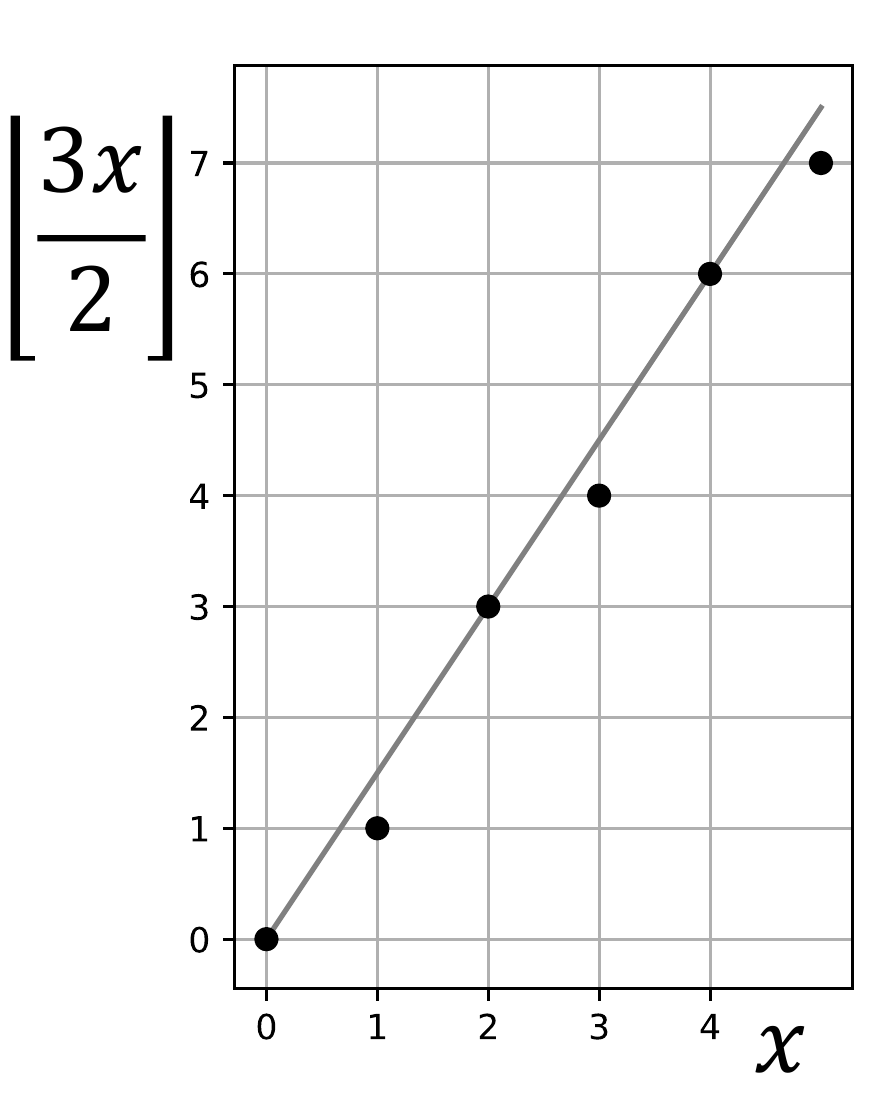}
        \caption{ \footnotesize
            \quiltAffineExampleOneDCaption
        }
        \label{fig:1D-quilt-affine-example}
    \end{subfigure}%
    \qquad
    \begin{subfigure}[t]{0.47\textwidth}
        \centering
        \includegraphics[width=0.8\textwidth]{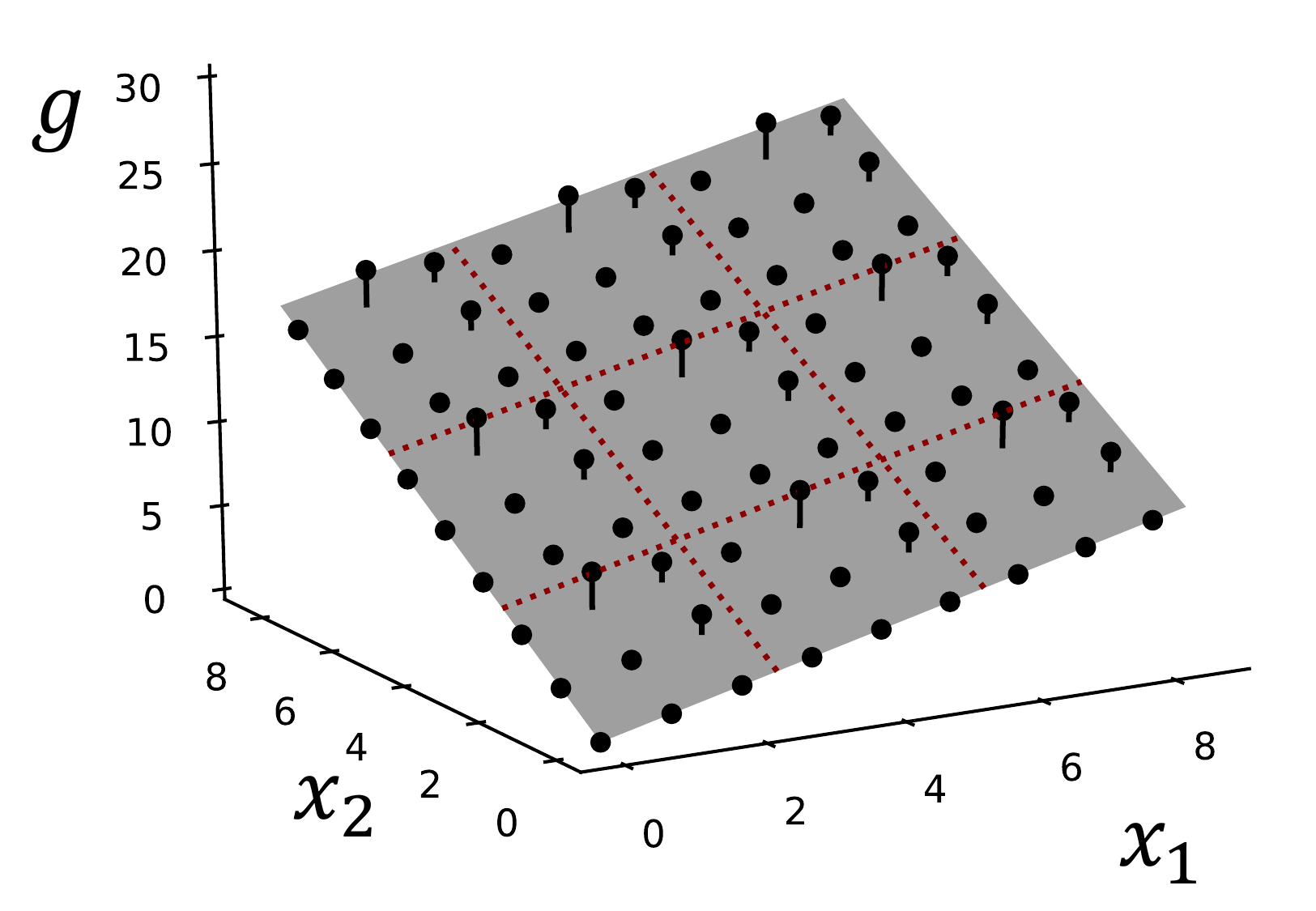}
        \caption{ \footnotesize
            \quiltAffineExampleTwoDCaption
        }
        \label{fig:2D-quilt-affine-example}
    \end{subfigure}%
    \caption{ \footnotesize
        Examples of 1D and 2D quilt-affine functions.
    }
    \end{figure*}

    \begin{figure*}[t!]
    \centering
    \begin{subfigure}[t]{0.47\textwidth}
        \centering
        \includegraphics[width=0.8\textwidth]{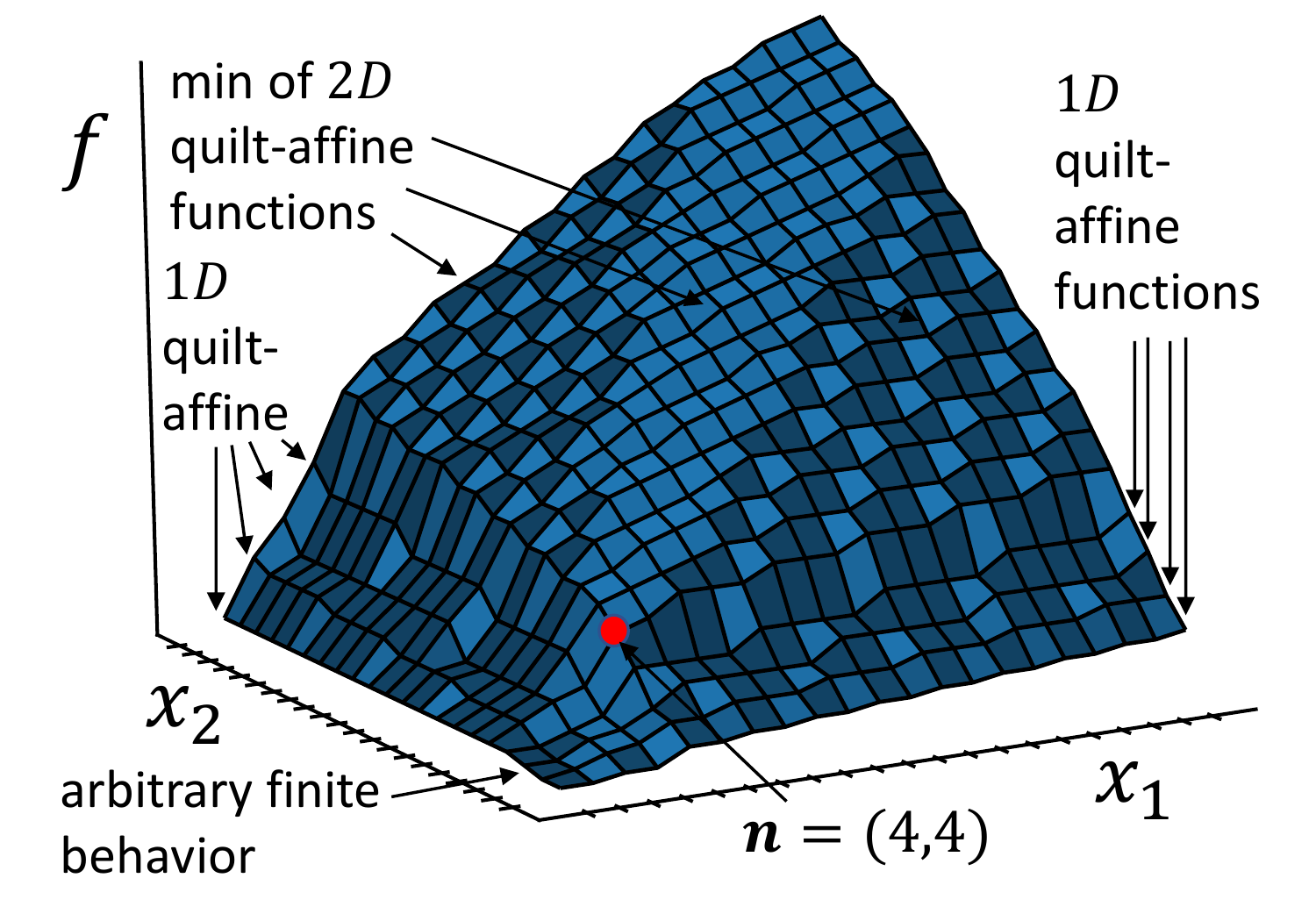}
        \caption{ \footnotesize
            \twoRepresentativeCaption
        }
        \label{fig:2D-obliviously-computable-example}
    \end{subfigure}
    \qquad
    \begin{subfigure}[t]{0.47\textwidth}
        \centering
        \includegraphics[width=0.8\textwidth]{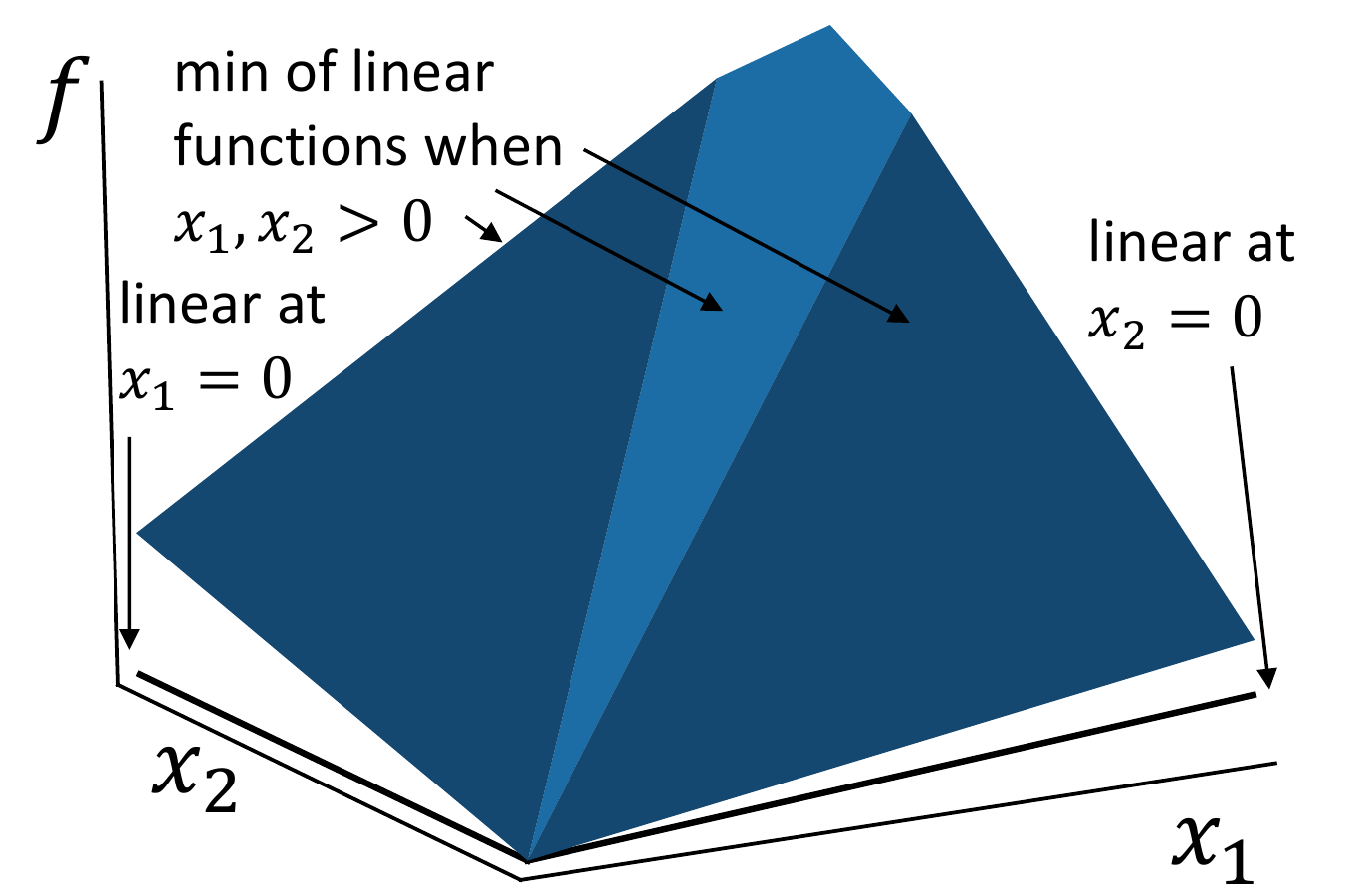}
        \caption{ \footnotesize
            \scalingLimitCaption
        }
        \label{fig:2D-scaling_limit}
    \end{subfigure}
    \caption{ \footnotesize
        Example discrete and real-valued obliviously-computable functions.
    }
    \end{figure*}
}


\opt{full,sub}{
    \begin{figure}[t!]
    \centering
    \begin{subfigure}[t]{0.47\textwidth}
        \centering
        \includegraphics[width=0.4\textwidth]{figures/quilt_affine_example_1D}
        \caption{ \footnotesize
            \quiltAffineExampleOneDCaption
        }
        \label{fig:1D-quilt-affine-example}
    \end{subfigure}%
    \qquad
    \begin{subfigure}[t]{0.47\textwidth}
        \centering
        \includegraphics[width=0.8\textwidth]{figures/quilt_affine_example_2D}
        \caption{ \footnotesize
            \quiltAffineExampleTwoDCaption
        }
        \label{fig:2D-quilt-affine-example}
    \end{subfigure}%
    \caption{ \footnotesize
        Examples of 1D and 2D quilt-affine functions.
    }
    \end{figure}

    \begin{figure}[t!]
    \centering
    \begin{subfigure}[t]{0.47\textwidth}
        \centering
        \includegraphics[width=0.8\textwidth]{figures/2d_represenative_example}
        \caption{ \footnotesize
            \twoRepresentativeCaption
        }
        \label{fig:2D-obliviously-computable-example}
    \end{subfigure}
    \qquad
    \begin{subfigure}[t]{0.47\textwidth}
        \centering
        \includegraphics[width=0.8\textwidth]{figures/2d_scaling_limit}
        \caption{ \footnotesize
            \scalingLimitCaption
        }
        \label{fig:2D-scaling_limit}
    \end{subfigure}
    \caption{ \footnotesize
        Example discrete and real-valued obliviously-computable functions.
    }
    \end{figure}
}

\subsection{Contribution}
Our main result, Theorem~\ref{thm:main result}, 
provides a complete characterization of the class of obliviously-computable functions. 
It builds off a key definition: a $\emph{quilt-affine}$ function is a nondecreasing function that is the sum of a rational linear function and periodic function (formalized as Definition \ref{quilt-affine def}). 
For example, functions such as $\floor{\frac{3x}{2}}$ are quilt-affine (see Fig.~\ref{fig:1D-quilt-affine-example}).
Such floored division functions are natural to the discrete CRN model 
($\floor{\frac{3x}{2}}$ is stably computed by reactions $X\to 3Z$, $2Z\to Y$). Fig.~\ref{fig:2D-quilt-affine-example} shows a higher-dimensional quilt-affine function, 
with a ``bumpy quilt'' structure that motivates the name. 
Quilt-affine functions are also characterized by nonnegative periodic finite differences, a structure key to showing they are obliviously-computable (see Lemma \ref{CRC for quilt-affine function}).

Theorem~\ref{thm:main result} states that a function $f:\N^d\to\N$ is obliviously-computable if and only if
\begin{enumerate}[i)]
\begin{samepage}
    \item\label{defn:intro:1}
    [nondecreasing]
    $f$ is nondecreasing,
    
    \item\label{defn:intro:2}
    [eventually-min]
    for sufficiently large inputs, $f$ is the minimum of a finite number of quilt-affine functions,
    and
    
    \item\label{defn:intro:3}
    [recursive]
    every restriction $f_r:\N^{d-1}\to\N$ obtained by fixing some inputs to a constant value\footnote{
        Note that Theorem~\ref{thm:main result} defines fixed-input restrictions slightly differently;
        see Section~\ref{sec:full-dim-case} for an explanation.
    }
    is obliviously-computable 
    (i.e., eventually the minimum of quilt-affine functions).
\end{samepage}
\end{enumerate}


Condition~\eqref{defn:intro:2} characterizes $f$ when all inputs are sufficiently large 
(greater than some $\vec{n} \in \N^d$),
whereas condition~\eqref{defn:intro:3} characterizes $f$ when some inputs are fixed to smaller values.
See Fig.~\ref{fig:2D-obliviously-computable-example} for a representative example of an obliviously-computable $f:\N^2\to\N$. This pictured function has arbitrary nondecreasing values in the ``finite region'' where $\vec{x}<(4,4)$, has eventual 1D quilt-affine behavior along the lines $x_1=0,1,2,3$ and $x_2=0,1,2,3$, and is the minimum of 3 different quilt-affine functions in the ``eventual region'' where $\vec{x}\geq(4,4)$. 
This behavior generalizes naturally to higher dimensions.

The most technically sophisticated part of our result is the proof that the eventually-min condition~\eqref{defn:intro:2} of Theorem~\ref{thm:main result} is necessary;
the main ideas of this proof are outlined in Section~\ref{sec:forward-direction-proof-outline}.

\subsection{Related work}
Chalk, Kornerup, Reeves, and Soloveichik~\cite{ContinuousMC}
showed an analogous result in the \emph{continuous} model of CRNs in which species amounts are given by nonnegative real concentrations.
A consequence of their characterization is that any obliviously-computable real-valued function
is a minimum of linear functions when all inputs are positive. 
%
In Theorem~\ref{thm-continuous-comparison},
we demonstrate that the limit of ``scalings'' of a function $f:\N^d \to \N$ satisfying our main Theorem~\ref{thm:main result} is in fact a function $\hat{f}:\R_{\geq 0}^d \to \R_{\geq 0}$ satisfying the main theorem of~\cite{ContinuousMC} (see Fig.~\ref{fig:2D-scaling_limit}).
The discrete details lost in the scaling limit constitute precisely the unique challenges of proving Theorem~\ref{thm:main result}
that are not handled by~\cite{ContinuousMC}.
In particular, our function class can contain arbitrary finite behavior and repeated finite irregularities.


Returning to the \emph{discrete} (a.k.a., \emph{stochastic})
CRN model we study,
Chugg, Condon, and Hashemi~\cite{chugg2018output}
independently
investigated the special case of \emph{two-input} functions $f:\N^2 \to \N$ computable by output-oblivious CRNs,
obtaining a characterization equivalent to ours when restricted to 2D.
Their characterization is phrased much differently,
with specially constructed ``fissure functions'' to describe the function behavior across what we describe as \emph{under-determined regions} (intuitively, thin ``1D'' regions bounded by parallel lines, where $f$ cannot be described by a unique quilt-affine function, see Section~\ref{Forward Direction}).
%
The ideas required to prove the 2D case are sophisticated and far from simple,
yet unfortunately, these ideas do not extend straightforwardly to higher dimensions.
The planarity of the 2D input space constrains the regions induced by separating hyperplanes
(i.e., lines)
in a strong way.
Furthermore,
the fact that there is only one nontrivial integer dimension smaller than 2 implies that the under-determined regions are simpler to reason about than in the case where they can have arbitrary dimension between 1 and $d$.
Finally, even restricted to 2D,
a notable aspect of our characterization is expressing $f$ a minimum of quilt-affine functions, which are simple intrinsic building blocks that generalize immediately.

\subsection{Other ways of composing computation}

In Section~\ref{sec:composition-iff-oblivious} we show that a for a CRN $\mathcal{C}$ be composable with downstream CRN $\mathcal{D}$ by ``concatenation'' 
(renaming $\mathcal{C}$'s output species to match $\mathcal{D}$'s input species and ensuring all other species names are disjoint between $\mathcal{C}$ and $\mathcal{D}$),
it is (in a sense) necessary and sufficient for $\mathcal{C}$ to be output-oblivious.
There are other ways to compose computations, however.

A common technique (e.g.~\cite{gasieniec2018fast}) is for $\mathcal{C}$ to detect when its output has changed and send a \emph{restart} signal to $\mathcal{D}$.
However, it is not obvious how to do this with function computation as defined in this paper, where $\mathcal{D}$ changes $\mathcal{C}$'s output by consuming it.

Another technique (e.g.~\cite{angluin2006fast}) is to set a \emph{termination signal},
which is a sub-CRN that, with high probability, creates a copy of a signal species $T$, but not before $\mathcal{C}$ has converged.
$T$ then ``activates'' the reactions of $\mathcal{D}$, so that $\mathcal{D}$ will not consume the output of $\mathcal{C}$ until it is safe to do so.
However, this has some positive failure probability.
In fact, if we require $T$ to be guaranteed with probability 1 to be produced only after the CRN has converged, 
only constant functions can be stably computed.
Worse yet, in the leaderless case, it is provably impossible to achieve this guarantee even with positive probability~\cite{doty2019efficient}.

    
    
\section{Preliminaries}
\label{sec:prelim}

\subsection{Notation}
$\N$ denotes the set of nonnegative integers. 
For a set $\mathcal{S}$ (of species), 
we write $\N^\mathcal{S}$ to denote the set of vectors indexed by the elements of $\mathcal{S}$
(equivalently, functions $f:\mathcal{S}\to\N$). 
Vectors appear in boldface, and we reserve uppercase $\vec{A}\in\N^{\mathcal{S}}$ for such vectors indexed by species, 
and lowercase $\vec{a}\in\N^d,\Z^d,\Q^d,\R^d$ for vectors indexed by integers.
$\vec{A}(S)$ or $\vec{a}(i)$ denotes the element indexed by $S\in\mathcal{S}$ or $i\in\{1,\ldots,d\}$. We write $\vec{a}\leq\vec{b}$ to denote pointwise vector inequality $\vec{a}(i)\leq\vec{b}(i)$ for all $i$.

For $p \in \N_+$,
$\Z/p\Z$ denotes the additive group of integers modulo $p$, whose elements are congruence classes.
Generalizing to higher dimensions, $\Z^d/p\Z^d$ denotes the additive group of $\Z^d$ modulo $p$, whose elements are congruence classes. 
For $\vec{x}\in\N^d$ where $d\geq 1$, we write $\barx\mod p$ to denote the congruence class $\{\vec{x}+p\vec{z}:\vec{z}\in\Z^d\}\in\Z^d/p\Z^d$,
also denoted $\barx$ when $p$ is clear from context.

\opt{full}{
    $\R^d_{\geq{0}}$ denotes the nonnegative orthant in $\R^d$. We consider \emph{regions} $R=\{\vec{x}\in\R^d_{\geq{0}}:T\vec{x}-\vec{h}\geq 0\}$ which are convex polyhedra given by a set of inequalities. $R\cap\N^d$ denotes all integer points in $R$, and for $\vec{x}\in\N^d$, $R\cap(\barx\mod p)$ denotes the integer points in $R$ in the same congruence class as $\vec{x}$.
}

\subsection{Chemical reaction networks}
We use the established definitions of stable function computation by (discrete) chemical reaction networks \cite{StablyComputableAreSemilinear,chugg2018output}:

A \emph{chemical reaction network} (CRN) $\mathcal{C}=(\mathcal{S},\mathcal{R})$ is defined by a finite set $\mathcal{S}$ of species and a finite set $\mathcal{R}$ of reactions, where a reaction $(\vec{R},\vec{P})\in\N^\mathcal{S}\times\N^\mathcal{S}$ describes the counts of consumed \emph{reactant} species and produced \emph{product} species.\footnote{
    We do not limit ourselves to \emph{bimolecular} (two input) reactions, 
    but the higher-order reactions we use can easily be converted to have this form. 
    For example, $3X\to Y$ is equivalent to two reactions $2X\leftrightarrow X_2$ and $X+X_2\to Y$.
}
For example, given $\mathcal{S}=\{A,B,C\}$, the reaction $((1,0,2),(0,2,1))$ would represent $A+2C\to 2B+C$.

A configuration $\vec{C}\in\N^\mathcal{S}$ specifies the integer counts of all species. 
Reaction $(\vec{R},\vec{P})$ is \emph{applicable} to $\vec{C}$ if $\vec{R}\leq\vec{C}$, and yields $\vec{C'}=\vec{C}-\vec{R}+\vec{P}$, 
so we write $\vec{C}\to\vec{C'}$. A configuration $\vec{D}$ is $\emph{reachable}$ from $\vec{C}$ if there exists a finite sequence of configurations such that $\vec{C}\to\vec{C}_1\to\ldots\to\vec{C}_n\to\vec{D}$; we write $\vec{C}\to^*\vec{D}$ to denote that $\vec{D}$ is reachable from $\vec{C}$. Note this reachability relation is additive: if $\vec{A}\to^*\vec{B}$, then $\vec{A}+\vec{C}\to^*\vec{B}+\vec{C}$. This property is key in future proofs to show the reachability of configurations which overproduce output.

To compute a function\footnote{We consider codomain $\N$ without loss of generality, since $f:\N^d\to\N^l$ is stably computable if and only if each output component is stably computable by parallel CRNs.} $f:\N^d\to\N$, the CRN $\mathcal{C}$ will include an ordered subset $\{X_1,\ldots,X_d\}\subset\mathcal{S}$ of \emph{input species}, an \emph{output species} $Y$, and a \emph{leader species} $L\in\mathcal{S}$. (Note that we consider removing the leader in Section~\ref{sec:leaderless}).

The computation of $f(\vec{x})$ will start from an \emph{initial configuration} $\vec{I_x}$ encoding the input with $\vec{I_x}(X_i)=\vec{x}(i)$ for all $i=1,\ldots,d$, along with a single leader $\vec{I_x}(L)=1$, and count $0$ of all other species. A \emph{stable} configuration $\vec{C}$ has unchanged output $\vec{C}(Y)=\vec{D}(Y)$ for any configuration $\vec{D}$ reachable from $\vec{C}$. The CRN $\mathcal{C}$ \emph{stably computes} $f:\N^d\to\N$ if for each initial configuration $\vec{I_x}$ encoding any $\vec{x}\in\N^d$, and configuration $\vec{C}$ reachable from $\vec{I_x}$, there is a stable configuration $\vec{O}$ reachable from $\vec{C}$ with correct output $\vec{O}(Y)=f(\vec{x})$.

\subsection{Composition via output-oblivious CRNs}
\label{sec:composition-iff-oblivious}


This section formally defines our notions of 
``composable computation with CRNs via concatenation of reactions'' 
and 
``output-oblivious'' CRNs that don't consume their output,
showing these notions to be essentially equivalent.

A CRN is \emph{output-oblivious} if the output species $Y$ is never a reactant\footnote{
    A more general definition in \cite{chugg2018output} of \emph{output-monotonic} CRNs just requires no reaction to reduce the count of output species. This can be directly seen to classify the same set of functions, see \opt{full,sub}{Observation~\ref{obs:output-monotonic-is-equivalent}.} \opt{final}{\cite{severson2019composable}.}
}:
for any reaction $(\vec{R},\vec{P})$, $\vec{R}(Y)=0$. 
A function $f:\N^d\to\N$ is \emph{obliviously-computable} if $f$ is stably computed by an output-oblivious CRN.

\opt{sub}{(
    See Appendix \ref{sec:appendix-composability}.)
}

We begin with an easy observation:

\begin{observation}
\label{MC is nondecreasing}
    An obliviously-computable function $f:\N^d\to\N$ must be nondecreasing.
\end{observation}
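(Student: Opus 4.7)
The plan is to argue by contradiction, exploiting additivity of reachability together with the fact that output-obliviousness makes the count of $Y$ monotonically non-decreasing along any reaction trajectory. Suppose toward a contradiction that there exist inputs $\vec{x}, \vec{x}' \in \N^d$ with $\vec{x} \leq \vec{x}'$ yet $f(\vec{x}) > f(\vec{x}')$. Let $\vec{\Delta} = \vec{I}_{\vec{x}'} - \vec{I}_{\vec{x}}$; since both initial configurations place a single leader and zero of every non-input species, $\vec{\Delta}$ is nonnegative (it just records the extra input molecules $X_i$).

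Next I would invoke stable computation at $\vec{x}$: starting from $\vec{I}_{\vec{x}}$ itself (a configuration trivially reachable from $\vec{I}_{\vec{x}}$), there is a stable $\vec{O}$ with $\vec{O}(Y) = f(\vec{x})$ and $\vec{I}_{\vec{x}} \to^* \vec{O}$. By additivity of reachability, $\vec{I}_{\vec{x}'} = \vec{I}_{\vec{x}} + \vec{\Delta} \to^* \vec{O} + \vec{\Delta}$. Since $\vec{\Delta}$ has no $Y$ component, $(\vec{O}+\vec{\Delta})(Y) = f(\vec{x})$.

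Now I would apply stable computation at $\vec{x}'$ to the configuration $\vec{O}+\vec{\Delta}$, which is reachable from $\vec{I}_{\vec{x}'}$: some stable $\vec{O}'$ with $\vec{O}'(Y) = f(\vec{x}')$ must be reachable from $\vec{O}+\vec{\Delta}$. But output-obliviousness means every reaction has $\vec{R}(Y)=0$, so $Y$-count is non-decreasing along any path; hence every configuration reachable from $\vec{O}+\vec{\Delta}$ has $Y$-count at least $f(\vec{x})$. In particular $f(\vec{x}') = \vec{O}'(Y) \geq f(\vec{x}) > f(\vec{x}')$, a contradiction.

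I do not expect any real obstacle: the whole argument hinges on the two structural properties stated just above the observation (additivity of reachability, and $Y$ appearing only as a product), combined with the universal quantifier in the definition of stable computation that forces correctness even from ``inconvenient'' intermediate configurations. The one place to write carefully is the application of additivity to ensure $\vec{\Delta}$ is genuinely nonnegative and involves no leader or output molecules, which is immediate from the definitions of $\vec{I}_{\vec{x}}$ and $\vec{I}_{\vec{x}'}$.
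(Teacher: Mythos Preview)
Your proof is correct and follows essentially the same approach as the paper: both argue by contradiction, use additivity of reachability to transport the reaction sequence that produces $f(\vec{x})$ copies of $Y$ to the larger initial configuration, and then observe that output-obliviousness prevents the resulting overproduction from ever being corrected. Your version is slightly more explicit in invoking the definition of stable computation at $\vec{x}'$ and the monotonicity of the $Y$-count along trajectories, but the underlying idea is identical.
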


\begin{proof}
    Assume a CRN $\mathcal{C}$ (with output species $Y$) stably computes $f$, but $f(\vec{a})>f(\vec{b})$ for $\vec{a}\leq\vec{b}$. To stably compute $f(\vec{a})$, input configuration $\vec{I_a}\to^*\vec{O}$ for some configuration with $\vec{O}(Y)=f(\vec{a})$. However, since $\vec{a}\leq\vec{b}$, that same sequence of reactions can be applied from the input configuration $\vec{I_b}\geq\vec{I_a}$. This overproduces $Y$ since $f(\vec{a})>f(\vec{b})$. Thus to stably compute $f(\vec{b})$, some reaction must consume $Y$ as a reactant, so $\mathcal{C}$ cannot be output-oblivious.
\end{proof}


A CRN being output-oblivious
was shown in \cite{ContinuousMC} (for \emph{continuous} CRNs) to be equivalent to being ``composable via concatenation'',
meaning renaming the output species of one CRN to match the input of another.
This equivalence still holds in our discrete CRN model.
\opt{full,final}{
    This is formalized as Observation~\ref{lem-output-oblivious-implies-composable} and Lemma~\ref{lem-composable-implies-output-oblivious}.
}

\opt{final,full}{

\opt{sub}{Recall that
output-obliviousness
was shown in \cite{ContinuousMC} (for continuous CRNs) to be equivalent to ``composability via concatenation''.
We now 
show this result for the discrete CRN model.}
For CRNs $\mathcal{C}_f$ stably computing $f:\N^d\to\N$ and $\mathcal{C}_g$ stably computing $g:\N\to\N$, define the $\emph{concatenated}$ CRN $\mathcal{C}_{g\circ f}$ 
by combining species and reactions, with $\mathcal{C}_f$'s output species as $\mathcal{C}_g$'s input species and no other common species,
plus a reaction $L \to L^f + L^g$ creating a copy of the leader from each of $\mathcal{C}_f$ and $\mathcal{C}_g$.

We first observe that this composition works correctly if the \emph{upstream} CRN $\mathcal{C}_f$ is output-oblivious. Intuitively, the reactions from $\mathcal{C}_g$ can only affect the reactions from $\mathcal{C}_f$ via the common species $W$, but this output species of $\mathcal{C}_f$ is never used as a reactant to stably compute $f(\vec{x})$.


\begin{observation}
\label{lem-output-oblivious-implies-composable}
If $\mathcal{C}_f$ stably computes $f:\N^d\to\N$, $\mathcal{C}_g$ stably computes $g:\N\to\N$, and $\mathcal{C}_f$ is output-oblivious, then the concatenated CRN $\mathcal{C}_{g\circ f}$ stably computes the composition $g\circ f:\N^d\to\N$.
\end{observation}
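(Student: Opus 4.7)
The plan is to show that from any configuration $\vec{C}$ reachable from $\vec{I_x}$ in $\mathcal{C}_{g\circ f}$, we can continue to a stable configuration $\vec{O}$ with $\vec{O}(Y)=g(f(\vec{x}))$. The key observation is that output-obliviousness decouples the two subsystems: the shared species $W$ (the output of $\mathcal{C}_f$, the input of $\mathcal{C}_g$) is only produced, never consumed, by $\mathcal{C}_f$, while $\mathcal{C}_g$-reactions touch only $\mathcal{C}_g$-species. Hence $\mathcal{C}_f$- and $\mathcal{C}_g$-reactions commute, and nothing $\mathcal{C}_g$ does can block a $\mathcal{C}_f$-reaction.

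First I would track the ``virtual'' $\mathcal{C}_f$-computation so far. Let $w^{\mathrm{prod}}$ be the total $W$ produced by $\mathcal{C}_f$-reactions in reaching $\vec{C}$, and $w^{\mathrm{cons}} = w^{\mathrm{prod}} - \vec{C}(W) \geq 0$ the amount subsequently consumed by $\mathcal{C}_g$. The projection of $\vec{C}$ onto $\mathcal{C}_f$'s species, but with $W$ restored to $w^{\mathrm{prod}}$, is reachable in the standalone CRN $\mathcal{C}_f$ from its initial configuration $\vec{I}^f_{\vec{x}}$, since the $\mathcal{C}_f$-reactions applied en route to $\vec{C}$ depend only on $\mathcal{C}_f$-species. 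Because $\mathcal{C}_f$ stably computes $f$, there is a further purely $\mathcal{C}_f$-sequence reaching a stable $\mathcal{C}_f$-configuration with $W = f(\vec{x})$; applying this same sequence from $\vec{C}$ in $\mathcal{C}_{g\circ f}$ is valid (no $\mathcal{C}_g$-reaction has altered any $\mathcal{C}_f$-reactant), yielding a configuration $\vec{C}_1$ whose $\mathcal{C}_f$-subsystem is stable and with $\vec{C}_1(W) = f(\vec{x}) - w^{\mathrm{cons}}$.

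Symmetrically, the $\mathcal{C}_g$-projection of $\vec{C}_1$ is reachable in standalone $\mathcal{C}_g$ from $\vec{I}^g_{f(\vec{x})}$: one starts with $f(\vec{x})$ copies of $W$ and replays the $\mathcal{C}_g$-reactions used in reaching $\vec{C}$, each of which remains applicable since strictly more $W$ is available at every intermediate step. Since $\mathcal{C}_g$ stably computes $g$, a further sequence of $\mathcal{C}_g$-reactions reaches a stable $\mathcal{C}_g$-configuration with $Y = g(f(\vec{x}))$; applying it from $\vec{C}_1$ produces the desired $\vec{O}$. Stability of $\vec{O}$ in $\mathcal{C}_{g\circ f}$ then follows because every applicable reaction is either from $\mathcal{C}_f$ (leaving $W$ unchanged, by $\mathcal{C}_f$-stability plus output-obliviousness, so $Y$ is untouched) or from $\mathcal{C}_g$ (leaving $Y$ fixed by $\mathcal{C}_g$-stability); this extends inductively to all descendants of $\vec{O}$.

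The main technical obstacle is the bookkeeping: defining these virtual subsystem projections precisely, and verifying that reaction sequences valid in one subsystem alone remain applicable when executed inside the concatenated CRN. This is essentially a commutativity/separation argument that crucially relies on $\mathcal{C}_f$ never consuming $W$; if $\mathcal{C}_f$ could consume $W$, a $\mathcal{C}_g$-reaction that removed $W$ first could invalidate a later $\mathcal{C}_f$-reaction, breaking the decoupling.
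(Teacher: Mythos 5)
The paper states this observation without a formal proof, giving only the one-sentence intuition that $\mathcal{C}_g$'s reactions can affect $\mathcal{C}_f$ only via the shared species $W$, which $\mathcal{C}_f$ never consumes. Your argument is a correct and careful formalization of exactly that intuition: you separate the reaction sequence into its $\mathcal{C}_f$ and $\mathcal{C}_g$ parts, show the $\mathcal{C}_f$-projection with $W$ ``virtually restored'' to its produced total is reachable in standalone $\mathcal{C}_f$, drive it to a stable $\mathcal{C}_f$-configuration with $W = f(\vec{x})$ (valid because output-obliviousness makes $\mathcal{C}_f$-reactions insensitive to $W$'s absence), and then replay and extend the $\mathcal{C}_g}$-reactions with at least as much $W$ available at every step (since $W$ is nondecreasing under $\mathcal{C}_f$ and bounded above by $f(\vec{x})$ for the same reason). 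The stability argument for $\vec{O}$ correctly splits descendant reactions by subsystem and uses output-obliviousness to ensure stable $\mathcal{C}_f$-reactions cannot add $W$.

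One small bookkeeping gap: you never handle the initial leader-splitting reaction $L \to L^f + L^g$ that the paper adds to $\mathcal{C}_{g\circ f}$. Before it fires, neither subsystem has its leader, so the ``$\mathcal{C}_f$-projection of $\vec{C}$'' is not literally a $\mathcal{C}_f$-reachable configuration starting from $\vec{I}^f_{\vec{x}}$. The fix is trivial --- fire $L \to L^f + L^g$ first if it has not already been fired, and treat $L$ as belonging to neither subsystem --- but it should be stated explicitly. With that patch, the proof is complete and fills in an argument the paper leaves as ``easily verified.''
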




Note that the downstream CRN $\mathcal{C}_g$ need not be output-oblivious, but if two output-oblivious CRNs are composed, then the composition $\mathcal{C}_{g\circ f}$ remains output-oblivious. More generally, $g$ can take any number of inputs from output-oblivious CRNs, which act as modules for arbitrary feedforward composition.

The converse shows that a composable CRN is essentially output-oblivious. If $\mathcal{C}_f$ can be correctly composed with any downstream $\mathcal{C}_g$, then $\mathcal{C}_f$ must function correctly even if downstream reactions from $\mathcal{C}_g$ starve it of the common species $W$. Thus $\mathcal{C}_f$ will still stably compute $f$ if we remove all reactions with output $W$ as a reactant, making it output-oblivious.
\opt{final}{The proof appears in~\cite{severson2019composable}.}

\begin{lemma}
\label{lem-composable-implies-output-oblivious}
Let $\mathcal{C}_f$ stably compute $f:\N^d\to\N$ such that for any $\mathcal{C}_g$ stably computing $g:\N\to\N$, the concatenated CRN $\mathcal{C}_{g\circ f}$ stably computes the composition $g\circ f:\N^d\to\N$. Then $\mathcal{C}_f$ still stably computes $f$ if we remove all reactions using the output species as a reactant, making it output-oblivious.
\end{lemma}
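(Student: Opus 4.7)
The plan is to prove the contrapositive: if $\mathcal{C}_f'$ denotes the CRN obtained from $\mathcal{C}_f$ by deleting every reaction that has the output species as a reactant, and $\mathcal{C}_f'$ fails to stably compute $f$, then I will exhibit a single downstream $\mathcal{C}_g$ whose concatenation with $\mathcal{C}_f$ fails to stably compute $g\circ f$, contradicting the hypothesis. Write $W$ for $\mathcal{C}_f$'s output (and $\mathcal{C}_g$'s input after renaming) and $Y$ for $\mathcal{C}_g$'s output. The witness is the minimal identity CRN for $g(n)=n$, stably computed by the single reaction $W\to Y$ (with an unused leader). The point of this choice is that $W\to Y$ is an aggressive downstream consumer: fired eagerly, it starves every $W$-consuming reaction of $\mathcal{C}_f$ of its reactant, so the concatenation's output faithfully tracks the cumulative $W$-production of $\mathcal{C}_f'$ alone. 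Fix $\vec{x}$ and a witness $\vec{C}$ reachable from $\vec{I_x}$ in $\mathcal{C}_f'$ from which no stable correct configuration is reachable in $\mathcal{C}_f'$, and split on $\vec{C}(W)$ versus $f(\vec{x})$.

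If $\vec{C}(W)>f(\vec{x})$ (the \emph{overshoot} case), simulate $\vec{I_x}\to^*\vec{C}$ in the concatenation using only $\mathcal{C}_f'$-reactions and then fire $W\to Y$ enough times to reach $Y>f(\vec{x})$. Since no reaction in the concatenation consumes $Y$---the $\mathcal{C}_f$-reactions never mention $Y$, and $\mathcal{C}_g$ is output-oblivious---no correct stable configuration is reachable from there. If instead $\vec{C}(W)\le f(\vec{x})$ (the \emph{undershoot} case), I may assume no reachable-in-$\mathcal{C}_f'$ configuration overshoots $W$ (else invoke the overshoot case). Let $M$ be the maximum $W$-count reachable from $\vec{C}$ in $\mathcal{C}_f'$ and pick a maximizer $\vec{E}^*$. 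Because $\mathcal{C}_f'$ cannot consume $W$, every $\mathcal{C}_f'$-descendant of $\vec{E}^*$ has $W$-count equal to $M$, so $\vec{E}^*$ is $\mathcal{C}_f'$-stable; consequently $M=f(\vec{x})$ would make $\vec{E}^*$ stable and correct, contradicting the choice of $\vec{C}$, so $M<f(\vec{x})$. In the concatenation, reach $\vec{E}^*$ and then fire $W\to Y$ exactly $M$ times to reach $\vec{F}$ with $W=0$ and $Y=M$; it remains to show that $Y$ is permanently stuck at $M<f(\vec{x})$.

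The main technical obstacle is this last step, where additivity of reachability does the work. Suppose toward contradiction that $\vec{F}\to^*\vec{G}$ in $\mathcal{C}_f'$ with $\vec{G}(W)>0$. By additivity, adding $M$ extra copies of $W$ to both configurations preserves the $\mathcal{C}_f'$-trajectory; the modified initial configuration differs from $\vec{E}^*$ only by having $M$ extra copies of $Y$ tacked on. Since $Y$ occurs in no reaction of $\mathcal{C}_f$---it is fresh from $\mathcal{C}_g$---every reaction along the sequence preserves $Y$-counts, so the same sequence is applicable from $\vec{E}^*$ itself and carries it to a configuration with $W$-count $\vec{G}(W)+M>M$, contradicting the maximality of $M$. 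Therefore every $\mathcal{C}_f'$-descendant of $\vec{F}$ has $W=0$; this simultaneously starves $\mathcal{C}_g$'s only reaction $W\to Y$ and every $W$-consuming reaction of $\mathcal{C}_f$ (each requires $W\ge 1$), so no further $Y$ is ever produced and $Y$ remains at $M<f(\vec{x})$ in every reachable configuration from $\vec{F}$. In both cases the concatenation fails to stably compute $g\circ f=f$, contradicting the hypothesis, so $\mathcal{C}_f'$ stably computes $f$.
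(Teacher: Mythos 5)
Your proof is correct, and it takes a genuinely different route from the paper's. Both proofs instantiate the downstream CRN as the identity $g(n)=n$ via $W\to Y$, and both exploit the intuition that firing $W\to Y$ eagerly starves the removed $W$-consuming reactions of $\mathcal{C}_f$. But the paper argues directly: starting from an arbitrary $\vec{C}$ reachable in $\mathcal{C}_f'$, it embeds $\vec{C}$ in the concatenation, drains $W$ to zero, takes the stabilizing sequence $\alpha$ promised by the hypothesis, and iteratively splices extra $W\to Y$ firings in front of any $W$-consuming step so that those steps are never actually enabled; the surviving subsequence is the desired $\mathcal{C}_f'$-path. You instead argue by contrapositive: from a configuration $\vec{C}$ in $\mathcal{C}_f'$ with no correct stabilizing path, you exhibit a dead-end trajectory in the concatenation, splitting into an overshoot case (where $W>f(\vec{x})$ already forces overproduction of $Y$) and an undershoot case (where you take a $W$-maximizer $\vec{E}^*$, drain $W$, and use additivity of reachability plus maximality of $M$ to show $W$ can never reappear). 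What the paper's version buys is a constructive extraction of the good path; what yours buys is avoiding the somewhat delicate termination argument for the splicing iteration, replacing it with a clean maximality-plus-additivity contradiction. One small omission in your writeup: you should note that in the concatenation the first step is the leader-splitting reaction $L\to L^f+L^g$, after which the $\mathcal{C}_f'$-reactions become available (with $L^g$ inert since the identity CRN's leader is unused); and the final inductive step, that in the concatenation no reaction other than a $\mathcal{C}_f'$-reaction can ever fire from $\vec{F}$ because $W$ stays at $0$, deserves a sentence of its own since it is where the two sub-CRNs' reachability sets are tied together. These are presentation points, not gaps.
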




\opt{full}{

\begin{proof}
    Let $C_f$ (with output species $W$) stably compute $f:\N^d\to\N$.
    Let $g(x)=x$ be the identity function, stably computed by $W\to Y$. Assume the concatenated CRN $\mathcal{C}_{g\circ f}$ stably computes $g\circ f = f$. Let $\mathcal{C}'_f$ be the output-oblivious CRN with all reactions using $W$ as a reactant removed from $\mathcal{C}_f$. We now show that $\mathcal{C}'_f$ also stably computes $f$.
    
    For any $\vec{x}\in\N^d$, let $\vec{I_x}$ be the initial configuration encoding $\vec{x}$ in $\mathcal{C}'_f$, and $\vec{C}$ be a configuration reachable from $\vec{I_x}$. Now we can naturally view $\vec{I_x}$ and $\vec{C}$ also as configurations in the concatenated CRN $\mathcal{C}_{g\circ f}$. We first consider configuration $\vec{D}$ reachable from $\vec{C}$ by applying reaction $W\to Y$ $\vec{C}(W)$ until $\vec{D}(W)=0$. Now because $\mathcal{C}_{g\circ f}$ stably computes $f$, there exists a sequence of reactions $\alpha$ from $\vec{D}$ to a stable configuration $\vec{O}$ with $\vec{O}(Y)=f(\vec{x})$. $\alpha$ could contain reactions that use $W$ as a reactant, but because $\vec{D}(W)=0$, any such reactions must occur after additional $W$ has been produced.
    
    Before any such reactions using $W$ as a reactant, we can insert more reactions $W\to Y$, reaching an intermediate configuration $\vec{D}_2$ again with $\vec{D}_2(W)=0$. There must exist some new sequence of reactions $\alpha_2$ from $\vec{D}_2$ to a stable configuration $\vec{O}_2$ with $\vec{O}_2(Y)=f(\vec{x})$. Repeating this process, we will eventually find a sequence of reactions $\beta$ from $\vec{C}$ to a correct stable configuration $\vec{O}_n$. Notice that we only have to ``splice in'' new reactions when $W$ is produced, and this can only happen at most $f(\vec{x})$ times, so this process will terminate.
    
    Thus we have demonstrated a sequence of reactions $\beta$ in $\mathcal{C}_{g\circ f}$ from $\vec{C}$ reaching a stable correct output configuration $\vec{O}_n$ without using any reactions using $W$ as a reactant. $\vec{O}_n(Y)=f(\vec{x})$, so $\beta$ contains precisely $f(\vec{x})$ copies of the reaction $W\to Y$. Ignoring these reactions then gives a sequence of reactions $\beta'$ in $\mathcal{C}'_f$ from $\vec{C}$ to a correct configuration $\vec{O}'$ with $\vec{O}'(W)=f(\vec{x})$. Notice that $\vec{O}_n$ being stable in $\mathcal{C}_{g\circ f}$ implies $\vec{O}'$ is stable in $\mathcal{C}'_f$ since no additional $W$ can be produced. This shows $\mathcal{C}'_f$ stably computes $f$ as desired.
\end{proof}

}


\opt{full}{

We finally observe that the more general definition of \emph{output-monotonic} CRNs (which cannot decrease the count of the output species) stably compute precisely the same set of functions as output-oblivious CRNs:

\begin{observation}
\label{obs:output-monotonic-is-equivalent}
$f:\N^d\to\N$ is stably computable by an output-oblivious CRN $\iff$ $f$ is stably computable by an output-monotonic CRN.
\end{observation}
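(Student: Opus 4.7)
The forward direction is immediate: if $Y$ is never a reactant, then no reaction can decrease $\vec{C}(Y)$, so every output-oblivious CRN is in particular output-monotonic.

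For the reverse direction I propose a ``tracking species'' construction. Given an output-monotonic CRN $\mathcal{C}$ stably computing $f$ with output species $Y$, introduce a fresh species $Y'$ and build $\mathcal{C}''$ by replacing each reaction $r : \vec{R} \to \vec{P}$ of $\mathcal{C}$ with $r'' : \vec{R} \to \vec{P} + \Delta_r \cdot Y'$, where $\Delta_r := \vec{P}(Y) - \vec{R}(Y)$ is nonnegative by output-monotonicity. Declare $Y'$ the output of $\mathcal{C}''$; the leader, input species, and initial configuration are otherwise unchanged (with $\vec{I_x}(Y') = 0$). By construction $Y'$ appears only among products, so $\mathcal{C}''$ is output-oblivious.

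Correctness of $\mathcal{C}''$ rests on a single invariant: $\vec{C}(Y) = \vec{C}(Y')$ for every $\vec{C}$ reachable from $\vec{I_x}$ in $\mathcal{C}''$. Both counts start at $0$, and each $r''$ changes both by exactly $\Delta_r$, so the invariant propagates by induction on trace length. Because $r$ and $r''$ have identical reactants, applicability coincides: letting $\pi$ denote projection out of the $Y'$-coordinate, reachability in $\mathcal{C}''$ from any $\vec{C}$ corresponds exactly to reachability in $\mathcal{C}$ from $\pi(\vec{C})$, with the $Y'$-component determined by the invariant.

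To finish, given any $\vec{C}$ reachable from $\vec{I_x}$ in $\mathcal{C}''$, let $\vec{D} := \pi(\vec{C})$; then $\vec{D}$ is reachable from $\vec{I_x}$ in $\mathcal{C}$, so by stable computability of $\mathcal{C}$ there is a stable $\vec{O}_0$ reachable from $\vec{D}$ with $\vec{O}_0(Y) = f(\vec{x})$. Lifting the witnessing trace gives $\vec{O}$ reachable from $\vec{C}$ in $\mathcal{C}''$ with $\pi(\vec{O}) = \vec{O}_0$, hence $\vec{O}(Y') = \vec{O}(Y) = f(\vec{x})$ by the invariant. Stability of $\vec{O}$ with respect to $Y'$ transfers from stability of $\vec{O}_0$ with respect to $Y$ via the same correspondence: any $\vec{O}'$ reachable from $\vec{O}$ projects to some $\pi(\vec{O}')$ reachable from $\vec{O}_0$, and $\vec{O}'(Y') = \pi(\vec{O}')(Y) = \vec{O}_0(Y) = \vec{O}(Y')$. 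I do not anticipate a serious obstacle; the main thing to verify carefully is that the reactant-preserving modification, together with the $Y = Y'$ invariant, yields the needed correspondence simultaneously at the levels of reachability, output count, and stability. In particular, the naive alternative of simply deleting the reactant $Y$'s from $r$ can enable reactions in configurations where $\mathcal{C}$ could not fire them, producing too much $Y$; the tracking species $Y'$ avoids this by leaving applicability untouched.
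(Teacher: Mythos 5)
Your proof is correct and takes a genuinely different route from the paper's, though the two are in a sense dual. The paper keeps $Y$ as the output species and introduces a new \emph{catalyst} species $Z$ that is produced alongside $Y$, then replaces each occurrence of $Y$ in a reactant multiset by $Z$; the correctness argument is that once the first $Y$ (hence $Z$) is produced, the catalytic reactions are ``turned on'' forever, so the rewiring does not change reachable output counts. Your proposal instead leaves $\mathcal{C}$'s reactions (and hence its reachability relation and $Y$-dynamics) entirely untouched and introduces a fresh \emph{output} species $Y'$ that passively mirrors the count of $Y$ via the increments $\Delta_r = \vec{P}(Y)-\vec{R}(Y)\ge 0$. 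This has a real payoff: because reactants are unchanged, the map $\pi$ (forgetting $Y'$) is an exact bijection between traces of $\mathcal{C}''$ from $\vec{I_x}$ and traces of $\mathcal{C}$ from $\vec{I_x}$, so reachability, correctness of the output count, and stability all transfer by the single invariant $\vec{C}(Y)=\vec{C}(Y')$ — there is no separate argument needed that the modified dynamics preserve anything. The paper's construction, by contrast, alters the reaction set and so owes a (short but nontrivial) argument that replacing $Y$ by $Z$ in reactants does not enlarge or shrink the reachable set. Both proofs implicitly rely on $Y$ being distinct from the input species and leader so that its initial count is $0$; your invariant needs this for the base case, just as the paper's ``$Z$ produced alongside $Y$'' bookkeeping does. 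Your closing remark correctly identifies why the naive deletion of $Y$ from reactants fails, which is exactly the pitfall both constructions are designed to avoid.
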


\begin{proof}
$\implies:$ Any output-oblivious CRN must be output-monotonic.

$\impliedby:$ If $f$ was stably computed by an output-monotonic CRN $\mathcal{C}$ which is not already output-oblivious, then there must be reactions of the form $Y+\ldots\to Y+\ldots$ with output species $Y$ acting as a catalyst. $\mathcal{C}$ can be made output-oblivious by replacing all such occurrences of $Y$ as a catalyst by a new catalyst species $Z$ that is always produced alongside $Y$. 
Since $\mathcal{C}$ was output-monotonic,
if a $Y$ is ever produced, it cannot be consumed.
Thus any reactions with $Y$ as a catalyst are ``turned on'' the moment the first $Y$ is produced and never turn off again.
So it does not change the reachable configurations to irreversibly produce a $Z$ alongside $Y$ and use $Z$ as the catalyst in place of $Y$.
This output-oblivious CRN thus also stably computes $f$.
\end{proof}

}

 }

\subsection{Semilinear functions}
The functions stably computable by a CRN were shown in \cite{CheDotSolDetFuncNaCo}, building from work in \cite{StablyComputableAreSemilinear}, to be precisely the \emph{semilinear} functions, which are defined based on semilinear sets\footnote{Semilinear sets have other common equivalent definitions \cite{angluin2006passivelymobile}; the above definition is convenient for our proof.} 

\begin{definition}
\label{semilinear set def}
A subset $S\subseteq\N^d$ is \emph{semilinear} if $S$ is a finite Boolean combination (union, intersection, complement) of \emph{threshold} sets of the form $\{\vec{x}\in\N^d:\vec{a}\cdot\vec{x}\geq b\}$ for $\vec{a}\in\Z^d,b\in\Z$ and \emph{mod} sets of the form $\{\vec{x}\in\N^d:\vec{a}\cdot\vec{x}\equiv b\mod c\}$ for $\vec{a}\in\Z^d,b\in\Z,c\in\N_+$.
\end{definition}

A \emph{semilinear function} can be concisely defined as having a semilinear graph, but a more useful equivalent definition comes from Lemma 4.3 of \cite{CheDotSolDetFuncNaCo}\footnote{Lemma 4.3 in \cite{CheDotSolDetFuncNaCo} has domains that are non-disjoint linear sets. We assume the domains are disjoint for convenience, making the domains semilinear sets.}:

\begin{definition}[\cite{CheDotSolDetFuncNaCo}]
\label{semilinear function def}
A function $f:\N^d\to\N$ is \emph{semilinear} if $f$ is the finite union of affine partial functions, whose domains are disjoint semilinear subsets of $\N^d$.
\end{definition}
%
%
%
All functions discussed have been semilinear. For example, the function 
\[
\min(x_1,x_2)=
\begin{cases}
x_1, & \text{if } x_1\leq x_2\\
x_2, & \text{if } x_1 > x_2
\end{cases}
\]
is semilinear with affine partial functions on disjoint domains which are defined by a single threshold and thus semilinear.

Similarly, the function 
\[
\bigg\lfloor\frac{3x}{2}\bigg\rfloor=
\begin{cases}
\frac{3}{2}x, & \text{if $x$ is even}\\
\frac{3}{2}x-\frac{1}{2}, & \text{if $x$ is odd}
\end{cases}
\]
is semilinear with affine partial functions on disjoint domains which are defined by parity (a single mod predicate) and thus semilinear. All quilt-affine $f:\N^d\to\N$ are semilinear by the same argument.

\begin{lemma}[\cite{CheDotSolDetFuncNaCo}]
\label{Semilinear Are Stably Computable Lemma}
A function $f:\N^d\to\N$ is stably computable $\iff$ $f$ is semilinear.
\end{lemma}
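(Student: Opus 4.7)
The plan is to prove the two directions separately, treating the ``semilinear $\Rightarrow$ stably computable'' construction as the clean direction and deferring the bulk of the difficulty to the converse.

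For the backward direction, I would use Definition~\ref{semilinear function def} to write $f$ as a finite disjoint union of affine partial functions $f_i(\vec{x}) = \vec{a}_i \cdot \vec{x} + b_i$ with $\vec{a}_i \in \Q^d$, $b_i \in \Q$, on semilinear domains $D_i \subseteq \N^d$. The CRN I would build factors into two stages. The first stage is a ``domain detector'': for each $i$, stably compute the indicator predicate of $D_i$, producing a single ``flag'' species $F_i$ exactly when $\vec{x} \in D_i$. This reduces to the semilinear predicate computation result of Angluin et al., realized via CRNs that Boolean-combine reactions implementing threshold predicates ($\vec{a}\cdot\vec{x} \geq b$) and mod predicates ($\vec{a}\cdot\vec{x} \equiv b \pmod c$). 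Since the domains are disjoint, exactly one $F_i$ ultimately appears. The second stage uses $F_i$ as a gate: reactions of the form $F_i + X_j \to F_i + \alpha_{ij} Y$ (scaled to clear denominators by first multiplying every input by a fixed common multiple, implemented by reactions $X_j \to M\widetilde{X}_j$) produce the linear part, and the single leader $L$ together with $F_i$ produces (or absorbs, if the constant is negative) the $b_i$ offset copies of $Y$. A subtle point is ordering: the multiplication/output reactions must not fire on $X_j$'s before the correct flag is selected, which I would handle by making copies of the inputs for the detection stage and introducing the scaled $\widetilde{X}_j$ only via the flag-gated reactions.

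For the forward direction, given a CRN $\mathcal{C}$ that stably computes $f$, I want to show the graph $G=\{(\vec{x},f(\vec{x})):\vec{x}\in\N^d\} \subseteq \N^{d+1}$ is semilinear. The idea I would pursue is to express $G$ as a finite Boolean combination of reachability/stability statements about $\mathcal{C}$ and then invoke a Presburger-definability result. Specifically, $(\vec{x},y) \in G$ iff $\vec{I}_{\vec{x}}$ reaches a configuration $\vec{O}$ with $\vec{O}(Y)=y$ that is stable, i.e., no reachable descendant of $\vec{O}$ changes the $Y$-count. The key tool is the Angluin--Aspnes--Eisenstat theorem that stably computed predicates are semilinear: for each fixed output value $k$, the set $\{\vec{x} : f(\vec{x}) \geq k\}$ is the truth set of a stably computable predicate (``does $\mathcal{C}$ stably produce at least $k$ copies of $Y$?''), hence semilinear. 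Uniformity in $k$ is the nontrivial step: I would use additivity of reachability ($\vec{A}\to^*\vec{B}$ implies $\vec{A}+\vec{C}\to^*\vec{B}+\vec{C}$) together with Dickson's lemma on the set of ``minimal'' reachable stable output configurations to establish that the dependence of $f(\vec{x})$ on $\vec{x}$ is eventually affine on each of finitely many residue classes and threshold regions, and then to package this into a single semilinear description of $G$.

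The main obstacle is precisely this uniformity step in the forward direction. Showing each level set is semilinear is relatively direct from the predicate theorem, but lifting this to a semilinear graph requires genuinely more: one must either invoke a Presburger-definability result for the reachability relation of CRNs restricted to stably computing instances, or give a direct structural argument that the CRN's output function admits a finite affine decomposition. I would expect to draw on the machinery already developed in \cite{CheDotSolDetFuncNaCo, StablyComputableAreSemilinear}, where the semilinearity of reachable sets of configurations in bounded-state chemical systems, combined with the stability requirement, is shown to force the output to be a piecewise affine function with semilinear pieces.
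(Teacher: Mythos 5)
The paper does not prove this lemma at all; it is explicitly attributed to \cite{CheDotSolDetFuncNaCo} (building on \cite{StablyComputableAreSemilinear}) and is used as a black box. So there is no ``paper's own proof'' to compare against --- you are reconstructing an argument from a prior paper. That said, your sketch has the right skeleton but two concrete gaps worth naming.

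In the backward direction, your flag-gated scheme $F_i + X_j \to F_i + \alpha_{ij}Y$ ignores that the domain-detector flags are not monotone: a stable-computation of a semilinear predicate in general flips its answer flag back and forth before converging, so the CRN can transiently see $F_i$ for the wrong $i$, emit the wrong linear combination of $Y$, and then have no way to retract it if the only $Y$-affecting reactions are productions. The actual construction in \cite{CheDotSolDetFuncNaCo} handles exactly this by allowing output to be \emph{consumed} when the active branch changes --- which is precisely why the general function construction is not output-oblivious, the whole point of the present paper. Your proposal as written would not stably compute $f$.

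In the forward direction, the step from ``each level set $\{\vec x : f(\vec x)\ge k\}$ is semilinear'' to ``the graph of $f$ is semilinear'' is not just ``nontrivial'' as you say; the union over $k$ of $(\{f\ge k\}\setminus\{f\ge k{+}1\})\times\{k\}$ is an \emph{infinite} union of semilinear sets, which need not be semilinear, and Dickson's lemma alone does not tell you how to package the level sets into a finite description. The argument in \cite{CheDotSolDetFuncNaCo} sidesteps this by treating the candidate output $y$ as an extra input species and stably computing the single Boolean predicate ``$f(\vec x)=y$'' (running $\mathcal C$ and reversibly comparing its $Y$-count against $y$), so the Angluin--Aspnes--Eisenstat semilinearity theorem applies once to a $(d{+}1)$-dimensional predicate whose truth set \emph{is} the graph. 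If you want a self-contained proof, you should replace your level-set plan with this graph-as-predicate reduction, which makes the uniformity issue disappear rather than trying to patch it after the fact.

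Given that the lemma is cited rather than proved, the cleanest fix is simply to cite \cite{CheDotSolDetFuncNaCo} and \cite{StablyComputableAreSemilinear} and not reprove it.
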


\section{Warm-up: One-dimensional case}
\label{sec:one-dim-case}

For functions with one-dimensional input, the 
necessary conditions of being nondecreasing and semilinear 
are also sufficient.

\opt{sub,final}{
\begin{figure}
    \includegraphics[width=2.8in]{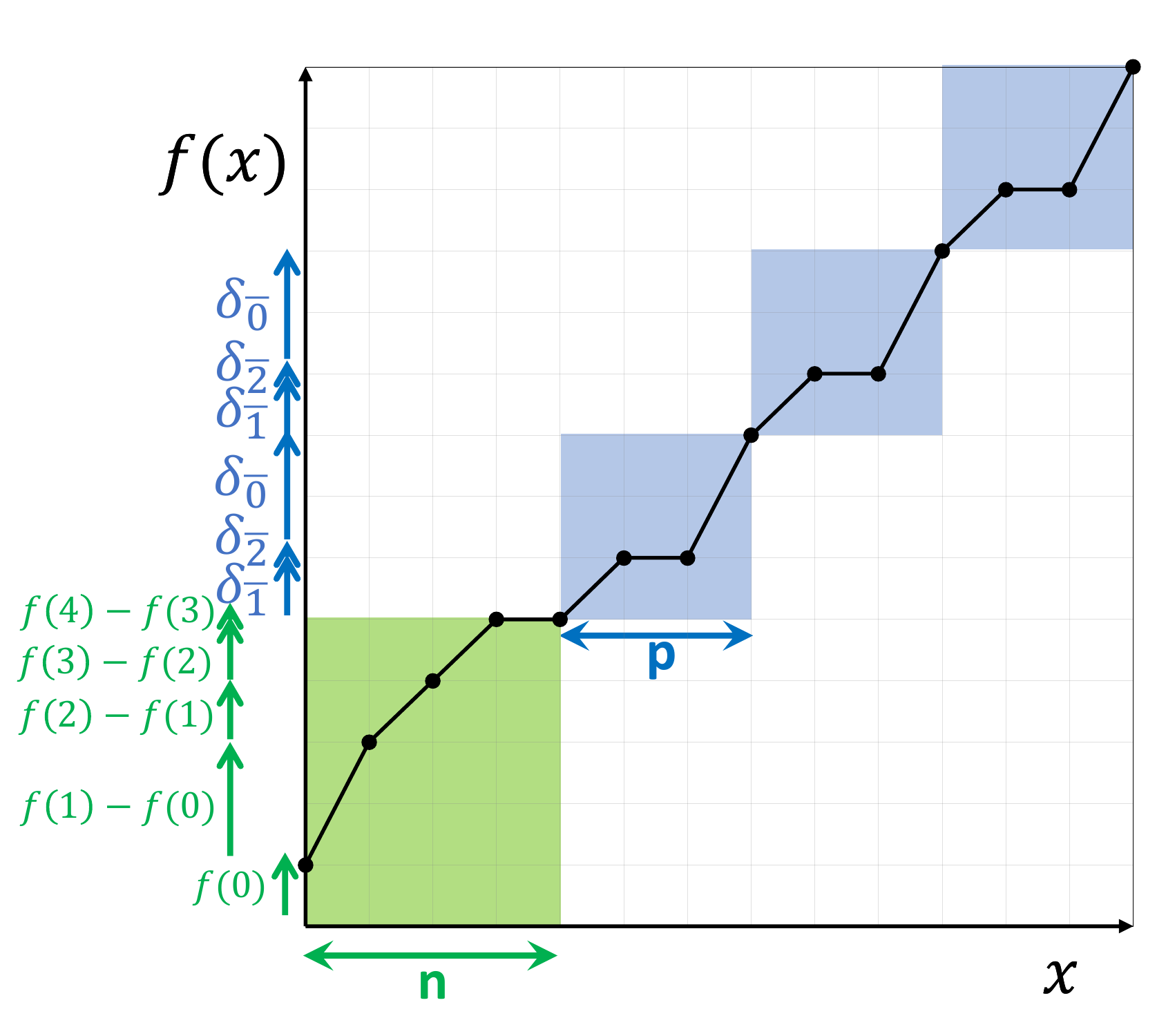}
    \caption{\footnotesize Every semilinear nondecreasing $f:\N\to\N$ is eventually quilt-affine, with periodic finite differences $\delta_{\overline{x}}$.}
    \label{fig:1d-eventually-quilt-affine}
\end{figure}
}


\restateableTheorem{1d MC theorem}{ThmOneDimMC}{
    $f:\N\to\N$ is obliviously-computable $\iff$ $f$ is semilinear and nondecreasing.
}{
    $\implies:$ Lemma \ref{Semilinear Are Stably Computable Lemma} and Observation \ref{MC is nondecreasing}.

    $\impliedby:$ If $f:\N\to\N$ is semilinear and nondecreasing, it will eventually be \emph{quilt-affine}
    (generalized to higher-dimensional functions as Definition~\ref{quilt-affine def}) and thus have periodic finite differences: for some $n\in\N$, period $p\in\N_+$, and finite differences $\delta_{\overline{0}},\ldots,\delta_{\overline{p-1}}\in\N$, then for all $x\geq n$,  $f(x+1)-f(x)=\delta_{(\overline{x}\mod p)}$ (see Fig.~\ref{fig:1d-eventually-quilt-affine}).
    
    Because $f$ is semilinear, by Definitions \ref{semilinear set def} and \ref{semilinear function def}, it can be represented as a disjoint union of affine partial functions, whose domains are semilinear sets, and thus represented as finite Boolean combinations of threshold $\{x\in\N:x\geq a\}$ and mod $\{x\in\N:x\equiv b\mod c\}$ sets. Now take $n\in\N$ greater than all such $a$ and $p=\mathrm{lcm}(c)$ for all such $c$. Then for all $x\geq n$, $f$ periodically cycles between affine partial functions. Because $f$ is nondecreasing, these periodically-repeated affine partial functions must all have the same slope. This implies $f$ is eventually quilt-affine, with periodic finite differences for all $x\geq n$ as claimed.
    
    The CRN $\mathcal{C}$ to stably compute $f$ uses input species $X$, output species $Y$, leader $L$, and species $L_0,\ldots,L_{n-1},P_{\overline{0}},\ldots,P_{\overline{p-1}}$ corresponding to auxiliary ``states'' of the leader,
    i.e., exactly one of $L, L_0,\ldots,L_{n-1},P_{\overline{0}},\ldots,P_{\overline{p-1}}$ is present at any time. Intuitively the leader tracks how many input $X$ it has seen, where the count past $n$ wraps around mod $p$, and outputs the correct finite differences. The reactions of $\mathcal{C}$ are as follows
    \begin{align*}
        L&\to f(0)Y+L_0
        \\
        L_i+X&\to[f(i+1)-f(i)]Y+L_{i+1}\qquad\text{for all }i=0,\ldots,n-2
        \\
        L_{n-1}+X&\to[f(n)-f(n-1)]Y+P_{\overline{n}}
        \\
        P_{\overline{a}}+X&\to\delta_{\overline{a}} Y+P_{\overline{a+1}}
        \qquad\text{for all }\overline{a}=\overline{0},\ldots,\overline{p-1}. 
        \qedhere
    \end{align*}
}

\opt{sub}{A proof is given in the Appendix (\ref{sec:appendix-1d-case}).}

\opt{full}{
\begin{figure}
    \centering
    \includegraphics[width=\opt{final}{2.8in}\opt{full}{2.8in}]{figures/1d_eventually_quilt_affine}
    \caption{\footnotesize Every semilinear nondecreasing $f:\N\to\N$ is eventually quilt-affine, with periodic finite differences $\delta_{\overline{x}}$.}
    \label{fig:1d-eventually-quilt-affine}
\end{figure}
}

Intuitively,
the proof works as follows. We show semilinear, nondecreasing $f:\N\to\N$ must have the eventually quilt-affine structure in Fig.~\ref{fig:1d-eventually-quilt-affine}. From this structure, we define a CRN that uses auxiliary leader states to track the value of $x$ (or $\overline{x}\mod p$ once $x\geq n$), while outputting the correct finite differences from adding each input.

\opt{full,final}{\proofThmOneDimMC}

In the 1D case,
we can also characterize the functions obliviously-computable \emph{without} a leader: they are semilinear and \emph{superadditive}: meaning  $f(x)+f(y)\leq f(x+y)$ for all $x,y$.
(Theorem~\ref{1d MC0 theorem})

\section{Impossibility result}
\label{sec:max}

The characterization of obliviously-computable functions as precisely semilinear and nondecreasing from Theorem $\ref{1d MC theorem}$ 
is insufficient
in higher dimensions. As an example, consider the function 
$\max:\N^2\to\N$,
which is both semilinear and nondecreasing. We prove $\max$ is not obliviously-computable via a more general lemma:



\restateableLemma{dickson contradiction lemma}{LemDicksonContradiction}{
    Let $f:\N^d\to\N$. If there exists an increasing sequence $(\vec{a}_1,\vec{a}_2,\ldots)\in\N^d$ such that for all $i<j$ there exists some $\vec{\Delta}_{ij}\in\N^d$ with
    \opt{full}{\[ f(\vec{a}_i+\vec{\Delta}_{ij})-f(\vec{a}_i)>f(\vec{a}_j+\vec{\Delta}_{ij})-f(\vec{a}_j), \]} 
    \opt{sub,final}{$ f(\vec{a}_i+\vec{\Delta}_{ij})-f(\vec{a}_i)>f(\vec{a}_j+\vec{\Delta}_{ij})-f(\vec{a}_j), $}
    then $f$ is not obliviously-computable.
}{
    Assume toward contradiction an output-oblivious CRN $\mathcal{C}$ stably computes $f$. To stably compute each $f(\vec{a}_i)$, the initial configuration $\vec{I}_{\vec{a}_i}\to^*\vec{O}_i$ for some configuration with $\vec{O}_i(Y)=f(\vec{a}_i)$, giving a sequence of configurations $(\vec{O}_i)_{i=1}^{\infty}$. By Dickson's Lemma \cite{DicksonsPaper}, any sequence of nonnegative integer vectors has a nondecreasing subsequence, so there must be $\vec{O}_i\leq\vec{O}_j$ for some $i<j$. By assumption there exists $\vec{\Delta}_{ij}\in\N^d$ such that 
    $$f(\vec{a}_i+\vec{\Delta}_{ij})-f(\vec{a}_i)>f(\vec{a}_j+\vec{\Delta}_{ij})-f(\vec{a}_j)$$
    
    Now consider the initial configuration $\vec{I}_{\vec{a}_i+\vec{\Delta}_{ij}}\geq \vec{I}_{\vec{a}_i}$, so define the difference $\vec{D}=\vec{I}_{\vec{a}_i+\vec{\Delta}_{ij}}-\vec{I}_{\vec{a}_i}\in\N^\mathcal{S}$. Then the same sequence of reactions $\vec{I}_{\vec{a}_i}\to^*\vec{O}_i$ is applicable to $\vec{I}_{\vec{a}_i+\vec{\Delta}_{ij}}$ reaching configuration $\vec{C}_i=\vec{O}_i+\vec{D}$, with $\vec{C}_i(Y)=\vec{O}_i(Y)=f(\vec{a}_i)$. Then to stably compute $f(\vec{a}_i+\vec{\Delta}_{ij})$ there must exist a further sequence of reactions $\alpha$ from $\vec{C}_i$ that produce an additional $f(\vec{a}_i+\vec{\Delta}_{ij})-f(\vec{a}_i)$ copies of output $Y$.
    
    By the same argument, from initial configuration $\vec{I}_{\vec{a}_j+\vec{\Delta}_{ij}}$ the configuration $\vec{C}_j=\vec{O}_j+\vec{D}$ is reachable, with $\vec{C}_j(Y)=\vec{O}_j(Y)=f(\vec{a}_j)$. Then since $\vec{O}_i\leq\vec{O}_j$, we have $\vec{C}_i\leq\vec{C}_j$, so the same sequence of reactions $\alpha$ is applicable to $\vec{C}_j$, reaching some configuration $\vec{C}'_j$ with an additional $f(\vec{a}_i+\vec{\Delta}_{ij})-f(\vec{a}_i)$ copies of output $Y$, so
    $$\vec{C}'_j(Y)=f(\vec{a}_j)+f(\vec{a}_i+\vec{\Delta}_{ij})-f(\vec{a}_i)>f(\vec{a}_j+\vec{\Delta}_{ij})$$
    Then $\vec{I}_{\vec{a}_j+\vec{\Delta}_{ij}}\to^*\vec{C}'_j$ overproduces $Y$, so the output-oblivious CRN $\mathcal{C}$ cannot stably compute $f(\vec{a}_j+\vec{\Delta}_{ij})$.
}

\begin{figure}
    \opt{full}{\centering}
    \includegraphics[width=\opt{final}{2.25in}{\opt{full}{3in}}]{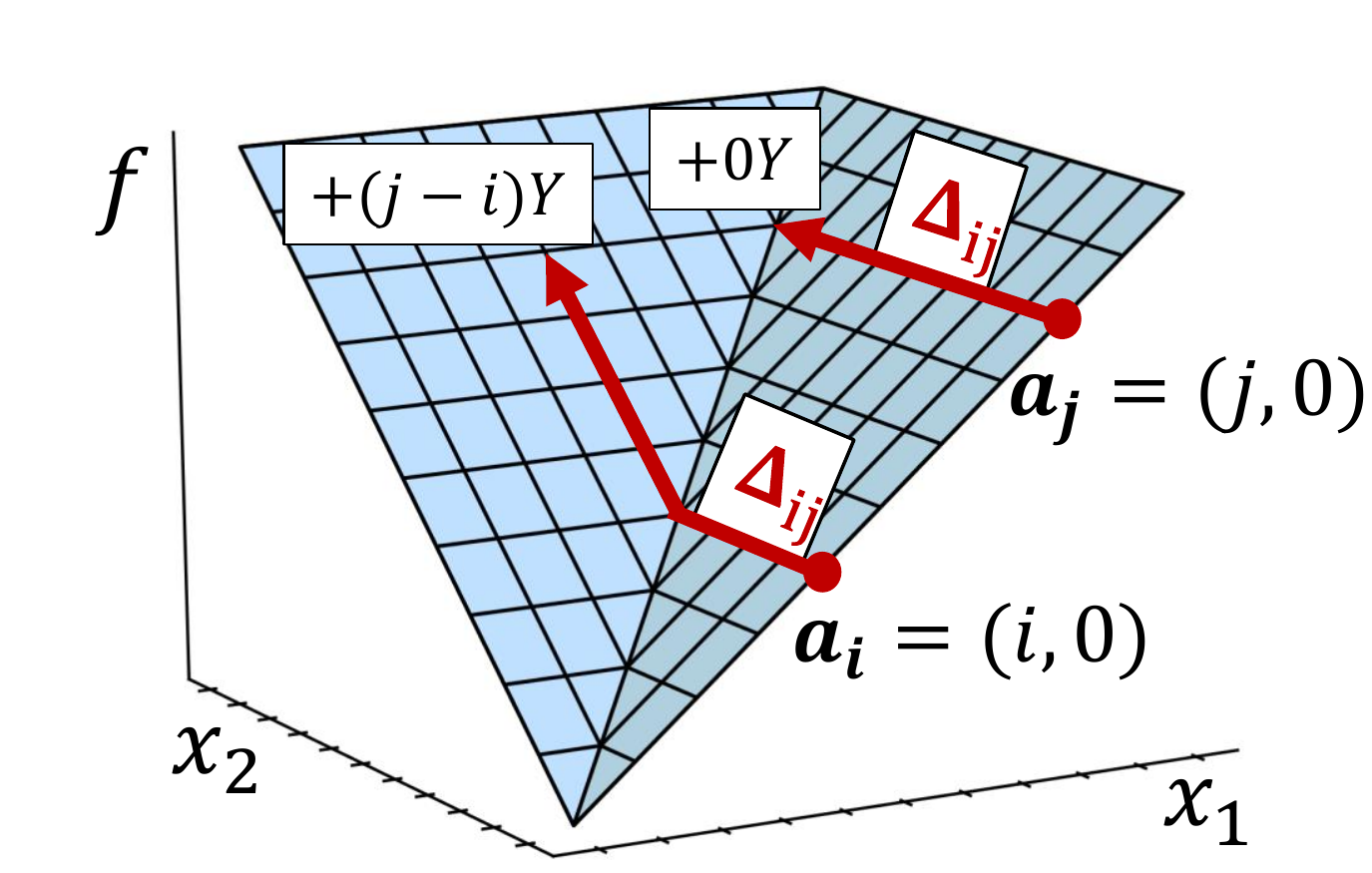}
    \caption{\footnotesize Lemma \ref{dickson contradiction lemma} applied to $f=\max(x_1,x_2)$.}
    \label{fig:max-example}
\end{figure}

\opt{sub}{A proof is given in the Appendix (\ref{sec:appendix-contradiction-lemma}). We now use it to show $\max$ is not obliviously-computable.}
\opt{full,final}{Before proving Lemma~\ref{dickson contradiction lemma}, we use it to show $\max$ is not obliviously-computable.}

For $f=\max(x_1, x_2)$, 
we let $\vec{a}_i=(i,0)$ and 
$\vec{\Delta}_{ij}=(0,j)$, so for $i<j,$
\opt{full,sub}{\[ \max(i,j)-\max(i,0)=j-i>\max(j,j)-\max(j,0)=0 \]}
\opt{final}{$ \max(i,j)-\max(i,0)=j-i>\max(j,j)-\max(j,0)=0 $}
as desired (see Fig.~\ref{fig:max-example}).
Adding $\vec{\Delta}_{ij}$ input after computing $f(\vec{a}_i)$ should produce $j-i$ additional output $Y$. However, adding $\vec{\Delta}_{ij}$ input after computing $f(\vec{a}_j)$ should not. Lemma \ref{dickson contradiction lemma} uses this to show there exists a reaction sequence that overproduces $Y$, thus max is not obliviously-computable.
\opt{full,final}{We now prove Lemma~\ref{dickson contradiction lemma}.}

\opt{full,final}{\proofLemDicksonContradiction}

Lemma~\ref{dickson contradiction lemma} is our main technical tool used to show that a particular semilinear, nondecreasing function is not obliviously-computable,
the key challenge in the impossibility direction of Theorem~\ref{thm:main result}. 

\section{Main result: Full-dimensional case}
\label{sec:full-dim-case}


To formally state our main result, Theorem \ref{thm:main result}, we must first define \emph{quilt-affine} functions as the sum of a linear and periodic function (see Fig.~\ref{fig:2D-quilt-affine-example}):
\begin{definition}
\label{quilt-affine def}

A nondecreasing function $g:\N^d\to\Z$ is quilt-affine (with period $p$) if there exists $\vec{\nabla}_g\in\Q^d_{\geq0}$ and $B:\Z^d/p\Z^d\to\Q$ such that
\opt{full}{ \[ g(\vec{x})=\vec{\nabla}_g\cdot\vec{x}+B(\barx\mod p). \] }
\opt{sub,final}{ $ g(\vec{x})=\vec{\nabla}_g\cdot\vec{x}+B(\barx\mod p). $ }
\end{definition}


We call $\vec{\nabla}_g$ the \emph{gradient} of $g$, and the periodic function $B$ the \emph{periodic offset}. Without loss of generality we have the same period $p$ along all inputs, since $p$ could be the least common multiple of the periods along each input component. Note that $\nabla_g\cdot\vec{x}$ and $B$ can each be rational, but the sum $g(x)\in\Z$ will be integer-valued. We allow $g$ to have negative output for technical reasons\footnote{
The quilt-affine functions that describe $f$ for large inputs may be negative on inputs close to the origin.},
but in the case that $g$ is quilt-affine with nonnegative output (i.e. $g:\N^d\to\N$), there is a simple output-oblivious CRN construction to stably compute $g$. 
The intuitive idea is to use a single leader that reacts with every input species sequentially, tracks the periodic value $\barx\mod p$, and outputs the correct changes in $g$ (Lemma \ref{CRC for quilt-affine function}).

Our main result has a recursive condition where we fix the input of a function $f:\N^d\to\N$. 
For each $i=1,\ldots,d$ and $j\in\N$,
define the \emph{fixed-input restriction}\footnote{
    We define $\restrictf$ to have domain $\N^d$ because it is notationally convenient to have the same domain as $f$, but $\restrictf$ only has relevant input in $d-1$ of its input components, 
    making condition \eqref{defn-obv-computable-3} recursive.
} 
$\restrictf: \N^d \to \N$ of $f$ for all $\vec{x} \in \N^d$ by
$
\restrictf(\vec{x})=f(\vec{x}(1),\ldots,\vec{x}(i-1),j,\vec{x}(i+1),\ldots,\vec{x}(d)).
$


We can now formally state our main result:

\begin{theorem}
\label{thm:main result}
$f:\N^d\to\N$ is obliviously-computable $\iff$
\begin{enumerate}[i)]
    \item \label{defn-obv-computable-1}
    [nondecreasing]
    $f$ is nondecreasing,
    
    \item \label{defn-obv-computable-min}
    [eventually-min]
there exist quilt-affine $g_1,\ldots,g_m:\N^d\to\Z$ and $\vec{n}\in\N^d$ such that for all $\vec{x}\geq\vec{n}$, $f(\vec{x})=\min_k(g_k(\vec{x}))$, and
    
    \item \label{defn-obv-computable-3}
    [recursive]
    all fixed-input restrictions $\restrictf$ are obliviously-computable.
\end{enumerate}
\end{theorem}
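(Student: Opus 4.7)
Condition \eqref{defn-obv-computable-1} is immediate from Observation~\ref{MC is nondecreasing}. For condition \eqref{defn-obv-computable-3}, given an output-oblivious CRN $\mathcal{C}$ stably computing $f$, I would obtain an output-oblivious CRN for $\restrictf$ by relabeling $X_i$ in $\mathcal{C}$'s reactions to a private species $\tilde{X}_i$ and augmenting the leader with a reaction $L \to L_\mathcal{C} + j\tilde{X}_i$ (leaving the actual input $X_i$ inert); since none of these modifications touch $Y$, output-obliviousness is preserved. The bulk of the work is condition \eqref{defn-obv-computable-min}: starting from the semilinearity of $f$ (Lemma~\ref{Semilinear Are Stably Computable Lemma}), I would decompose $f$ into affine pieces over semilinear regions, then use Lemma~\ref{dickson contradiction lemma} to force the pieces covering large inputs to combine into a global minimum of finitely many quilt-affine functions. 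Concretely, if on the eventual region $f$ strictly exceeded the natural quilt-affine extrapolation of any local piece, we could extract an increasing sequence $(\vec{a}_i)$ and displacements $(\vec{\Delta}_{ij})$ that witness the forbidden inequality of Lemma~\ref{dickson contradiction lemma}, contradicting obliviousness.

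\textbf{Backward direction.} Assuming (i)--(iii), the construction has three modular pieces. First, Lemma~\ref{CRC for quilt-affine function} obliviously computes each quilt-affine $g_k$ into its own output species $Y_k$. Second, the single reaction $Y_1 + Y_2 + \cdots + Y_m \to Y$ is output-oblivious and, in any stable configuration, will have produced exactly $\min_k g_k(\vec{x})$ copies of $Y$, since stability forces at least one $Y_k$ count to be zero while each $Y_k$ is consumed the same number of times. Third, a dispatching leader handles the finite region where some $\vec{x}(i) < \vec{n}(i)$: it absorbs inputs sequentially, tracks each count up to $\vec{n}(i)$ in auxiliary states (as in the 1D construction in Theorem~\ref{1d MC theorem}), and when a count stays below its threshold, invokes (by induction on $d$) the obliviously-computable CRN for the relevant restriction $\restrictf$ guaranteed by condition \eqref{defn-obv-computable-3}. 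These pieces combine into a single output-oblivious CRN.

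\textbf{Main obstacle.} The sharpest difficulty is condition \eqref{defn-obv-computable-min} in the forward direction. The 1D case follows from Theorem~\ref{1d MC theorem}, and the 2D case was handled via planarity arguments~\cite{chugg2018output}, but in general dimension the separating hyperplanes of semilinear regions can produce under-determined regions of codimension anywhere from $1$ to $d$, and the periodic offsets of the candidate quilt-affine functions must match coherently across regions of mixed dimension. Extracting Dickson-style witness sequences to apply Lemma~\ref{dickson contradiction lemma} uniformly in this higher-dimensional geometric setting is the main technical challenge, and is what distinguishes the proof from its continuous analogue in~\cite{ContinuousMC}, where the discrete mod-class pathologies disappear in the scaling limit.
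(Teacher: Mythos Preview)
Your forward direction matches the paper's approach: condition~\eqref{defn-obv-computable-1} via Observation~\ref{MC is nondecreasing}, condition~\eqref{defn-obv-computable-3} via hardcoding the fixed input through the leader, and condition~\eqref{defn-obv-computable-min} via the semilinear decomposition combined with Lemma~\ref{dickson contradiction lemma}, with the under-determined regions correctly flagged as the technical heart.

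Your backward direction has two genuine gaps. First, the quilt-affine $g_k:\N^d\to\Z$ can be \emph{negative} near the origin (this is exactly why Definition~\ref{quilt-affine def} allows codomain $\Z$), so Lemma~\ref{CRC for quilt-affine function} does not apply to them directly; the paper instead computes $g_k(\vec{x}\lor\vec{n})$, realized as the nonnegative shifted quilt-affine function $g_k(\cdot+\vec{n})$ composed with the obliviously-computable map $\vec{x}\mapsto(\vec{x}-\vec{n})_+$. Second, and more fundamentally, your ``dispatching leader'' cannot detect that an input count ``stays below its threshold'': in the asynchronous CRN model there is no moment at which the leader may safely commit to the restriction CRN for $\restrictf$, since more $X_i$ could always arrive. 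Nor can you simply run all restrictions in parallel and take their min, because when $\vec{x}(i)>j$ the value $\restrictf(\vec{x})$ can be \emph{strictly less} than $f(\vec{x})$ (by nondecreasingness, fixing input $i$ to $j<\vec{x}(i)$ can only lower the output), corrupting the minimum. The paper's missing ingredient is the identity in equation~\eqref{eq-f-express-as-min} of Lemma~\ref{lem-general-construction}: take the min over $f(\vec{x}\lor\vec{n})$ together with terms of the form $\restrictf(\vec{x}) + \1_{\{\vec{x}(i)>j\}}(\vec{x})\cdot f(\vec{x}\lor\vec{n})$. The indicator (itself obliviously-computable via a catalytic reaction) pads every ``wrong'' restriction by at least $f(\vec{x}\lor\vec{n})\geq f(\vec{x})$, so it never wins the min, while the ``correct'' restriction (when $\vec{x}(i)=j$) has indicator~$0$ and contributes exactly $f(\vec{x})$. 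This trick is what lets the recursive piece and the eventual-min piece compose without any case detection.
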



We first prove that these conditions imply $f$ is obliviously-computable via a general CRN construction in Section~\ref{Construction}.

The nondecreasing condition~\eqref{defn-obv-computable-1} is necessary by Observation~\ref{MC is nondecreasing}. 
It is immediate to see the recursive condition~\eqref{defn-obv-computable-3} is also necessary:
\begin{observation}
If $f:\N^d\to\N$ is obliviously-computable, then any fixed-input restriction $\restrictf:\N^d\to\N$ is obliviously-computable.
\end{observation}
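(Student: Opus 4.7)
Given an output-oblivious CRN $\mathcal{C} = (\mathcal{S},\mathcal{R})$ stably computing $f$, with leader $L$, input species $X_1, \ldots, X_d$, and output species $Y$, I construct an output-oblivious CRN $\mathcal{C}'$ stably computing $\restrictf$. The idea is that a new leader injects $j$ copies of coordinate $i$ into the internal machinery, while the user's $X_i$ input is discarded. Formally, introduce a fresh leader $L'$ and a fresh internal species $X_i^{\#}$. Let $\mathcal{C}'$ have species $\mathcal{S} \cup \{L', X_i^{\#}\}$, input species $X_1, \ldots, X_d$ (so $X_i$ is present but inert), output species $Y$, leader $L'$, and reactions consisting of every reaction of $\mathcal{R}$ with each occurrence of $X_i$ renamed to $X_i^{\#}$, together with the two new reactions $L' \to L + j\, X_i^{\#}$ and $X_i \to \emptyset$.

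Output-obliviousness is immediate: no reaction of $\mathcal{R}$ has $Y$ as a reactant, and neither new reaction involves $Y$. To show stable computation of $\restrictf$, set $\vec{x}' = (\vec{x}(1), \ldots, \vec{x}(i-1), j, \vec{x}(i+1), \ldots, \vec{x}(d))$, so that $\restrictf(\vec{x}) = f(\vec{x}')$. Consider any configuration $\vec{C}'$ reachable in $\mathcal{C}'$ from the initial configuration $\vec{I}^{\mathcal{C}'}_{\vec{x}}$. If $L'$ is still present, fire $L' \to L + j\, X_i^{\#}$ once, then fire $X_i \to \emptyset$ exhaustively, reaching a configuration $\vec{D}$ with no $L'$ and no $X_i$. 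Let $\vec{E} \in \N^{\mathcal{S}}$ denote the corresponding $\mathcal{C}$-configuration obtained by identifying $X_i^{\#}$ with $X_i$ and dropping $L'$. The plan is to show (a) $\vec{E}$ is reachable in $\mathcal{C}$ from $\vec{I}^{\mathcal{C}}_{\vec{x}'}$, so by stable computation of $\mathcal{C}$ a stable $\vec{O}$ with $\vec{O}(Y)=f(\vec{x}')$ is reachable from $\vec{E}$, and (b) $\vec{O}$ lifts back to a stable $\mathcal{C}'$-configuration reachable from $\vec{D}$ (hence from $\vec{C}'$) with the same $Y$-count $f(\vec{x}') = \restrictf(\vec{x})$.

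The substantive step is (a): I translate the $\mathcal{C}'$-reaction sequence reaching $\vec{D}$ into a valid $\mathcal{C}$-sequence from $\vec{I}^{\mathcal{C}}_{\vec{x}'}$. The firings of $X_i \to \emptyset$ and the single firing of $L' \to L + j\, X_i^{\#}$ are simply deleted, since $\vec{I}^{\mathcal{C}}_{\vec{x}'}$ already contains the $L$ and the $j$ copies of $X_i$ that firing would have produced. The remaining reactions are renamed $\mathcal{R}$-reactions; the crucial point is that any such firing occurring \emph{before} the $L'$ firing uses neither $L$ nor $X_i^{\#}$ (both absent in $\mathcal{C}'$ at that moment), so it does not touch the ``extra'' $L$ and $X_i$ of $\vec{I}^{\mathcal{C}}_{\vec{x}'}$ and remains applicable; firings after the $L'$ firing translate directly under the identification $X_i^{\#} \leftrightarrow X_i$. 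Claim (b) is then straightforward: from $\vec{D}$ onward no $\mathcal{C}'$-only reaction is applicable (no $L'$ or $X_i$), so $\mathcal{C}'$ behaves exactly like $\mathcal{C}$ on $\vec{E}$, and stability in $\mathcal{C}$ at $\vec{O}$ lifts to stability of the corresponding $\mathcal{C}'$-configuration.

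I expect (a) to be the only real subtlety, and it is essentially bookkeeping: the whole purpose of the renaming $X_i \mapsto X_i^{\#}$ is to decouple the user's external $X_i$ from $\mathcal{C}$'s internal workings, so that no $\mathcal{R}$-reaction ever interacts with the external $X_i$ and the pre-leader portion of any $\mathcal{C}'$-trace lifts cleanly into $\mathcal{C}$.
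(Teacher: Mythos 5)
Your construction is essentially the paper's. The paper renames both $L$ and $X_i$ to fresh species $L', X_i'$ and adds the single new reaction $L \to j\,X_i' + L'$ with $L$ retained as the new leader; you rename only $X_i$ and introduce a fresh leader $L'$ with $L' \to L + j\,X_i^{\#}$. These are the same construction up to swapping the names $L \leftrightarrow L'$, and your extra reaction $X_i \to \emptyset$ is harmless but unnecessary (the user's $X_i$ is already inert once decoupled from the internal machinery). The paper leaves the verification as a one-line ``straightforward to verify,'' so your more detailed argument is a reasonable addition.

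One imprecision in your verification: you justify translating the pre-$L'$ portion of a $\mathcal{C}'$-trace into $\mathcal{C}$ by claiming that $L$ and $X_i^{\#}$ are ``both absent in $\mathcal{C}'$ at that moment.'' That need not hold: a renamed $\mathcal{R}$-reaction can \emph{produce} $L$ (if some $A \to L + B$ is in $\mathcal{R}$) or $X_i^{\#}$ (if some reaction produces $X_i$), after which further renamed reactions can consume them before $L'$ ever fires. The translation still works, but for a different reason. Either observe that $L'$ appears in exactly one reaction, so the $L'$ firing commutes to the front of any trace, after which $\mathcal{C}'$ literally simulates $\mathcal{C}$ from $\vec{I}^{\mathcal{C}}_{\vec{x}'}$ (with the inert original $X_i$ as junk); or invoke additive reachability directly, since $\vec{I}^{\mathcal{C}}_{\vec{x}'}$ starts with the extra $L$ and $j$ extra $X_i$ that $\mathcal{C}'$ only acquires later, and having more of a species never invalidates a reaction sequence. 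With that repair the argument is sound.
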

\begin{proof} 
Let the output-oblivious CRN $\mathcal{C}$ stably compute $f$. We define the output-oblivious CRN $\mathcal{C}'$ to ``hardcode'' the input $\vec{x}(i)=j$ by modifying the reactions of $\mathcal{C}$. Replace all instances of the leader $L$ and input species $X_i$ by $L'$ and $X_i'$ respectively, then add the initial reaction $L\to jX_i'+L'$. It is straightforward to verify that $\mathcal{C}'$ stably computes $\restrictf$.
\end{proof}

Then the remaining work (and biggest effort of this paper) is to show the necessity of the eventually-min condition~\eqref{defn-obv-computable-min}: that every obliviously-computable function can be represented as eventually a minimum of a finite number of quilt-affine functions, which is shown as Theorem~\ref{MC implies min}. 
Its proof relies on $f$ being semilinear, nondecreasing, and not having any ``contradiction sequences'' to apply Lemma~\ref{dickson contradiction lemma}. 
Thus the proof of Theorem~\ref{MC implies min} also yields the following alternative characterization to Theorem \ref{thm:main result}:

\begin{theorem}
\label{thm-negative-characterization}
$f:\N^d\to\N$ is obliviously-computable $\iff$ $f$ is semilinear, nondecreasing, and has no sequence $(\vec{a}_1,\vec{a}_2,\ldots)$ meeting the conditions of Lemma \ref{dickson contradiction lemma}.
\end{theorem}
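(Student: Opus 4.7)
The plan is to prove Theorem~\ref{thm-negative-characterization} using Theorem~\ref{thm:main result} as a bridge, since the two characterizations differ only in how they describe the ``obstruction'' on $f$. The forward direction is immediate from earlier work: if $f$ is obliviously-computable, then semilinearity follows from Lemma~\ref{Semilinear Are Stably Computable Lemma}, the nondecreasing property from Observation~\ref{MC is nondecreasing}, and the absence of a contradiction sequence as in Lemma~\ref{dickson contradiction lemma} by its contrapositive (any such sequence would let us produce a reaction trace overproducing $Y$, contradicting output-obliviousness).

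For the reverse direction, I would verify the three conditions of Theorem~\ref{thm:main result} for $f$ under the hypotheses (semilinear, nondecreasing, no contradiction sequence). Condition~\eqref{defn-obv-computable-1} is directly assumed. Condition~\eqref{defn-obv-computable-min} (eventually-min) is exactly the content of Theorem~\ref{MC implies min}. The crucial observation is that, as the paper already remarks, the proof of Theorem~\ref{MC implies min} does not in fact require obliviously-computability as such; it only invokes the three derived properties (semilinear, nondecreasing, no contradiction sequence). So under our hypotheses the same proof yields the eventually-min decomposition into quilt-affine functions $g_1,\ldots,g_m$ directly. Once this is in place, Theorem~\ref{thm:main result} closes the circle and gives obliviously-computability.

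Condition~\eqref{defn-obv-computable-3} (recursive) is handled by induction on the effective dimension $d$. The base case $d{=}1$ follows from Theorem~\ref{1d MC theorem} (semilinear and nondecreasing suffice), noting that a fixed-input restriction of a $1$D function is a constant, which is trivially obliviously-computable. For the inductive step, fix a restriction $\restrictf$ of a $d$-variable $f$. Semilinearity of $\restrictf$ is inherited because the graph of $\restrictf$ is a semilinear slice of the graph of $f$; nondecreasing is inherited pointwise. A putative contradiction sequence $(\vec{b}_k)$ for $\restrictf$ lifts to a contradiction sequence for $f$ by padding each $\vec{b}_k$ with the fixed value $j$ in coordinate $i$ and extending each $\vec{\Delta}_{k\ell}$ with a $0$ in that coordinate, contradicting our hypothesis on $f$. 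Thus $\restrictf$ satisfies the same three hypotheses, and the inductive hypothesis (applied to the lower-dimensional function $\restrictf$ effectively depends on) gives obliviously-computability.

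The main obstacle is purely bookkeeping rather than mathematical: one must carefully separate the role of the hypotheses in the proof of Theorem~\ref{MC implies min}, confirming that output-obliviousness is never invoked beyond the three consequences already isolated (semilinearity, monotonicity, and the Dickson-obstruction). Modulo that verification, the characterization drops out by combining the forward and reverse directions above. In particular, no new construction is required beyond the one in Section~\ref{Construction} that powers Theorem~\ref{thm:main result}.
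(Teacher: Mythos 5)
Your proposal is correct and matches the paper's intended argument: the paper does not give an explicit proof of Theorem~\ref{thm-negative-characterization}, but merely remarks (just before its statement) that the proof of Theorem~\ref{MC implies min} only invokes semilinearity, monotonicity, and the absence of contradiction sequences, which is exactly the key observation you identify, combined with routing through Theorem~\ref{thm:main result} and inheriting the three hypotheses under fixed-input restrictions for the recursive condition. The one subtlety worth making explicit is your lift of a contradiction sequence from $\restrictf$ to $f$: after replacing coordinate $i$ by $j$ in the $\vec{b}_k$ (and zeroing it in the $\vec{\Delta}_{k\ell}$) you must check the resulting sequence still has infinitely many distinct terms, which holds because any two $\vec{b}_k,\vec{b}_\ell$ differing only in coordinate $i$ would force the strict inequality of Lemma~\ref{dickson contradiction lemma} to read $x>x$.
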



This gives a ``negative characterization'' identifying behavior obliviously-computable functions must \emph{avoid},
whereas Theorem~\ref{thm:main result}, is a ``positive characterization'' describing the allowable behavior of such functions.
We include Theorem~\ref{thm-negative-characterization}, though it is less descriptive of the function,
because it may be useful in other contexts.

\section{Construction}
\label{Construction}

\newcommand{\lemAndProofQuiltAffineConstruction}{
    \begin{lemma}
    \label{CRC for quilt-affine function}
    Every quilt-affine function $g:\N^d\to\N$ is obliviously-computable.
    \end{lemma}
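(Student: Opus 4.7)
The plan is to exploit the fact that a quilt-affine $g:\N^d\to\N$ has finite differences depending only on the congruence class of the input, and to build a CRN in which a single leader token walks through the group $\Z^d/p\Z^d$ as input molecules are absorbed, producing the appropriate number of $Y$ at each step.

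First I would record the key structural observation: for each axis $i\in\{1,\ldots,d\}$ and each class $\barx\in\Z^d/p\Z^d$, the finite difference
\[
\delta_i(\barx)\;=\;g(\vec{x}+\vec{e}_i)-g(\vec{x})\;=\;\vec{\nabla}_g(i)+\bigl[B((\barx+\vec{e}_i)\bmod p)-B(\barx\bmod p)\bigr]
\]
is well-defined (independent of the representative $\vec{x}$ of $\barx$), lies in $\N$ because $g$ is integer-valued and nondecreasing, and similarly $g(\vec{0})\in\N$. These are exactly the stoichiometric coefficients that the construction needs.

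Next I would exhibit the CRN. Introduce species $L,L_{\barx}$ for each $\barx\in\Z^d/p\Z^d$, inputs $X_1,\ldots,X_d$, and output $Y$. The reactions are
\begin{align*}
L&\;\to\; g(\vec{0})\,Y + L_{\overline{\vec{0}}},\\
L_{\barx}+X_i&\;\to\;\delta_i(\barx)\,Y + L_{\barx+\vec{e}_i}\qquad\text{for each }i\in\{1,\ldots,d\}\text{ and }\barx\in\Z^d/p\Z^d.
\end{align*}
By inspection $Y$ is never a reactant, so the CRN is output-oblivious. A simple invariant says that from $\vec{I}_{\vec{x}}$, exactly one of $\{L,L_{\barx}\}_{\barx}$ is present at any time, and once the initial reaction has fired the leader index $\baru$ together with the remaining input counts $\vec{r}$ satisfy $\baru\equiv\vec{x}-\vec{r}\pmod p$ and the current count of $Y$ equals $g(\vec{x}-\vec{r})$. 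The latter is a telescoping check: each time an $X_i$ is consumed from leader state $L_{\barx}$, the total $Y$ gains exactly $g(\vec{x}-\vec{r}+\vec{e}_i)-g(\vec{x}-\vec{r})$, and this is independent of the order in which the $X_i$ are consumed because $g$ itself is a well-defined function of the running partial sum.

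Finally, to verify stable computation: from any configuration reachable from $\vec{I}_{\vec{x}}$, one can fire the initial reaction (if not yet fired) and then repeatedly apply transition reactions to consume all remaining $X_i$, reaching a configuration with count $g(\vec{x})$ of $Y$ and all $X_i=0$. That configuration is stable since every reaction requires some $X_i$ as a reactant. The expected main obstacle is purely bookkeeping: confirming that the $\delta_i(\barx)$ really are nonnegative integers (handled by the observation above) and that the invariant is preserved under the full nondeterminism of reaction orderings. Both reduce to the fact that telescoping sums of finite differences of $g$ are path-independent, which follows directly from $g$ being a function on $\N^d$.
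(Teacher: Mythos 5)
Your proposal is correct and takes essentially the same approach as the paper: define the periodic finite differences $\delta^i_{\bara}$, observe they are nonnegative integers, and build the leader-walk CRN with the initial reaction $L\to g(\vec{0})Y+L_{\overline{\vec{0}}}$ plus transition reactions $L_{\bara}+X_i\to\delta^i_{\bara}Y+L_{\overline{\vec{a}+\vec{e}_i}}$. Your verification via the telescoping/path-independence invariant spells out what the paper leaves as "easily verified."
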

    
    \begin{proof}
    Let $g:\N^d\to\N$ be quilt-affine with period $p$ (recalling Definition~\ref{quilt-affine def}). Notice that $g$ has periodic finite differences. For each congruence class $\bara\in\Z^d/p\Z^d$ and input component $i=1,\ldots,d$, where $\vec{e}_i$ is the $i$th standard basis vector, define
    \[
    \delta_{\bara}^i=\vec{\nabla}_g\cdot\vec{e}_i+B(\overline{\vec{a}+\vec{e}_i}\mod p)-B(\bara\mod p)\in\N.
    \]
    Observe that for all $\vec{x}\in\bara$, $g(\vec{x}+\vec{e}_i)-g(\vec{x})=\delta_{\bara}^i$. We now use these periodic finite differences to construct an output-oblivious CRN $\mathcal{C}$ to stably compute $g$.
    
   The CRN $\mathcal{C}$ has input species $X_1,\ldots,X_d$, output species $Y$, leader species $L$ and $p^d$ additional species $L_{\bara}$ for each $\bara\in\Z^d/p\Z^d$ coresponding to auxilliary ``states'' of the leader. 
    The initial reaction $L\to g(\vec{0})Y+L_{\overline{\vec{0}}}$ is accompanied by $dp^d$ reactions of the form
    \[
        L_{\bara}+X_i
        \to 
        \delta_{\bara}^iY+L_{\overline{\vec{a}+\vec{e}_i}}
    \]
    for each $i=1,\ldots,d$ and $\bara\in\Z^d/P\Z^d$. This CRN first creates $g(\vec{0})$ output, then sequentially outputs all finite differences, and is easily verified to stably compute $g$.
    \end{proof}
}
First we show that any quilt-affine function with nonnegative range is stably computed by an output-oblivious CRN: 
\opt{sub}{(See Lemma \ref{CRC for quilt-affine function}.)}

\opt{full,final}{\lemAndProofQuiltAffineConstruction}

We now prove (in Lemma~\ref{lem-general-construction}) one direction of Theorem~\ref{thm:main result}: that conditions \eqref{defn-obv-computable-1}, \eqref{defn-obv-computable-min}, and \eqref{defn-obv-computable-3} imply an output-oblivious CRN can stably compute $f$. Intuitively, by the eventually-min condition~\eqref{defn-obv-computable-min} we compute $f(\vec{x})$ for $\vec{x}\geq\vec{n}$ by composing min and quilt-affine functions. If $\vec{x}\ngeq\vec{n}$, then $\vec{x}(i)=j$ for some input $i$ and $j<\vec{n}(i)$. By the recursive condition~\eqref{defn-obv-computable-3} we compute $\restrictf(\vec{x})=f(\vec{x})$
\footnote{As a result, this construction is recursive, with an additional input being fixed at each level of the recursion, so the base case is simply a constant function.}. The key remaining insight is a trick (similar to a proof in \cite{ContinuousMC}) to compose these pieces using minimum and indicator functions.


The proof of Lemma~\ref{lem-general-construction} then expresses $f$ as such a minimum of finitely many pieces. We justify that $f$ is obliviously-computable by showing that each piece is obliviously-computable, since by Observation \ref{lem-output-oblivious-implies-composable} obliviously-computable functions are closed under composition.
\opt{sub}{A proof of Lemma~\ref{lem-general-construction} is given in the Appendix (\ref{sec:appendix-construction}).}

\restateableLemma{lem-general-construction}{LemGeneralConstruction}{
    If $f:\N^d\to\N$ satisfies the conditions of Theorem~\ref{thm:main result}, 
    $f$ is obliviously-computable.
}{
    Assume $f:\N^d\to\N$ satisfies 
the conditions of Theorem~\ref{thm:main result}.
Then by eventually-min condition~\eqref{defn-obv-computable-min}, there exist  quilt-affine $g_1,\ldots,g_m:\N^d\to\Z$ and $\vec{n}\in\N^d$ (without loss of generality assume $\vec{n}=(n,\ldots,n)$) such that $f(\vec{x})=\min_k(g_k(\vec{x}))$ for all $\vec{x}\geq\vec{n}$.

\newcommand{\xn}{\vec{x}\lor\vec{n}}

Let $\xn=(\max(\vec{x}(1),n),\ldots,\max(\vec{x}(d),n))$ denote the componentwise max of $\vec{x}$ and $\vec{n}$. Let $\1_{\vec{x}(i) > j}:\N^d \to \{0,1\}$ denote the indicator function that is 1 $\iff$ its input $\vec{x}$ obeys $\vec{x}(i)>j$. Recall $\restrictf$ is the fixed-input restriction setting input $\vec{x}(i)=j$.
%
We claim that $f$ can be expressed as
\begin{equation} \label{eq-f-express-as-min}
    f(\vec{x})
    =
    \min \bigg[
        f(\xn),
        \underbrace{\restrictf(\vec{x}) + \1_{\{\vec{x}(i) > j\}}(\vec{x}) \cdot f(\xn)}_{\substack{i=1,\ldots,d \\ j=0,\ldots,n-1}}
    \bigg].
\end{equation}
We first show $f\geq\min[\ldots]$ since for all $\vec{x}\in\N^d$, $f(\vec{x})$ is achieved by some term. If $\vec{x}\geq\vec{n}$, then $f(\vec{x})=f(\xn)$. If $\vec{x}\ngeq\vec{n}$, there must be $\vec{x}(i)=j$ for some $i=1,\ldots,d$ and $j=0,\ldots,n-1$, so $f(\vec{x})=\restrictf(\vec{x})=\restrictf(\vec{x}) + \1_{\{\vec{x}(i) > j\}}(\vec{x}) \cdot f(\xn)$ since the indicator is $0$.

We next show $f\leq\min[\ldots]$ since $f(\vec{x})\leq$ each term for all $\vec{x}\in\N^d$. $f(\vec{x})\leq f(\xn)$ since $\vec{x}\leq(\xn)$ and $f$ is nondecreasing. When $\1_{\{\vec{x}(i) > j\}}(\vec{x})=1$, we then have $f(\vec{x})\leq \restrictf(\vec{x}) + \1_{\{\vec{x}(i) > j\}}(\vec{x}) \cdot f(\xn)$. If $\1_{\{\vec{x}(i) > j\}}(\vec{x})=0$, then $\vec{x}(i)\leq j$ so $f(\vec{x})\leq\restrictf(\vec{x})$ since $f$ is nondecreasing. Thus equation \ref{eq-f-express-as-min} holds as claimed.

It remains to show that $f$ is obliviously-computable. From Observation \ref{lem-output-oblivious-implies-composable}, output-oblivious CRNs are closed under composition, and equation \ref{eq-f-express-as-min} gives a method to express $f$ as a composition of functions. Thus it suffices to show that each piece is obliviously-computable. Specifically, we show the functions $\min:\N^k\to\N$ (for any $k$), $f(\xn):\N^d\to\N$, $\restrictf(\vec{x}):\N^d\to\N$, and $c(a,b,\vec{x})=a+\1_{\{\vec{x}(i) > j\}}(\vec{x}) \cdot b:\N^{d+2}\to\N$ are each obliviously-computable. Implicit in the composed CRN to stably compute $f$ as the composition  from equation \ref{eq-f-express-as-min} is the ``fan out'' operation where reactions of the form $X_i\to X^1_i,\ldots,X^m_i$ create multiple copies of species $X_i$ to be used as independent inputs to multiple ``modules'' in this composition.
%
%
\begin{description}
    \item[$\min:\N^k\to\N$ is obliviously-computable:]~\\
    Consider the CRN with single reaction $X_1,\ldots,X_k\to Y$, the natural generalization of two-input $\min$ from Fig.~\ref{fig:CRN-computing-examples}.
    
    \item[$f(\xn):\N^d\to\N$ is obliviously-computable:]~\\
    By condition~\eqref{defn-obv-computable-min}, $f(\xn)=\min_k(g_k(\xn))$ since $\xn\geq\vec{n}$, so it suffices to show for each quilt-affine $g_k:\N^d\to\Z$ that $g_k(\xn)$ is obliviously-computable.
    
    By condition~\eqref{defn-obv-computable-min}, $g_k(\vec{x}+\vec{n})\geq f(\vec{x}+\vec{n}) \geq 0$ since $\vec{x}+\vec{n}\geq\vec{n}$. Then $g_k(\vec{x}+\vec{n}):\N^d\to\N$ is still quilt-affine since that property is preserved by translation, but now has guaranteed nonnegative output. Thus by Lemma \ref{CRC for quilt-affine function}, $g_k(\vec{x}+\vec{n}):\N^d\to\N$ is obliviously-computable.
    
    Letting $(\vec{x}-\vec{n})_+=(\max(\vec{x}(1)-n,0),\ldots,\max(\vec{x}(d)-n,0))$, we then show the function $(\vec{x}-\vec{n})_+:\N^d\to\N^d$ is obliviously-computable via the CRN with reactions $(n+1)X_i\to nX_i+Y_i$ for each component $i=1,\ldots,d$. 
    
    Finally, because $\xn=(\vec{x}-\vec{n})_++\vec{n}$, we have shown $g_k(\xn)=g_k((\vec{x}-\vec{n})_++\vec{n})$ is obliviously-computable as the composition of obliviously-computable $g_k(\vec{x}+\vec{n})$ and $(\vec{x}-\vec{n})_+$.
    
    \item[$\restrictf(\vec{x}):\N^d\to\N$ is obliviously-computable:]~\\
    This is precisely the assumed recursive condition~\eqref{defn-obv-computable-3}.
    
\opt{final}{    \item[{\parbox[t]{\linewidth}{$c(a,b,\vec{x})=a+\1_{\{\vec{x}(i) > j\}}(\vec{x}) \cdot b:\N^{d+2}\to\N$ \\ is obliviously-computable:}}]~\\
    Consider the output-oblivious CRN (with input species \\}
\opt{full}{\item[$c(a,b,\vec{x})=a+\1_{\{\vec{x}(i) > j\}}(\vec{x}) \cdot b:\N^{d+2}\to\N$ is obliviously-computable:]~\\
    Consider the output-oblivious CRN (with input species 
}    
    $A,B, X_1, \ldots, X_d$ and output species $Y$) with two reactions $A\to Y$ and $(j+1)X_i+B\to(j+1)X_i+Y$. The $A$ is all converted to $Y$, and $(j+1)$ copies of input species $X_i$ catalyze the conversion of $B$ to $Y$, which will only happen when $\1_{\{\vec{x}(i) > j\}}(\vec{x})=1$. Thus this stably computes $c(a,b,\vec{x})$ as desired. \qedhere
\end{description}
}



\opt{full,final}{\proofLemGeneralConstruction}

\section{Output-oblivious implies eventually min of quilt-affine functions}
\label{Forward Direction}
To complete the proof of Theorem \ref{thm:main result}, it remains to show the necessity of the eventually-min condition~\eqref{defn-obv-computable-min}:

\begin{theorem}
\label{MC implies min}
If $f:\N^d\to\N$ is obliviously-computable,
then there exist quilt-affine $g_1,\ldots,g_m:\N^d\to\Z$ and $\vec{n}\in\N^d$ such that for all $\vec{x}\geq\vec{n}$, $f(\vec{x})=\min_k(g_k(\vec{x}))$.
\end{theorem}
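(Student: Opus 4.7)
The plan is to leverage semilinearity to obtain a finite affine decomposition of $f$, and then to use Lemma~\ref{dickson contradiction lemma} to force the pieces to fit together as a minimum of quilt-affine functions. Since $f$ is obliviously-computable, it is semilinear by Lemma~\ref{Semilinear Are Stably Computable Lemma} and nondecreasing by Observation~\ref{MC is nondecreasing}. Using Definitions~\ref{semilinear set def} and \ref{semilinear function def}, and refining the partition so each piece lies in a single congruence class modulo a common period $p$ (the least common multiple of the mod moduli), I can pick $\vec{n}\in\N^d$ so that on $\{\vec{x}\geq\vec{n}\}$ the function $f$ decomposes into finitely many affine partial functions of the form $f(\vec{x})=\vec{v}_R\cdot\vec{x}+c_{R,\bara}$, where each piece is (a semilinear region $R$ cut out by strict threshold inequalities) intersected with $(\bara\mod p)$. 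Nondecreasingness forces each gradient $\vec{v}_R$ to be nonnegative.

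Next I would define candidate quilt-affine extensions indexed by the distinct gradients. For each gradient $\vec{v}$ occurring in the decomposition, the goal is a quilt-affine $g_{\vec{v}}(\vec{x})=\vec{v}\cdot\vec{x}+B_{\vec{v}}(\barx\mod p)$ that agrees with $f$ on every piece whose local gradient is $\vec{v}$. The main consistency claim is that $B_{\vec{v}}$ is well-defined: if two pieces $P_1,P_2$ with gradient $\vec{v}$ in the same congruence class had different constants $c_1<c_2$, I would extract an infinite family of point pairs $\vec{a}_i\in P_1$ and displacements $\vec{\Delta}_{ij}$ (multiples of $p$, pointing from $P_1$ into $P_2$) satisfying $f(\vec{a}_i+\vec{\Delta}_{ij})-f(\vec{a}_i)=\vec{v}\cdot\vec{\Delta}_{ij}+(c_2-c_1)>\vec{v}\cdot\vec{\Delta}_{ij}=f(\vec{a}_j+\vec{\Delta}_{ij})-f(\vec{a}_j)$, which feeds straight into Lemma~\ref{dickson contradiction lemma}.

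Then I would verify $f(\vec{x})=\min_{\vec{v}} g_{\vec{v}}(\vec{x})$ for $\vec{x}\geq\vec{n}$. The inequality $f\geq\min_{\vec{v}} g_{\vec{v}}$ is immediate: each such $\vec{x}$ lies in some piece with gradient $\vec{v}_R$, and $g_{\vec{v}_R}(\vec{x})=f(\vec{x})$ by construction. For $f\leq g_{\vec{v}}$ at each $\vec{x}\geq\vec{n}$, I again invoke Lemma~\ref{dickson contradiction lemma}: if $g_{\vec{v}}(\vec{y})<f(\vec{y})$ somewhere, pick a sequence $(\vec{a}_i)$ deep inside a $\vec{v}$-gradient piece in $\vec{y}$'s congruence class, and choose displacements $\vec{\Delta}_{ij}$ aimed at (period-translated copies of) $\vec{y}$; moving within the $\vec{v}$-piece $f$ grows only by $\vec{v}\cdot\vec{\Delta}_{ij}$, but by hypothesis $f$ near $\vec{y}$ exceeds the $\vec{v}$-prediction, giving the required contradiction.

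The hardest part will be Step~2: verifying well-definedness of the $g_{\vec{v}}$ and that each is genuinely nondecreasing per Definition~\ref{quilt-affine def}. Producing Dickson-contradiction sequences within a fixed mod class $\bara$ restricts the displacements $\vec{\Delta}_{ij}$ to multiples of $p$ lying in the asymptotic (recession) cones of the relevant pieces, so the proof will require a careful asymptotic analysis of each semilinear piece's cone structure to certify that enough such displacements exist. The \emph{under-determined regions} mentioned in the introduction---lower-dimensional pieces where several gradient candidates coincide---add further complication, since there one must verify that whichever coinciding quilt-affine extension is chosen still participates in the global minimum. I expect these combinatorial/geometric technicalities to drive most of the proof's length, while the Dickson-contradiction engine itself remains the central conceptual tool.
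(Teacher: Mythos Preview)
Your high-level plan---decompose $f$ via semilinearity into affine pieces on regions$\times$congruence-classes, then use Lemma~\ref{dickson contradiction lemma} to force the pieces into a $\min$ of quilt-affine functions---matches the paper's strategy, and your treatment of the full-dimensional (``determined'') regions is essentially what the paper does in Lemmas~\ref{determined region has unique extension} and~\ref{determined extension eventually dominates f}. However, your scheme of indexing the quilt-affine functions by the gradients $\vec{v}$ that \emph{occur in the decomposition} has a genuine gap at exactly the place you flag as hardest.

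For an under-determined region $U$ (recession cone of dimension $<d$), the gradient of $f$ restricted to $U$ is not uniquely determined, and your proposed fix---``verify that whichever coinciding quilt-affine extension is chosen still participates in the global minimum''---fails in general. Take the paper's own example from Section~\ref{sec:forward-direction-proof-outline}: $f(x_1,x_2)=x_1+1$ on $\{x_1<x_2\}$, $f=x_2+1$ on $\{x_1>x_2\}$, and $f=x_1$ on the diagonal $\{x_1=x_2\}$. The only gradients appearing in the determined pieces are $(1,0)$ and $(0,1)$, so your collection is $\{g_{(1,0)},g_{(0,1)}\}$. But $\min(g_{(1,0)},g_{(0,1)})=\min(x_1,x_2)+1$, which equals $x+1>x=f(x,x)$ on the diagonal. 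No quilt-affine function with gradient $(1,0)$ or $(0,1)$ can agree with $f$ on the diagonal while dominating $f$ off it (e.g.\ $g(\vec{x})=x_1$ matches on the diagonal but is $<f$ on $\{x_1<x_2\}$). The paper's resolution is to manufacture a \emph{new} gradient not present in any affine piece: it takes the average $\tfrac{1}{2}((1,0)+(0,1))=(\tfrac12,\tfrac12)$ of the determined neighbors' gradients and builds $g_U(\vec{x})=\lceil(x_1+x_2)/2\rceil$, possibly with an enlarged period $p^*$ (Lemma~\ref{strip extension by averaging}). This averaging construction, together with the companion case where all determined-neighbor gradients agree along some $\vec{z}\in W^\perp$ (Lemma~\ref{strip extension as top neighbor}), is the missing idea; your proposal gives no mechanism for producing gradients outside the original affine decomposition, so as stated it cannot reach the minimum on under-determined strips.
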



For the remainder of Section~\ref{Forward Direction},
we fix an obliviously-computable $f:\N^d\to\N$,
and Section~\ref{Forward Direction} is devoted to finding $g_1,\ldots,g_m$ and $\vec{n}$
satisfying Theorem~\ref{MC implies min}.

\subsection{Proof outline}
\label{sec:forward-direction-proof-outline}


\opt{full,sub}{
    \noindent {\bf Section \ref{subsec-sunshine-decomposition}.}
}
Since $f:\N^d\to\N$ is  obliviously-computable, $f$ is semilinear (recall Definition~\ref{semilinear function def}), and we first consider all threshold sets used to define the semilinear domains of the affine partial functions that define $f$. Each threshold set defines a hyperplane, and we use these hyperplanes to define \emph{regions} (see Fig.~\ref{fig:2d-regions} and Fig.~\ref{fig:3d-regions}). We consider regions as subsets of $\R^d_{\geq 0}$, 
so they are convex polyhedra with useful geometric properties.\footnote{What we consider is a restricted case of a \emph{hyperplane arrangement} \cite{StanleyHyperplaneArrangments}, with well-studied combinatorial properties.}

The regions partition
\footnote{Without loss of generality, we
assume that
the hyperplanes
do not intersect $\N^d$, so that the partition is well-defined
(see Fig.~\ref{fig:2d-regions}).}
 the points in the domain $\N^d$.
To prove Theorem \ref{MC implies min}, for each region $R_k$ we will identify a quilt-affine function $g_k$ (the \emph{extension} of $f$ from region $R$) such that $g(\vec{x})=f(\vec{x})$ for all integer $\vec{x}\in R$. To ensure $f=\min_k(g_k)$, we further require that these quilt-affine extensions \emph{eventually dominate} $f$ (each $g_k(\vec{x})\geq f(\vec{x})$ for sufficiently large $\vec{x}$). Also, because we only care about sufficiently large $\vec{x}$, we need only consider \emph{eventual} regions which are unbounded in all inputs (for example regions 3,4, and 5 in Fig.~\ref{fig:2d-regions}).

\opt{final}{
\begin{figure}[t]
    \centering
    \begin{subfigure}[t]{0.21\textwidth}
        \centering
        \includegraphics[width=\textwidth]{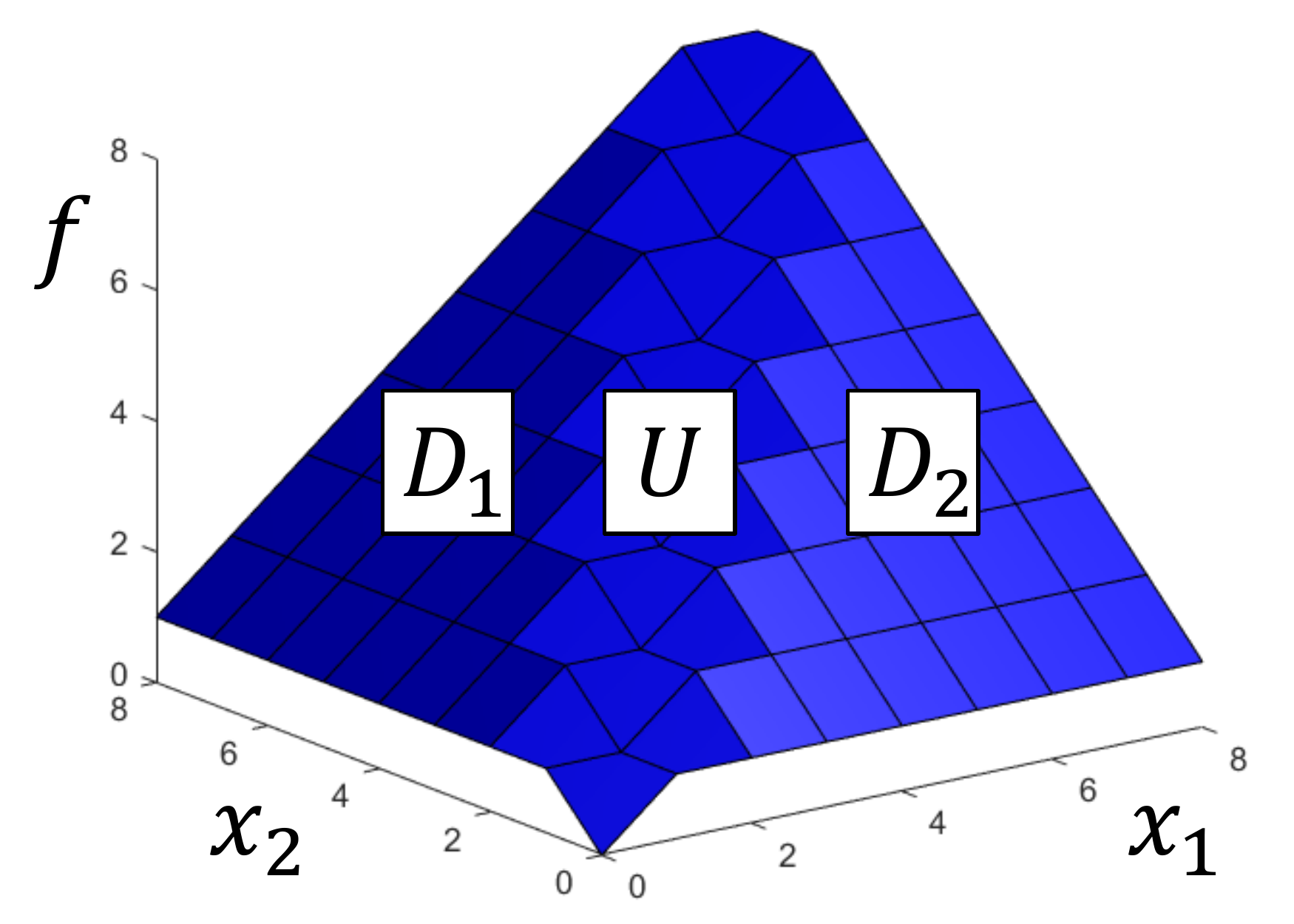}
        \caption{ \footnotesize
            Plot of $f$, whose domain has 3 regions: $D_1$, $D_2$, and $U$.
        }
        \label{fig:house-needs-roof}
    \end{subfigure}
    \qquad
    \begin{subfigure}[t]{0.21\textwidth}
        \centering
        \includegraphics[width=\textwidth]{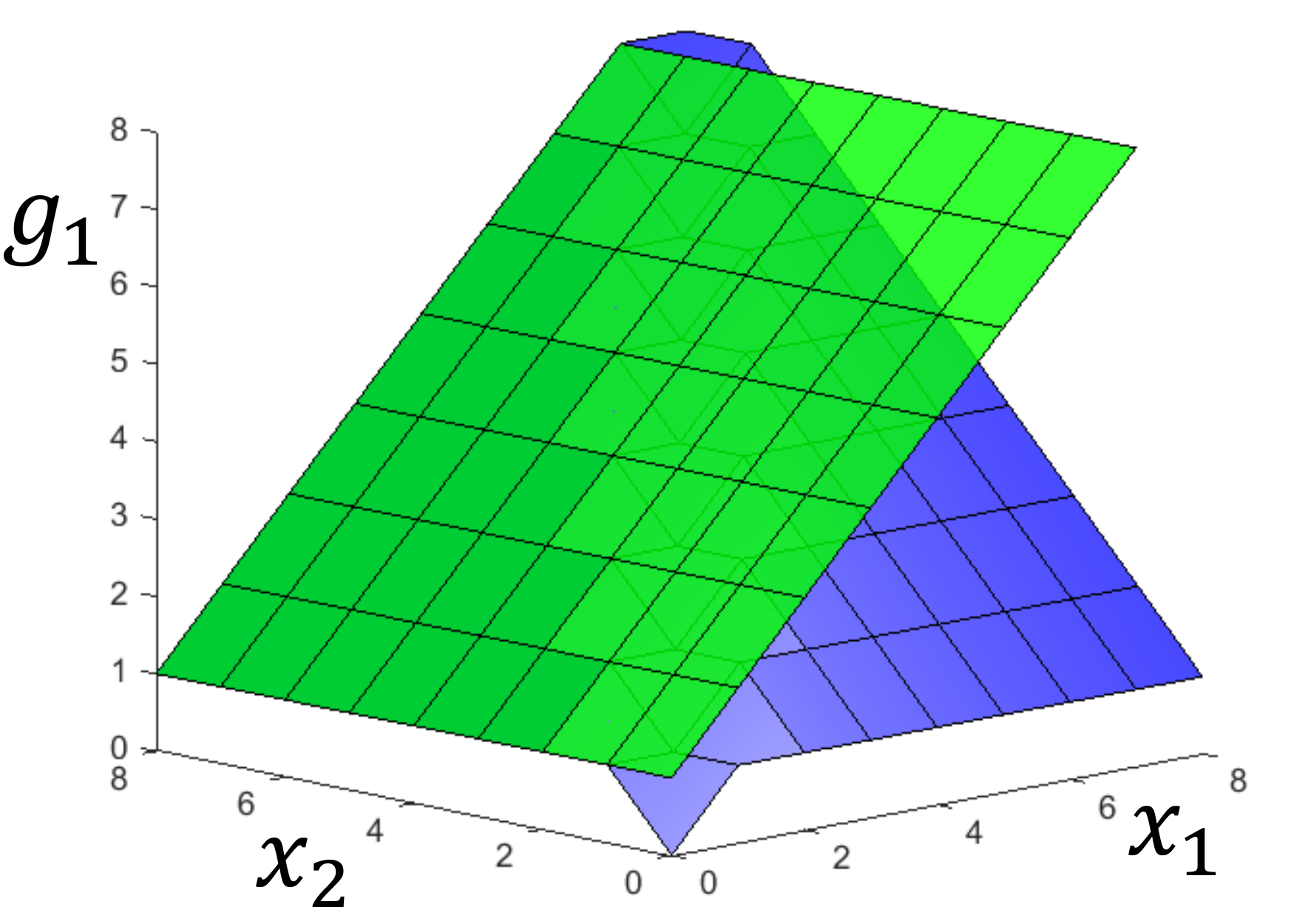}
        \caption{ \footnotesize
            $g_1$ (green) is the unique quilt-affine extension from region $D_1$.
        }
        \label{fig:wall-on-house}
    \end{subfigure}
    
    \vspace{0.5cm}
    \begin{subfigure}[t]{0.21\textwidth}
        \centering
        \includegraphics[width=\textwidth]{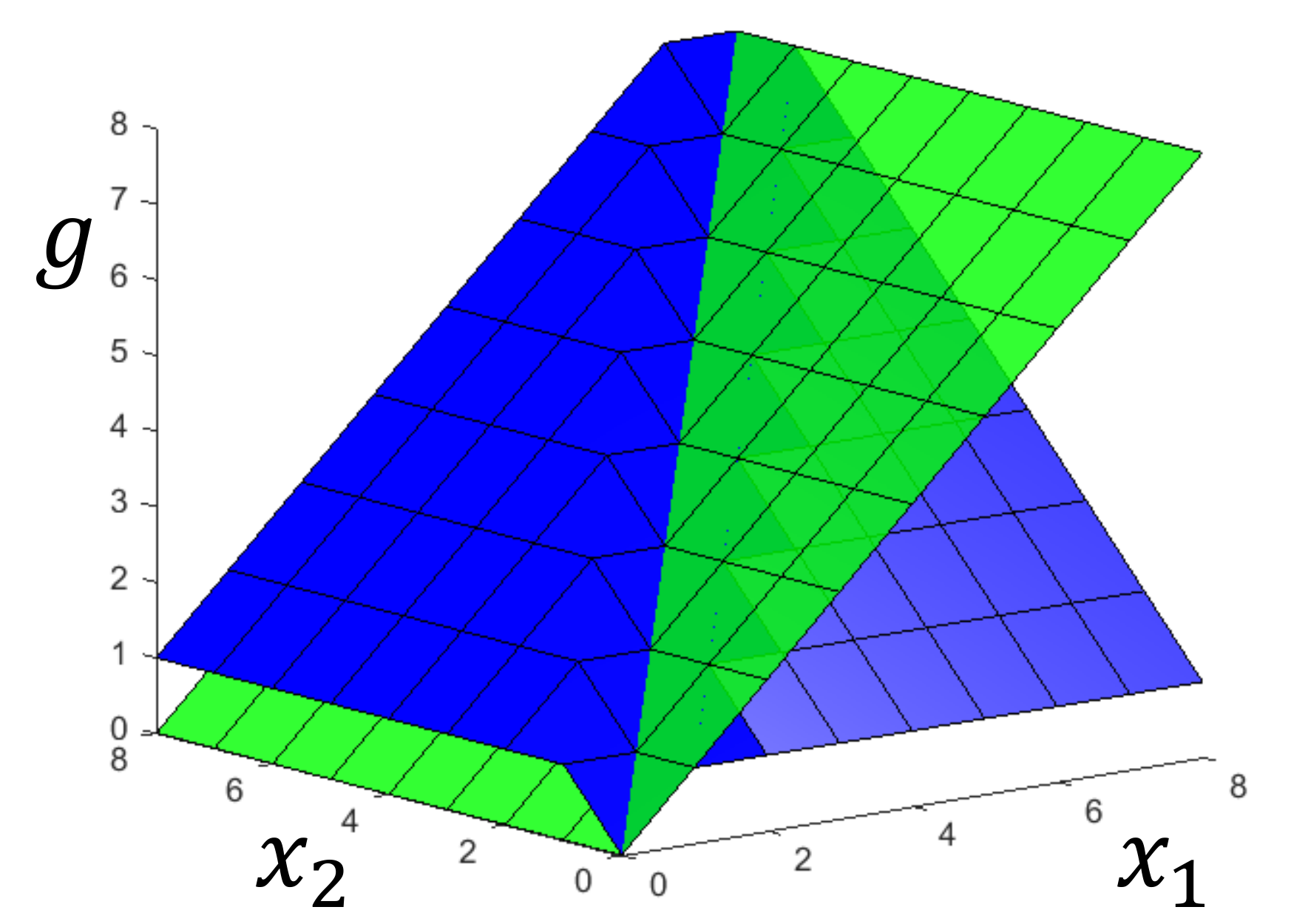}
        \caption{ \footnotesize
            $g$ (green) is a quilt-affine extension from $U$, but $g<f$ on $D_1$.
        }
        \label{fig:bad-roof-on-house}
    \end{subfigure}
        \qquad
    \begin{subfigure}[t]{0.21\textwidth}
        \centering
        \includegraphics[width=\textwidth]{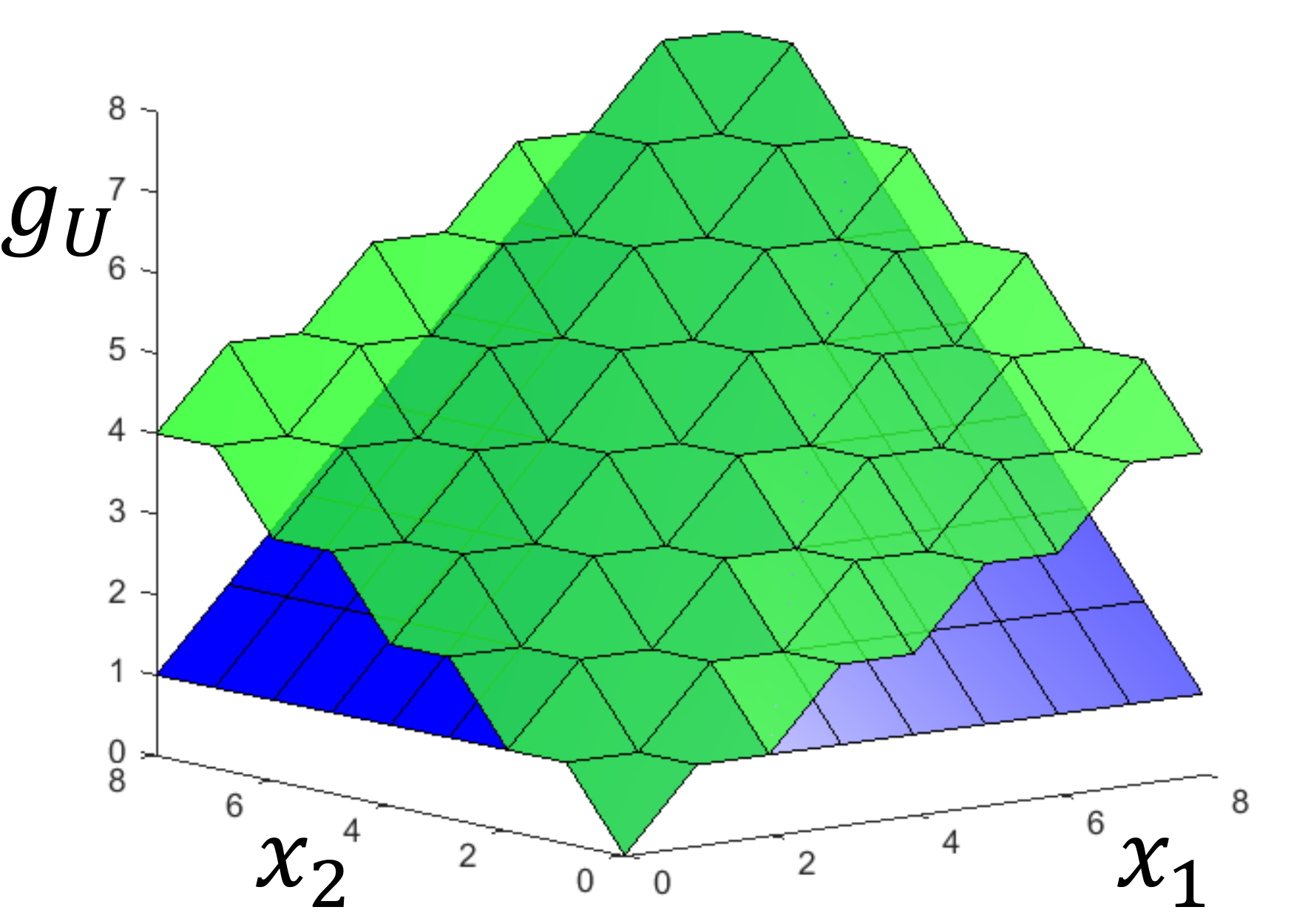}
        \caption{ \footnotesize
            $g_U$ (green) is a quilt-affine extension from $U$, and $g_U\geq f$.
        }
        \label{fig:roof-on-house}
    \end{subfigure}
    \caption{ \footnotesize
        Obliviously-computable $f$ can be expressed as a min of quilt-affine functions.
        \todoi{make this bigger in full version}
    }
\end{figure}
}

\opt{full}{
\begin{figure}[t]
    \centering
    \begin{subfigure}[t]{0.4\textwidth}
        \centering
        \includegraphics[width=\textwidth]{figures/roof_fig1.pdf}
        \caption{ \footnotesize
            Plot of $f$, whose domain has 3 regions: $D_1$, $D_2$, and $U$.
        }
        \label{fig:house-needs-roof}
    \end{subfigure}
    \qquad
    \begin{subfigure}[t]{0.4\textwidth}
        \centering
        \includegraphics[width=\textwidth]{figures/roof_fig2.pdf}
        \caption{ \footnotesize
            $g_1$ (green) is the unique quilt-affine extension from region $D_1$.
        }
        \label{fig:wall-on-house}
    \end{subfigure}
    
    \vspace{0.5cm}
    \begin{subfigure}[t]{0.4\textwidth}
        \centering
        \includegraphics[width=\textwidth]{figures/roof_fig3.pdf}
        \caption{ \footnotesize
            $g$ (green) is a quilt-affine extension from $U$, but $g<f$ on $D_1$.
        }
        \label{fig:bad-roof-on-house}
    \end{subfigure}
        \qquad
    \begin{subfigure}[t]{0.4\textwidth}
        \centering
        \includegraphics[width=\textwidth]{figures/roof_fig4.pdf}
        \caption{ \footnotesize
            $g_U$ (green) is a quilt-affine extension from $U$, and $g_U\geq f$.
        }
        \label{fig:roof-on-house}
    \end{subfigure}
    \caption{ \footnotesize
        Obliviously-computable $f$ can be expressed as a min of quilt-affine functions.
    }
\end{figure}
}

\begin{figure*}[t!]
    \centering
    \begin{subfigure}[t]{0.45\textwidth}
        \centering
        \includegraphics[width=0.7\textwidth]{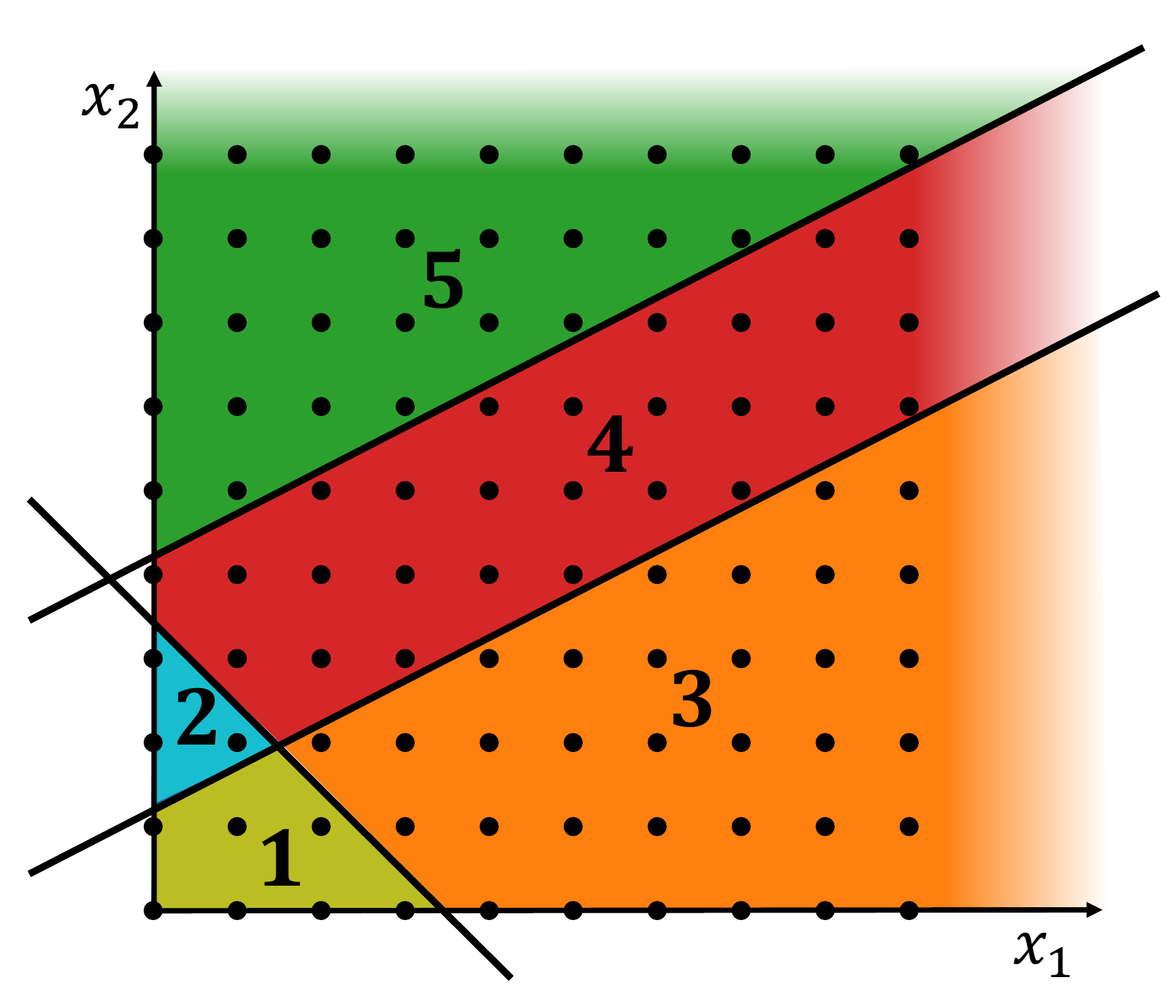}
        \caption{ \footnotesize
            Three threshold hyperplanes creating five regions. Regions 3 and 5 are determined, region 4 is under-determined but still eventual (unbounded in all input).
        }
        \label{fig:2d-regions}
    \end{subfigure}%
    \qquad
    \begin{subfigure}[t]{0.45\textwidth}
        \centering
        \includegraphics[width=0.7\textwidth]{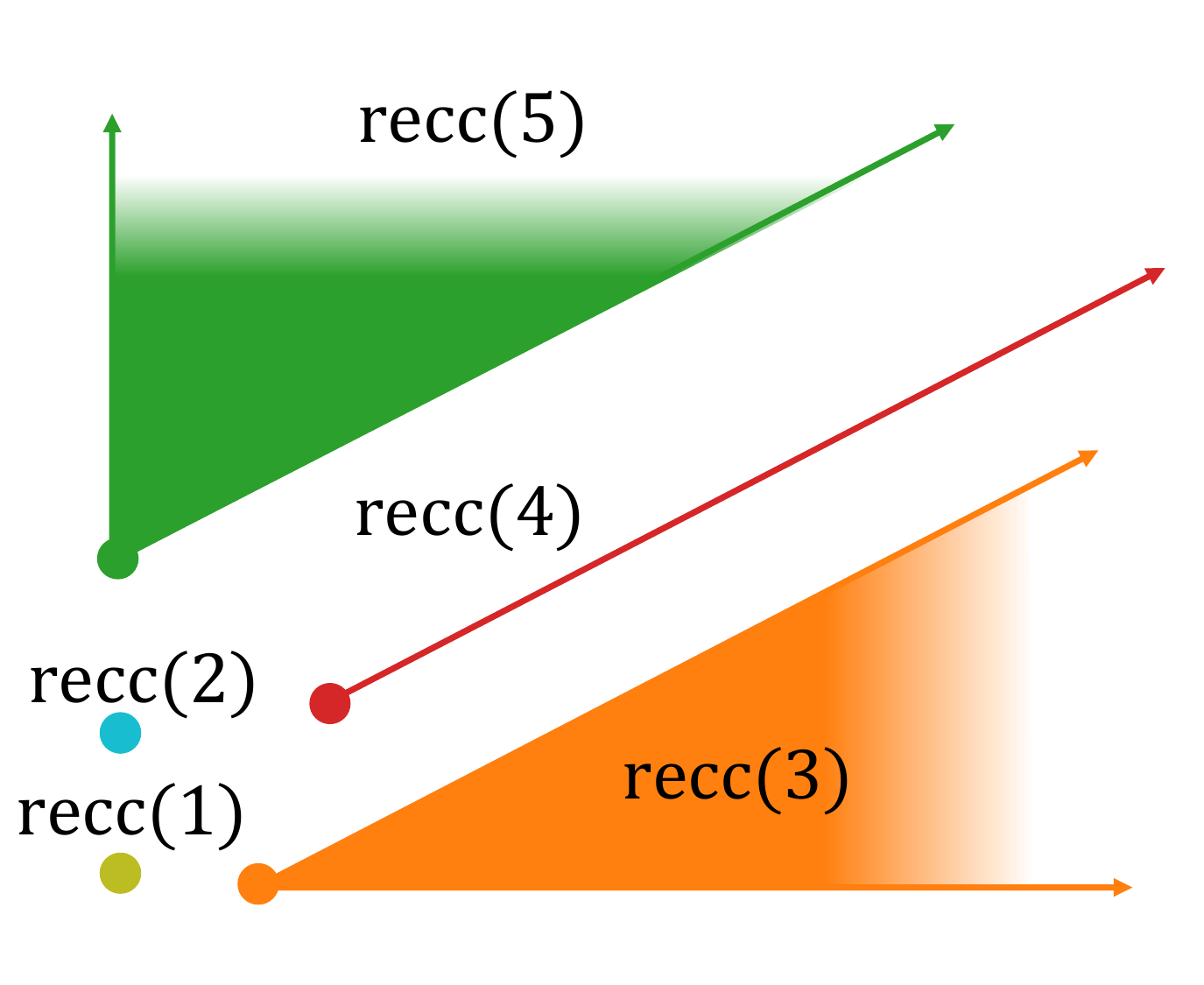}
        \caption{ \footnotesize
            The recession cones of all five regions. For finite regions, $\rec(1)=\rec(2)=\{\vec{0}\}$. Under-determined region 4 has a 1D recession cone, determined regions 3 and 5 have 2D recession cones.
        }
        \label{fig:2d-recession-cones}
    \end{subfigure}%
    \qquad
    \begin{subfigure}[t]{0.45\textwidth}
        \centering
        \includegraphics[width=0.9\textwidth]{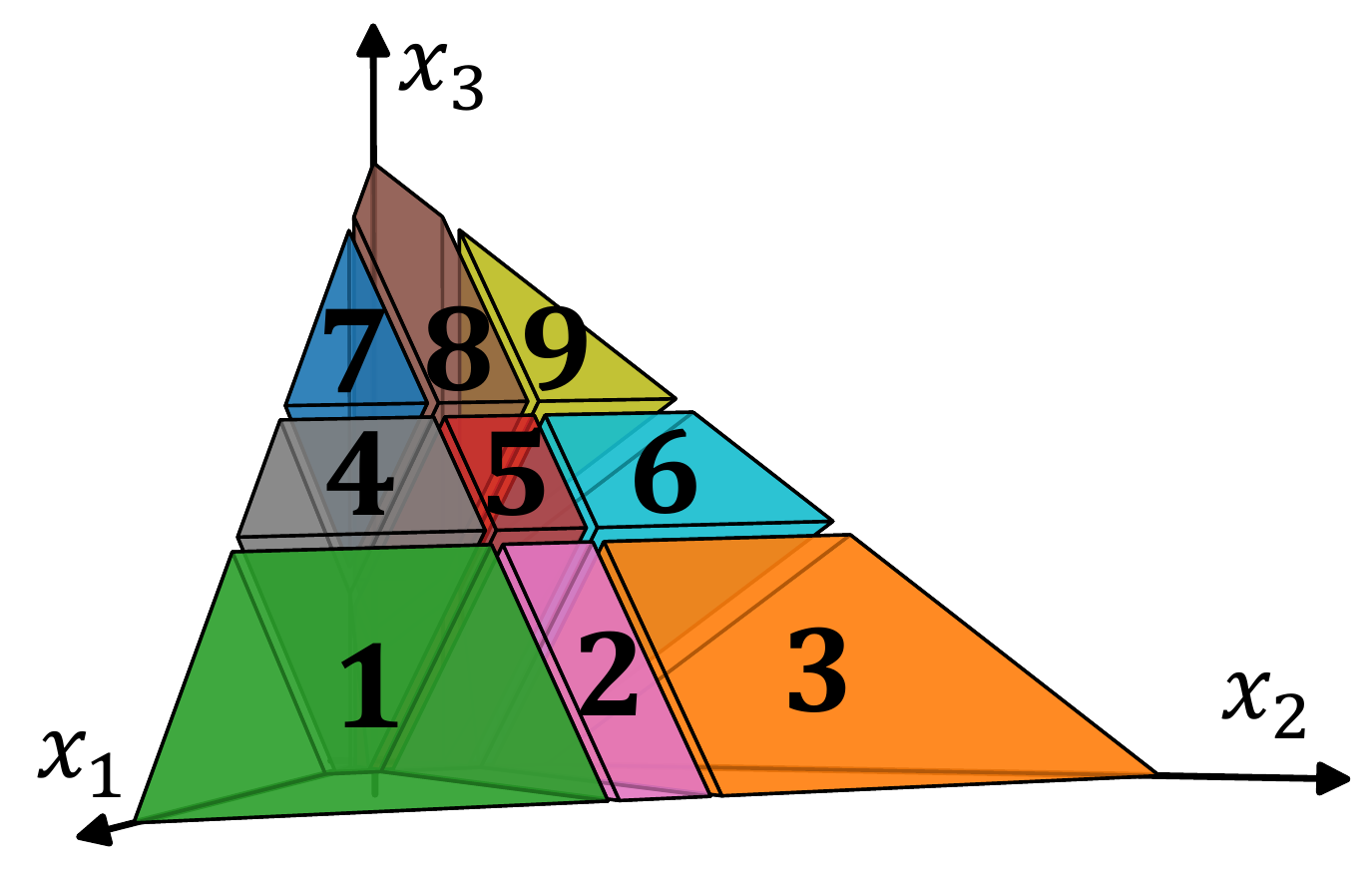}
        \caption{ \footnotesize
            Two pairs of parallel threshold hyperplanes creating nine eventual regions. Regions 1,3,7,9 are determined. Region 5 is under-determined with 1D recession cone. Regions 2,4,6,8 are under-determined with 2D recession cones.
        }
        \label{fig:3d-regions}
    \end{subfigure}
        \qquad
    \begin{subfigure}[t]{0.45\textwidth}
        \centering
        \includegraphics[width=0.9\textwidth]{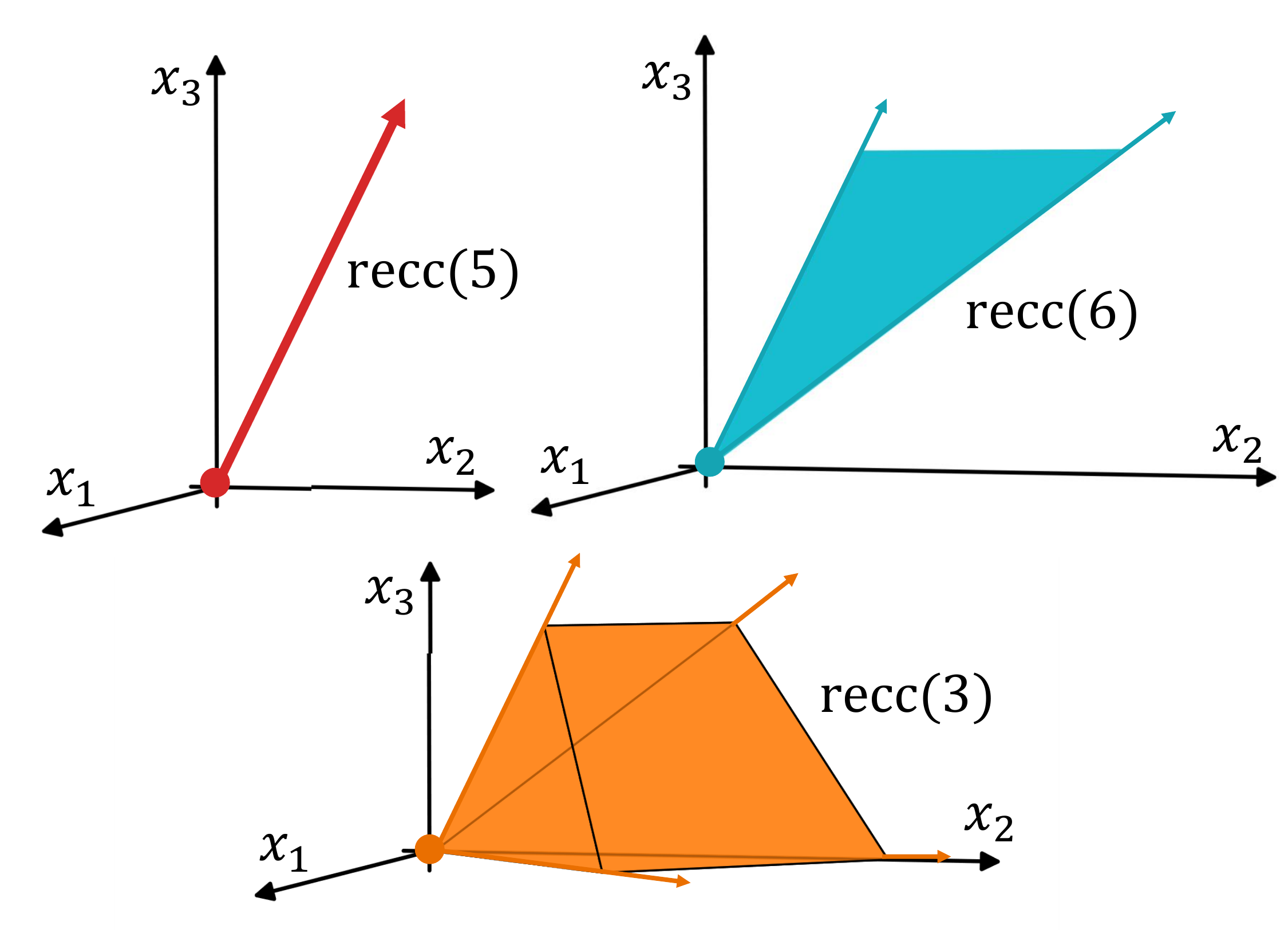}
        \caption{ \footnotesize
            $\rec(5)\subseteq\rec(6)\subseteq\rec(3)$ so region $3$ is a determined neighbor of under-determined region 5 and under-determined region 6. Also, region 6 is a neighbor of region 5.
        }
        \label{fig:3d-recession-cones}
    \end{subfigure}
    \caption{ \footnotesize
        Examples with domains $\N^2$ (top) and $\N^3$ (bottom), with threshold hyperplanes giving regions (left), which are classified by their recession cones (right).
    }
\end{figure*}

As a simple motivating example, consider the semilinear, nondecreasing function
\[
    f(x_1,x_2)=\begin{cases}
    x_1 + 1, & \text{if $x_1<x_2$ (region $D_1$)}\\
    x_2 + 1, & \text{if $x_1>x_2$ (region $D_2$)}\\
    x_1 & \text{if $x_1=x_2$ (region $U$)}
    \end{cases}
\]
As in Definition~\ref{semilinear function def}, $f$ is piecewise-affine, with semilinear domains that happen to be only defined by threshold sets. These thresholds then partition the domain into three regions: $D_1$, $D_2$, and $U$ (see Fig.~\ref{fig:house-needs-roof}). For region $D_1$, there is a unique quilt-affine extension $g_1(x_1,x_2)=x_1 + 1$ (note an affine function is the special case of a quilt-affine function with period $1$). Also, $g_1$ eventually dominates $f$ as desired, since $g_1(\vec{x})\geq f(\vec{x})$ for all $\vec{x}\in\N^2$ (see Fig.~\ref{fig:wall-on-house}).
By symmetry, we have the same for region $D_2$ and its extension $g_2(x_1,x_2)=x_2+1$.

These desirable properties follow from $D_1$ and $D_2$ being ``wide'' regions that we define to be \emph{determined} (formalized later). On the other hand, $U$ is a ``narrow'' region that is \emph{under-determined}. As a result, there is not a unique quilt-affine extension from $U$. For example, $g(x_1,x_2)=x_1$ is a quilt-affine extension, however, we do not have $g(\vec{x})\geq f(\vec{x})$ for all sufficiently large $\vec{x}$ (see Fig.~\ref{fig:bad-roof-on-house}).

In order to identify a quilt-affine extension from $U$ that does eventually dominate $f$, we will refer to the unique extensions $g_1$ and $g_2$ from regions $D_1$ and $D_2$, which are \emph{neighbors} of $U$ (formalized later). We can construct a quilt-affine function with a gradient $(\frac{1}{2},\frac{1}{2})$ that is the average of the gradients $(1,0)$ of $g_1$ and $(0,1)$ of $g_2$. In particular, we can let $g_U(x_1,x_2)=\ceil{\frac{x_1+x_2}{2}}$ (note this is a quilt-affine function with period 2, see Fig.~\ref{fig:roof-on-house}). 
We then have $f(\vec{x})=\min\qty[g_1(\vec{x}),g_2(\vec{x}),g_U(\vec{x})]$ for all $\vec{x}\geq\vec{n}=\vec{0}$ as guaranteed by Theorem~\ref{MC implies min}.

We now describe how we formalize the notion of a \emph{determined region}, \emph{under-determined region}, and \emph{neighbor}, for the general case of domain $\N^d$, where the regions are convex polyhedra in $\R^d$.

\opt{full,sub}{
    \noindent {\bf Section \ref{subsec-determined-regions}.}
}
To formally define determined regions, we identify the \emph{recession cone} $\rec(R) \subseteq \R^d$ of each region $R$: 
the set of vectors along infinite rays in $R$~\cite{RockafellerConvexAnalysis}
(see Fig.~\ref{fig:2d-recession-cones} and Fig.~\ref{fig:3d-recession-cones}).
A determined region $D$ is defined as having a $d$-dimensional recession cone (see regions 3 and 5 in Fig.~\ref{fig:2d-regions} and regions 1,3,7,9 in Fig.~\ref{fig:3d-regions}). For determined regions, we can prove
\opt{full}{ (see Lemmas \ref{determined region has unique extension} and \ref{determined extension eventually dominates f})}
 there is a unique quilt-affine extension, which eventually dominates $f$.


\opt{full,sub}{
    \noindent {\bf Section \ref{subsec-under-determined-regions}.}
}
Under-determined regions are then defined as having a recession cone with dimension $<d$ (see regions 1,2,4 in Fig.~\ref{fig:2d-regions} and regions 2,4,5,6,8 in Fig.~\ref{fig:3d-regions}). 
The above arguments do not work for under-determined regions. Instead, identify the \emph{neighbors} of an under-determined region $U$ as regions $R$ with $\rec(U)\subseteq\rec(R)$ (see Fig.~\ref{fig:2d-recession-cones} and Fig.~\ref{fig:3d-recession-cones}). 
We consider the neighbors of $U$ that are determined regions.
The possible behavior of $f$ on $U$ is constrained by the unique extensions from these regions, and we can define an extension from $U$ based on an averaging process.
\opt{final}{Formal definitions and a proof of Theorem~\ref{MC implies min} appear in \cite{severson2019composable}}.
\opt{full}{ (See Lemma \ref{strip extension by averaging}).}

\opt{full}{
\subsection{Domain Decomposition}
\label{subsec-sunshine-decomposition}

To identify the quilt-affine components $g_k$, the strategy will be to partition the domain $\N^d$ into regions where the restriction of $f$ to that region yields a quilt-affine function. 

By Lemma $\ref{Semilinear Are Stably Computable Lemma}$, $f$ is semilinear, so by Definition \ref{semilinear function def}, $f$ is the union of affine partial functions, whose disjoint domains are semilinear subsets of $\N^d$. This representation is not unique, so we now fix some arbitrary such representation of $f$. Recall by Definition \ref{semilinear set def}, these semilinear domains are finite Boolean combinations of threshold and mod sets, so consider the collection $\mathcal{T}$ of all threshold sets and the collection $\mathcal{M}$ of all mod sets that defined any of these semilinear domains.

Let $\mathcal{T}$ consist of $l$ threshold sets $\{x\in\N^d:\vec{t}_i\cdot\vec{x}\geq h_i\}$ for each $i=1,\ldots,l$, where $\vec{t}_i\in\Z^d$ and $h_i\in\Z$. These thresholds are equivalently written $2\vec{t}_i\cdot\vec{x}>2h_i-1$ (since $\vec{t}_i\cdot\vec{x}\geq h_i \iff \vec{t}_i\cdot\vec{x} > h_i - \frac{1}{2}$),
so we can assume without loss of generality that the boundary hyperplanes $H_i=\{\vec{x}\in\R^d:\vec{t}_i\cdot\vec{x}=h_i\}$ contain no integer points (see Fig.~\ref{fig:2d-regions}). These hyperplanes then partition the domain $\N^d$. For some $\vec{y}\in\N^d$, let $s_i=\mathrm{sign}(\vec{t}_i\cdot\vec{y}-h_i)=\pm1$ for each $i=1,\ldots,l$. Defining the threshold matrix $T\in\Z^{l\times k}$, offset vector $\vec{h}\in\Z^l$, and diagonal sign matrix $S\in\Z^{l\times l}$ as
$$T=\begin{pmatrix}
\vec{t}_1^\mathsf{T}\\
\vdots\\
\vec{t}_l^\mathsf{T}
\end{pmatrix}
\quad
\vec{h}=
\begin{pmatrix}
h_1\\
\vdots\\
h_l
\end{pmatrix}
\quad
S=
\begin{pmatrix}
s_1 & 0 & \ldots & 0 \\
0 & s_2 & \ldots & 0 \\
\vdots & \vdots & \ddots & \vdots\\
0 & 0 & \ldots & s_l
\end{pmatrix}
$$
then $S(T\vec{y}-\vec{h})\geq\vec{0}$. This concise form will let us define the region of points that are in precisely the same threshold sets as $\vec{y}$
(those that agree on the signs of the components of $T\vec{y}-\vec{h}$):

\begin{definition}
\label{def-region}
Let $S$ be a sign matrix: a diagonal matrix with diagonal entries $=\pm1$. Then the \emph{region (induced by $S$)} is defined as
\[
    R=\{\vec{x}\in\R^d_{\geq 0}:S(T\vec{x}-\vec{h})\geq\vec{0}\}.
\]
When referring to a region $R$, we use $S_R$ to denote the sign matrix that induces $R$.
\end{definition}

We consider nonnegative real vectors $\vec{x}\in\R^d$ rather than just $\vec{x}\in\N^d$, but we are only truly concerned with the integer points $R\cap\N^d$, and only consider regions where $R\cap\N^d\neq\emptyset$. Also, since each $\vec{y}\in\N^d$ induces a unique sign matrix as shown above, it follows that every $\vec{y}\in\N^d$ is contained in some unique region. The reason to consider $R\subset\R^d$ is that each region $R$ is a convex polyhedron, with convenient properties from convex geometry (see Fig.~\ref{fig:2d-regions} and Fig.~\ref{fig:3d-regions}).

Now consider the collection $\mathcal{M}$, consisting of $m$ mod sets $\{\vec{x}\in\N^d:\vec{a_i}\cdot\vec{x}\equiv b_i\mod c_i\}$ for each $i=1,\dots,m$, where $\vec{a_i}\in\Z^d,b_i\in\Z,c_i\in\N_+$. Then let the \emph{global period} $p$ be the least common multiple $\mathrm{lcm}_i(c_i)$, so all elements of a congruence class $\bara\in\Z^d/p\Z^d$ are contained in precisely the same mod sets. Thus for a region $R$, the set $R\cap\bara$ is contained in precisely the same threshold and mod sets, so the restriction $f|_{R\cap\bara}$ is an affine partial function.

We now summarize this decomposition as a characterization of a semilinear function.
Note that this applies to all semilinear functions, even those that are decreasing.

\begin{lemma}
\label{sunshine lemma}
Let $f:\N^d\to\N$ be a semilinear function. Then there exist a finite set of \emph{regions} $R_1,\ldots,R_n \subseteq \R^d$ and a \emph{global period} $p\in\N_+$ such that for each region $R_i$ and congruence class $\bara\in\Z^d/p\Z^d$, there exist $\nabla_{R_i,\bara}\in\Q^d$ and $b_{R_i,\bara}\in\Q$ such that the restriction of $f$
defined for all $\vec{x} \in R_i \cap \bara$ by
\[
f|_{R_i\cap\bara}(\vec{x})=\nabla_{R_i,\bara}\cdot\vec{x}+b_{R_i,\bara}
\]
is a (rational) affine partial function.
\end{lemma}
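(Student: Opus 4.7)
The plan is to deduce the lemma directly from the semilinear structure of $f$ using the sign-pattern construction already outlined in the text before the statement. First, I would invoke Definition~\ref{semilinear function def} to fix a representation $f=\bigsqcup_k f_k$ in which each $f_k$ is a rational affine partial function on a disjoint semilinear domain $D_k\subseteq\N^d$, and then use Definition~\ref{semilinear set def} to unfold each $D_k$ as a Boolean combination of finitely many threshold sets $\{\vec{x}:\vec{t}\cdot\vec{x}\geq h\}$ and mod sets $\{\vec{x}:\vec{a}\cdot\vec{x}\equiv b\pmod{c}\}$. Collect all such threshold sets into $\mathcal{T}$ and all mod sets into $\mathcal{M}$; these are finite because the original representation uses only finitely many predicates.

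Next I would set up the regions and the period. Applying the rewrite $\vec{t}\cdot\vec{x}\geq h \iff 2\vec{t}\cdot\vec{x}\geq 2h-1$ to each member of $\mathcal{T}$, I may assume that no hyperplane $H_i=\{\vec{x}\in\R^d:\vec{t}_i\cdot\vec{x}=h_i\}$ contains an integer point, so that each $\vec{y}\in\N^d$ has a well-defined sign pattern $s_i(\vec{y})=\mathrm{sign}(\vec{t}_i\cdot\vec{y}-h_i)\in\{-1,+1\}$. For each diagonal sign matrix $S\in\{-1,+1\}^{l\times l}$ let $R_S\subseteq\R^d_{\geq 0}$ be the convex polyhedron of Definition~\ref{def-region}; taking only the $R_S$ with $R_S\cap\N^d\neq\emptyset$ gives a finite list of regions $R_1,\ldots,R_n$ whose integer points partition $\N^d$. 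For the mod sets I set $p:=\mathrm{lcm}_i(c_i)$ over all $c_i$ appearing in $\mathcal{M}$, so that membership in every mod set depends only on the class $\barx\in\Z^d/p\Z^d$.

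Finally, I would verify affinity on each $R_i\cap\bara$. Every integer point of $R_i\cap\bara$ has the same threshold sign pattern (by definition of $R_i$) and the same residues modulo every $c_i$ (by the choice of $p$), so all such points satisfy identical truth values for every predicate used in the Boolean description of every $D_k$. Hence they lie in a single $D_k$, and the restriction $f|_{R_i\cap\bara}$ coincides with the rational affine $f_k$, yielding the desired $\nabla_{R_i,\bara}\in\Q^d$ and $b_{R_i,\bara}\in\Q$. I do not anticipate a serious obstacle: the content is combinatorial bookkeeping, and the only mildly subtle point is the doubling trick that clears integer points off the threshold hyperplanes, which is what guarantees that the sign-pattern construction gives a well-defined partition of $\N^d$ rather than assigning boundary lattice points ambiguously to multiple closed regions.
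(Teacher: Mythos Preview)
Your proposal is correct and follows essentially the same argument as the paper: fix a semilinear representation of $f$, collect all threshold and mod predicates, use the doubling trick to push integer points off the threshold hyperplanes, define regions by sign patterns and the global period as the lcm of the moduli, and observe that on each $R_i\cap\bara$ all predicates have constant truth value so $f$ agrees with a single affine piece. The paper does not present a separate formal proof beyond the discussion preceding the lemma, and your write-up matches that discussion point for point, including the identification of the doubling trick as the one nontrivial detail.
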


Notice the similarity in form between these affine partial functions in Lemma~\ref{sunshine lemma} and Definition~\ref{quilt-affine def} of quilt-affine functions. In fact, we will show the behavior of $f$ on some regions has a unique quilt-affine structure. This will require the region to be ``infinitely wide in all directions'', a notation we now make precise to define these determined regions.

}
\opt{full}{
\subsection{Determined Regions}
\label{subsec-determined-regions}
To formally define determined regions, we must first make the connection between regions and their \textit{recession cones} (more information on recession cones can be found in \cite{RockafellerConvexAnalysis}).

\begin{definition}
\label{def:recession-cone}
For a region $R$, the \emph{recession cone} of $R$ is
\[
    \rec(R)=\{\vec{y}\in\R^d:\vec{x}+\lambda\vec{y}\in R \text{ for all }\vec{x}\in R,\lambda\in\R_{\geq0}\}.
\]
\end{definition}

The recession cone corresponds to the directions ($\vec{y}$) one can proceed infinitely in a region $R$, and is a convex polyhedral cone (recall that a subset of $\R^d$ is a \emph{cone} if it is closed under positive scalar multiplication) (see Fig.~\ref{fig:2d-recession-cones} and Fig.~\ref{fig:3d-recession-cones}).

Recall by Definition \ref{def-region} a region $R=\{\vec{x}\in\R^d_{\geq 0}:S(T\vec{x}-\vec{h})\geq\vec{0}\}$. It is possible to equivalently define
\[
    \rec(R)=\{\vec{y}\in\R^d_{\geq 0}:S_RT\vec{y}\geq \vec{0}\}.
\]
In other words, the recession cone is defined by the homogenized version of the same inequalities that defined $R$. One can easily verify this is equivalent to Definition \ref{def:recession-cone}.

We then say a region $R$ is \emph{determined} if $\dim \rec(R)=d$. Otherwise, the region $R$ is \emph{under-determined}. We now make precise the idea that a determined region with full-dimensional recession cone is ``infinitely wide in all directions'':

\begin{lemma}
\label{determined region has large balls}
Let $D$ be a determined region. Then the recession cone $\rec(D)$ contains open balls of arbitrarily large radius.
\end{lemma}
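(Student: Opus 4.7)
The plan is to exploit the cone structure of $\rec(D)$ together with full-dimensionality. Since $\rec(D)$ is a convex polyhedral subset of $\R^d$ with $\dim \rec(D) = d$, its interior (as a subset of $\R^d$) is non-empty: a $d$-dimensional convex set necessarily contains a relatively open set of full dimension, which in $\R^d$ is just an open set. So I can pick some interior point $\vec{c} \in \rec(D)$ and an initial radius $r > 0$ with $B(\vec{c}, r) \subseteq \rec(D)$.

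The key observation is then that $\rec(D)$ is closed under nonnegative scalar multiplication. This is immediate from the homogeneous defining inequalities $S_R T \vec{y} \geq \vec{0}$ and $\vec{y} \geq \vec{0}$: if $\vec{y}$ satisfies them, then so does $\lambda \vec{y}$ for every $\lambda \geq 0$. Consequently, for any $\lambda > 0$, the scaled ball $\lambda B(\vec{c}, r) = B(\lambda \vec{c}, \lambda r)$ lies inside $\rec(D)$. Given any target radius $M > 0$, choosing $\lambda = M/r$ yields an open ball of radius $M$ contained in $\rec(D)$, proving the claim.

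The only subtlety worth checking is the existence of the initial interior point, i.e., that a $d$-dimensional convex polyhedron in $\R^d$ really does contain an open ball. This is a standard fact from convex geometry: the affine hull of a $d$-dimensional convex set in $\R^d$ is all of $\R^d$, so its relative interior coincides with its topological interior, which is non-empty. I expect this to be the only step requiring explicit justification, and it is essentially a citation to \cite{RockafellerConvexAnalysis}. No obstacle of substance; the lemma is really just the combination of ``full-dimensional convex set has interior'' with ``cones scale arbitrarily large sets to larger sets.''
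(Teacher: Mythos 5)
Your proof is correct and follows essentially the same route as the paper's: full-dimensionality of the convex polyhedron $\rec(D)$ gives an interior point and hence a small ball, and closure under positive scalar multiplication (a direct consequence of the homogeneous defining inequalities) scales that ball to arbitrary radius.
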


\begin{proof}
Since $\rec(D)$ is a $d$-dimensional convex polyhedron, it has nonempty interior, so there exists some $\vec{x}\in \mathrm{int}(\rec(D))$ and an open ball $B_\epsilon(\vec{x})\subset\rec(D)$ of radius $\epsilon$ around $\vec{x}$ contained in the cone. 
Since recession cones are closed under positive scalar multiplication,
for any positive scalar $c$, the ball $B_{c\epsilon}(c\vec{x})\subset\rec(D)$.
\end{proof}

We next make precise the idea that the function has a unique quilt-affine structure on a determined region.

\begin{definition}
An \emph{extension} $g:\N^d\to\Z$ (of $f$) from a region $R$ is a quilt-affine function that agrees with $f$ on $R$: 
$f(\vec{x})=g(\vec{x})$ for all integer $\vec{x}\in R\cap\N^d$.
\end{definition}

We will now show there is a unique extension from each determined region.
The construction of the regions yields a periodic piecewise-affine structure $f$ restricted to a region (Lemma \ref{sunshine lemma}). In order for $f$ to be nondecreasing, these affine gradients must all agree, which will let us uniquely describe a quilt-affine extension $g$ using Definition \ref{quilt-affine def}.

\begin{lemma}
\label{determined region has unique extension}
There is a unique extension $g:\N^d\to\Z$ from any determined region $D$.
\end{lemma}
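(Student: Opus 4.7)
The plan is to reduce the problem to showing that a single gradient works for every congruence class, and then to simply assemble a quilt-affine function from the already-known affine pieces.

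First, by Lemma \ref{sunshine lemma}, on each intersection $D \cap \bara$ the restriction of $f$ is affine, say $f(\vec{x}) = \nabla_{D,\bara}\cdot\vec{x} + b_{D,\bara}$. The candidate extension is $g(\vec{x}) := \vec{\nabla}_g \cdot \vec{x} + B(\bara)$, and to make $g$ agree with $f$ on $D \cap \N^d$ I must take $B(\bara) := b_{D,\bara}$ and $\vec{\nabla}_g := \nabla_{D,\bara}$ for every $\bara$ simultaneously. Thus the whole construction hinges on the claim that all the $\nabla_{D,\bara}$ coincide.

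For that claim I would exploit that $\rec(D)$ is $d$-dimensional. Pick rational $\vec{u}_1,\ldots,\vec{u}_d \in \rec(D)$ forming a basis of $\R^d$ and scaled so each $pN\vec{u}_\ell \in \N^d$. By Lemma \ref{determined region has large balls}, for any $\bara$ and any $i$ I can find $\vec{x} \in D\cap\bara\cap\N^d$ so deep inside $D$ that $\vec{x}+\vec{e}_i$ and all translates $\vec{x} + M\cdot pN\vec{u}_\ell$ (for $M \in \N$) stay in $D$, and the translation $pN\vec{u}_\ell$ preserves the congruence class. Nondecreasing $f$ gives
\[
f(\vec{x} + M\cdot pN\vec{u}_\ell + \vec{e}_i) \;\geq\; f(\vec{x} + M\cdot pN\vec{u}_\ell),
\]
and expanding via the affine formulas on the two classes $\bara$ and $\bara+\vec{e}_i$ reduces this to
\[
(\nabla_{D,\bara+\vec{e}_i} - \nabla_{D,\bara})\cdot M\, pN \vec{u}_\ell \;+\; \text{const} \;\geq\; 0
\]
for all $M$. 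Letting $M\to\infty$ forces $(\nabla_{D,\bara+\vec{e}_i} - \nabla_{D,\bara})\cdot \vec{u}_\ell \geq 0$. Running the same inequality around the cycle $\bara \to \bara+\vec{e}_i \to \cdots \to \bara+p\vec{e}_i = \bara$ gives $p$ nonnegative quantities summing to $0$, so each is $0$. Since the $\vec{u}_\ell$'s span $\R^d$, this yields $\nabla_{D,\bara+\vec{e}_i} = \nabla_{D,\bara}$; iterating over $i$ and $\bara$ produces the desired common gradient $\vec{\nabla}_g$.

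With this, verifying $g$ is quilt-affine is routine: $B$ is $p$-periodic by construction; $\vec{\nabla}_g \in \Q^d_{\geq 0}$ because $p\vec{\nabla}_g(j) = f(\vec{x}+p\vec{e}_j) - f(\vec{x}) \in \N$ for any pair of same-class points in $D$ (which exist by the large-ball lemma); and $g$ is nondecreasing because $g(\vec{x}+\vec{e}_i) - g(\vec{x})$ depends only on $\bara$, so its sign can be checked at a representative $\vec{x}$ deep in $D$ where $g=f$ is nondecreasing. Uniqueness is immediate: any other quilt-affine extension agrees with $f$ on each infinite set $D \cap \bara \cap \N^d$, and affine functions that coincide on infinitely many points of a full-dimensional subset have equal gradients, after which the offsets also coincide. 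The main obstacle is the middle step — extracting equality of gradients from mere nondecreasingness — which requires both the "cycle telescopes to zero" trick and directions in $\rec(D)$ that simultaneously preserve congruence classes and span $\R^d$.
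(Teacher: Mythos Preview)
Your argument is correct and complete in spirit; the one spot to firm up is that you need not only $\vec{x}+\vec{e}_i\in D$ and $\vec{x}+M\cdot pN\vec{u}_\ell\in D$ but also $\vec{x}+M\cdot pN\vec{u}_\ell+\vec{e}_i\in D$ to apply the affine formula on class $\bara+\vec{e}_i$. This follows immediately from the recession-cone property once $\vec{x}+\vec{e}_i\in D$, so it is a one-line fix, but you should say it.

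Your route to ``all the $\nabla_{D,\bara}$ coincide'' is genuinely different from the paper's. The paper argues by direct contradiction: assuming $\nabla_{\bara}\neq\nabla_{\overline{\vec{b}}}$, it picks a direction $\vec{v}\in\rec(D)$ on which the two gradients disagree, chooses ordered representatives $\vec{y}<\vec{z}$ in the two classes, and walks along $\vec{v}$ until $f(\vec{y}+cp\vec{v})>f(\vec{z}+cp\vec{v})$ violates nondecreasingness. This is shorter and requires only one direction, not a full basis. Your telescoping-cycle argument instead extracts, for each spanning direction $\vec{u}_\ell$, the \emph{inequality} $(\nabla_{\bara+\vec{e}_i}-\nabla_{\bara})\cdot\vec{u}_\ell\geq 0$ and then sums around the $p$-cycle in each coordinate to force equality. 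What you gain is a more constructive and systematic proof that also records the side facts the paper leaves implicit: that $\vec{\nabla}_g\in\Q^d_{\geq 0}$, that $g$ is nondecreasing (required by Definition~\ref{quilt-affine def}), and that the extension is unique as a function. The paper's proof sketches uniqueness only as a remark about the period $p$; your final paragraph handles it cleanly.
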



\begin{proof}
By Definition \ref{quilt-affine def}, 
it suffices to show that there is a $\vec{\nabla}_g\in\Q^d$ and $B:\Z^d/p\Z^d\to\Q$ such that, defining $g:\N^d \to \Z$ for all $\vec{x} \in \N^d$ as
\[
    g(\vec{x})=\vec{\nabla}_g\cdot\vec{x}+B(\barx\mod p),
\]
then for all $\vec{x} \in D \cap \N^d$, $g(\vec{x}) = f(\vec{x})$.\footnote{Note that while $p$ is not necessarily unique, any valid choice of $p$ will result in the same function $g$.}

By Lemma \ref{sunshine lemma}, for each $\bara\in\Z^d/p\Z^d$,
there are $\vec{\nabla}_{\bara}\in\Q^d$ and $b_{\bara}\in\Q$ such that
\[
    f|_{D\cap\bara}(\vec{x})=\vec{\nabla}_{\bara}\cdot\vec{x}+b_{\bara}.
\]
Since $D$ contains arbitrarily large open balls by Lemma \ref{determined region has large balls}, it contains points in all congruence classes, so all $p^d$ constants $b_{\bara}$ are well-defined. Then we define the periodic offset function $B(\barx\mod p)=b_{\barx\mod p}$.
This is almost what we require to prove the lemma, 
except that $\vec{\nabla}_{\bara}$ depends on $\bara$.
It remains to show that all vectors $\vec{\nabla}_{\bara}$ are equal, so we can define the gradient $\vec{\nabla}_g=\vec{\nabla}_{\bara}$ for any $\bara\in\Z^d/p\Z^d$. 

Assuming for the sake of contradiction that $\vec{\nabla}_{\bara}\neq\vec{\nabla}_{\overline{\vec{b}}}$ 
for some equivalence classes $\bara,\overline{\vec{b}}$, 
we will show that $f$ cannot be nondecreasing. 
Since the recession cone $\rec(D)$ is full-dimensional and $\vec{\nabla}_{\bara}-\vec{\nabla}_{\overline{\vec{b}}}\neq \vec{0}$, there must exist some $\vec{v}\in\rec(D)$ such that $\vec{\nabla}_{\bara}\cdot\vec{v}\neq\vec{\nabla}_{\overline{\vec{b}}}\cdot\vec{v}$. 
Furthermore, by density of the rationals, 
we can assume $\vec{v}\in\Q^d$, 
and by scaling by the denominator we can assume $\vec{v}\in\N^d$. Now without loss of generality assume $\vec{\nabla}_{\bara}\cdot\vec{v}>\vec{\nabla}_{\overline{\vec{b}}}\cdot\vec{v}$.

Now pick some $\vec{y}\in D\cap \bara$, and some $\vec{z}\in D\cap\overline{\vec{b}}$ with $\vec{y}<\vec{z}$, 
which must exist because again $D$ contains arbitrarily large open balls by Lemma \ref{determined region has large balls}. 
But since $\vec{\nabla}_{\bara}\cdot\vec{v}>\vec{\nabla}_{\overline{\vec{b}}}\cdot\vec{v}$, moving along $\vec{v}$, 
the function values in $\bara$ grow faster than in $\overline{\vec{b}}$
when moving along $\vec{v}$ from $\vec{y}$ and $\vec{z}$,
respectively. Note that $\vec{y}+cp\vec{v},\vec{z}+cp\vec{v}\in D$ by definition of $\vec{v}\in\rec(D)$, and $cp\vec{v}\in p\Z^d$, so $\vec{y}+cp\vec{v}\in D\cap\bara$ and $\vec{z}+cp\vec{v}\in D\cap\overline{\vec{b}}$.
Thus for some multiple $cp$ of the period $p$, we must have
$$f(\vec{y}+cp\vec{v})>f(\vec{z}+cp\vec{v})$$
but then $f$ is not nondecreasing, 
since $\vec{y}+cp\vec{v}<\vec{z}+cp\vec{v}$.

Thus there is a uniquely determined gradient $\vec{\nabla}_g$. While there was not necessarily a unique choice for the period $p$, any valid choice will define the same function $g$.

\end{proof}

We next make precise why these extensions $g_k$ can be used in the eventual-min that will define $f$ to prove Theorem~\ref{MC implies min}.

\begin{definition}
An extension $g:\N^d\to\Z$ \emph{eventually dominates} $f$ if there exists $\vec{n}\in\N^d$ such that $f(\vec{x})\leq g(\vec{x})$ for all $\vec{x}\geq\vec{n}$.
\end{definition}

We now show the uniquely defined extension $g$ from a determined region $D$ eventually dominates $f$. The idea is that if the extension $g$ did not eventually dominate $f$, then we can apply Lemma~\ref{dickson contradiction lemma} to show $f$ is not obliviously-computable. 

For example, the function
\[
\max(x_1,x_2)=
\begin{cases}
x_2, & \text{if } x_1\leq x_2\\
x_1, & \text{if } x_1 > x_2
\end{cases}
\]
is naturally identified by two determined regions, with unique extensions $g_1(\vec{x})=(0,1)\cdot\vec{x}$ and $g_2(\vec{x})=(1,0)\cdot\vec{x}$. These extensions do not eventually dominate $f$, and we already saw in Section~\ref{sec:max} how Lemma~\ref{dickson contradiction lemma} applies to max. Thus the following lemma generalizes this example, finding a ``contradiction sequence'' $(\vec{a}_1,\vec{a}_2,\ldots)$ to apply Lemma~\ref{dickson contradiction lemma} whenever a determined extension $g$ does not eventually dominate $f$:

\begin{lemma}
\label{determined extension eventually dominates f}
The unique extension $g$ from a determined region $D$ eventually dominates $f$.
\end{lemma}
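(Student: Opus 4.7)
The plan is to prove the contrapositive by contradiction. Assuming the unique extension $g$ does not eventually dominate $f$, I will construct a sequence and deltas satisfying the hypotheses of Lemma~\ref{dickson contradiction lemma}, contradicting the hypothesis that $f$ is obliviously-computable. The strategy is to ``ride'' test points $\vec{a}_i$ deep into $D$ along a recession direction, then add input $\vec{\Delta}_{ij}$ chosen so that small $\vec{a}_i$ ends up in the bad region $B=\{\vec{x}\in\N^d : f(\vec{x}) > g(\vec{x})\}$ while large $\vec{a}_j$ returns to $D$. The quilt-affine structure of $g$ then makes both sides of the Dickson inequality easy to evaluate.

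Since $g$ does not eventually dominate $f$, $B$ contains arbitrarily large elements; being a Boolean combination of the semilinear graphs of $f$ and $g$, $B$ is itself semilinear. By pigeonhole over the $p^d$ residue classes modulo the period $p$ of $g$, some class $\bara\bmod p$ has $B\cap(\bara\bmod p)$ unbounded, and extracting the linear-set structure of this intersection yields a base point $\vec{b}^*\in B\cap\bara$ and period vectors $\vec{w}_1,\ldots,\vec{w}_k\in p\Z^d$ with $\vec{b}^*+\sum_r n_r\vec{w}_r\in B\cap\bara$ for all $n_r\in\N$, where such combinations can componentwise dominate any prescribed $\vec{x}\in\N^d$. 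Since $\rec(D)$ is full-dimensional, I pick $\vec{v}\in\mathrm{int}(\rec(D))\cap p\N^d$; the inequality definition of $D$ then shows $\vec{y}+t\vec{v}\in D$ whenever $t$ exceeds a bound linear in the ``anti-$D$'' coordinates of $\vec{y}$. Finally, by Lemma~\ref{determined region has large balls}, I pick $\vec{c}\in D\cap\bara$.

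Define $\vec{a}_i=\vec{c}+M_i\vec{v}\in D\cap\bara$ for a super-linearly growing integer sequence $M_1<M_2<\cdots$ (say $M_{i+1}\geq C\cdot M_i$ for a constant $C$ fixed by the geometry below). For each pair $i<j$, pick $\vec{b}_{ij}=\vec{b}^*+N_{ij}\sum_r\vec{w}_r\in B\cap\bara$ with $N_{ij}=O(M_i)$ chosen minimally so that $\vec{b}_{ij}\geq\vec{a}_i$, and set $\vec{\Delta}_{ij}=\vec{b}_{ij}-\vec{a}_i\in\N^d$. The key requirement $\vec{a}_j+\vec{\Delta}_{ij}=\vec{b}_{ij}+(M_j-M_i)\vec{v}\in D$ reduces to a linear inequality pitting $(M_j-M_i)$ against $N_{ij}=O(M_i)$; choosing $C$ large enough makes it hold for every pair. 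Because $\vec{v}$ and each $\vec{w}_r$ are multiples of $p$, all four points $\vec{a}_i,\vec{a}_j,\vec{a}_i+\vec{\Delta}_{ij},\vec{a}_j+\vec{\Delta}_{ij}$ share the class $\bara$, so the periodic offsets of $g$ cancel in pairwise differences. Using $\vec{a}_i,\vec{a}_j,\vec{a}_j+\vec{\Delta}_{ij}\in D$ (where $f=g$) and $\vec{b}_{ij}=\vec{a}_i+\vec{\Delta}_{ij}\in B$, I get
\begin{align*}
    f(\vec{a}_j+\vec{\Delta}_{ij})-f(\vec{a}_j) &= \vec{\nabla}_g\cdot\vec{\Delta}_{ij}, \\
    f(\vec{a}_i+\vec{\Delta}_{ij})-f(\vec{a}_i) &\geq g(\vec{b}_{ij})+1-g(\vec{a}_i) = \vec{\nabla}_g\cdot\vec{\Delta}_{ij}+1,
\end{align*}
the strict inequality Lemma~\ref{dickson contradiction lemma} demands. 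The main obstacle is coordinating the two growth rates: $\vec{b}_{ij}$ must be large enough in $B$ to dominate $\vec{a}_i$ componentwise, which pushes it farther from $D$, yet only the push $(M_j-M_i)\vec{v}$ is available to carry it back into $D$. Both required thresholds are linear in $M_i$, so the super-linear choice of $M_i$ closes the gap; the rest is bookkeeping with the polyhedral inequalities describing $D$ and $\rec(D)$.
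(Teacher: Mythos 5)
Your proof is correct, and it takes a genuinely different route from the paper's. Both arguments feed a sequence into Lemma~\ref{dickson contradiction lemma} using points $\vec{a}_i$ deep in $D$ and deltas jumping to bad points, but they build the data differently. The paper's proof constructs $(\vec{a}_i)$ and $(\vec{v}_i)$ by an interleaved recursion: given $\vec{a}_1,\ldots,\vec{a}_{j-1}$ and the associated bad-point offsets $\vec{v}_1,\ldots,\vec{v}_{j-1}$, it invokes Lemma~\ref{determined region has large balls} to place $\vec{a}_j\in D\cap\bara$ with an open ball $B_r(\vec{a}_j)\subset D$ of radius $r\geq\max_{i<j}\norm{\vec{v}_i}$, so that $\vec{a}_j+\vec{v}_i\in D$ is guaranteed geometrically without any estimate on how fast the $\vec{v}_i$ grow. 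Your proof instead fixes an explicit geometric sequence $\vec{a}_i=\vec{c}+M_i\vec{v}$ along an interior recession direction, extracts a cofinal linear subset of the semilinear bad set $B\cap\bara$, and closes the loop with a quantitative comparison: the domination requirement $\vec{b}_{ij}\geq\vec{a}_i$ forces $N_{ij}=O(M_i)$, the ``return to $D$'' requires a push $(M_j-M_i)\vec{v}$ of size $\Omega(M_i)$, and the super-geometric growth of $M_i$ resolves it. Your route is more computational and leans harder on the polyhedral description of $D$ and on semilinearity of $B$ (the linear-set extraction step is stated tersely: it implicitly uses that any linear set contained in $\N^d$ has nonnegative period vectors, and cofinality then forces $\sum_r\vec{w}_r$ to be strictly positive, which is what lets you take equal coefficients $N_{ij}$ — this is correct but worth spelling out). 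The paper's ``large balls'' abstraction is cleaner and avoids the rate-matching bookkeeping; yours makes the sequence explicit and pins down exactly which constants matter. Both are valid.
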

\begin{proof}
Let $g$ be the extension from some determined region $D$. Assume toward contradiction that $g$ does not eventually dominate $f$, so for any point $\vec{n}\in\N^d$ there exists some ``bad point'' $\vec{b}\geq\vec{n}$ with $f(\vec{b})>g(\vec{b})$. We will use Lemma \ref{dickson contradiction lemma} to show this implies $f$ is not obliviously-computable. To satisfy the Lemma conditions, we construct an increasing sequence $(\vec{a}_1,\vec{a}_2,\ldots)\in\N^d$ such that for all $i<j$ there exists some $\vec{\Delta}_{ij}\in\N^d$ with
$$f(\vec{a}_i+\vec{\Delta}_{ij})-f(\vec{a}_i)>f(\vec{a}_j+\vec{\Delta}_{ij})-f(\vec{a}_j)$$

We will choose $(\vec{a}_1,\vec{a}_2,\ldots)\in D\cap\bara$ to be points in $D$ that are all in the same congruence class $\bara\mod p$, and a sequence of vectors $(\vec{v}_1,\vec{v}_2,\ldots)\in\N^d$ such that for all $i$, $\vec{a}_i+\vec{v}_i$ is a ``bad point'': $f(\vec{a}_i+\vec{v}_i)>g(\vec{a}_i+\vec{v}_i)$, while for all $j>i$, $\vec{a}_j+\vec{v}_i\in D$, so $f(\vec{a}_j+\vec{v}_i)=g(\vec{a}_j+\vec{v}_i)$. Then for $i<j$, let $\vec{\Delta}_{ij}=\vec{v}_i$, so 
\begin{align*}
    f(\vec{a}_i+\vec{\Delta}_{ij})-f(\vec{a}_i)&>g(\vec{a}_i+\vec{v}_i)-g(\vec{a}_i)\quad\text{since $\vec{a}_i+\vec{v}_i$ is a ``bad point''}\\
    &=g(\vec{a}_j+\vec{v}_i)-g(\vec{a}_j)\quad\text{since $g$ is quilt-affine and $\vec{a}_i\equiv\vec{a}_j\mod p$}\\
    &=f(\vec{a}_j+\vec{\Delta}_{ij})-f(\vec{a}_j)
\end{align*}
as desired, where $f=g$ for points in $\vec{a}_i,\vec{a}_j,\vec{a}_j+\vec{v}_i\in D$. 

We now construct the sequences $(\vec{a}_1,\vec{a}_2,\ldots)$ and $(\vec{v}_1,\vec{v}_2,\ldots)$ recursively, to ensure for all $i<j$ that $\vec{a}_i\in\bara$, $\vec{a}_i+\vec{v}_i$ is a ``bad point'', and $\vec{a}_j+\vec{v}_i\in D$. Let $\vec{a}_1\in D$ arbitrarily, and the fixed congruence class $\bara=\overline{\vec{a}_1}$. For each $\vec{a}_i$, there must be a ``bad point'' above $\vec{a}_i$, so we recursively define each $\vec{v}_i$ based on $\vec{a}_i$ such that $\vec{a}_i+\vec{v}_i$ is this ``bad point'': $f(\vec{a}_i+\vec{v}_i)>g(\vec{a}_i+\vec{v}_i)$.

Now we recursively define each $\vec{a}_j$ based on $\vec{a}_{j-1}$ and $\vec{v}_1,\ldots,\vec{v}_{j-1}$ to ensure the sequence $(\vec{a}_1,\vec{a}_2,\ldots)$ is increasing, congruent, and $\vec{a}_j+\vec{v}_i\in D$ for all $i<j$. Since the recession cone $\rec(D)$ contains open balls of arbitrary radius by Lemma \ref{determined region has large balls}, we can find a point $\vec{a}_j$ in the recession cone above $\vec{a}_{j-1}$ in the same congruence class $\bara$ such that the open ball $B_r(\vec{a}_j)\subset D$ for a large radius $r\geq\max_i|\vec{v}_i|$. This gives the desired condition $\vec{a}_j+\vec{v}_i\in D$ for all $i<j$.

By the above analysis, this increasing sequence $(\vec{a}_1,\vec{a}_2,\ldots)$ with $\vec{\Delta}_{ij}=\vec{v}_i$ satisfies the conditions of Lemma \ref{dickson contradiction lemma}, giving the contradiction that $f$ is not obliviously-computable.
\end{proof}

The results of Lemmas \ref{determined region has unique extension} and \ref{determined extension eventually dominates f} bring us close to proving Theorem \ref{MC implies min}. From each determined region $D_1,\ldots,D_q$ we have a quilt-affine extension $g_1,\ldots,g_q:\N^d\to\Z$ that all eventually dominate $f$, so for some large enough $\vec{n}\in\N^d$ we have $f(\vec{x})\leq\min_k g_k(\vec{x})$ for all $\vec{x}\geq\vec{n}$. Furthermore, $f(\vec{x})=g_k(\vec{x})$ if $\vec{x}\in D_k$ for some determined region $D_k$. However, it is possible $f(\vec{x})<\min_k g_k(\vec{x})$ for any $\vec{x}$ that belong to an under-determined region. Since the bound $\vec{n}$ can be arbitrarily large, we need only consider \emph{eventual} under-determined regions that are unbounded in all inputs:

\begin{definition}
A region $R$ is \emph{eventual} if for any $\vec{n}\in\N^d$, there exists some $\vec{x}\in\N^d\cap R$ such that $\vec{x}\geq\vec{n}$.
\end{definition}

To finish the proof of Theorem \ref{MC implies min}, it remains to show how to construct a quilt-affine extension $g_U$ from each eventual under-determined region $U$, where $g_U$ eventually dominates $f$.
}
\opt{full}{
\subsection{Under-determined regions}
\label{subsec-under-determined-regions}

Let $U$ be an under-determined eventual region. The eventual condition implies that there is some $\vec{v}\in\rec(U)$ strictly positive on all coordinates. 
Let $W=\mathrm{span}(\rec(U))$, which we call the \emph{determined subspace} of $U$, with $1\leq\dim W<d$.
For example, for the ``pizza slice'' shaped region 6 in Fig.~\ref{fig:3d-regions}, $W$ is a 2D subspace, and region 6 is ``infinitely wide'' within the directions of $W$.
This term reflects the fact
that although the extensions from $U$ are not unique,
their values are uniquely defined \emph{within $W$}.
(For a determined region, its determined subspace is all of $\R^d$.)


\begin{definition}
Region $R$ is a \emph{neighbor} of an under-determined region $U$ if $\rec(U)\subseteq \rec(R)$.
\end{definition}

We will construct the extension from $U$ by referencing the determined regions which are neighbors of $U$. Any under-determined eventual region will in fact have at least two determined neighbors (proved as Corollary \ref{cor-determined-neighbor-exists} to the later Lemma \ref{lem-neighbor-in-direction-z}). Geometrically, we can think of the under-determined recession cones as faces of each of the recession cones of the determined neighbors (see Fig.~\ref{fig:3d-recession-cones}).

Unlike in the proof of Lemma \ref{determined region has unique extension}, the affine partial functions defining $f$ within $U$ do not need to have equal gradients. However, these gradients will be equal projected onto the subspace $W$. We now show a stronger statement, that this common gradient (projected onto $W$) agrees with the gradient of the extension from any determined neighbor $D$. Intuitively, if these gradients disagreed within $W$, then moving along directions in $\rec(U)$, the differences between $f$ on $U$ and on $D$ become arbitrarily large, contradicting that $f$ must be nondecreasing.







\begin{lemma}
\label{projection of determined gradient is equal}
Let $U$ be an under-determined eventual region. Let $D$ be a determined neighbor of $U$, with unique extension $g(\vec{x})=\vec{\nabla}_g\cdot\vec{x}+B(\barx\mod p)$ given by Lemma~\ref{determined region has unique extension}. For any $\vec{u}\in U\cap\N^d$, consider the affine partial function $f|_{U\cap(\baru\mod p)}(\vec{x})=\nabla_{\baru}\cdot\vec{x}+b_{\baru}$ given from Lemma \ref{sunshine lemma}. Then
\[
\mathrm{proj}_W(\vec{\nabla}_{\baru})=\mathrm{proj}_W(\vec{\nabla}_g).
\]
\end{lemma}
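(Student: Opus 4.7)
The plan is to prove the stronger statement that $\nabla_{\baru}\cdot\vec{v}=\nabla_g\cdot\vec{v}$ for every $\vec{v}\in\rec(U)$. Since $W=\mathrm{span}(\rec(U))$, this determines both projections uniquely and yields the claim by linearity. I would argue by contradiction: if some $\vec{v}\in\rec(U)$ separates the gradients, then a long shift along $\vec{v}$ grows the gap between the values of $f$ on parallel rays inside $U$ and inside $D$ without bound, eventually flipping the order of two comparable points and violating monotonicity of $f$.

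First I would reduce to the case $\vec{v}\in p\N^d$. Because $\rec(U)$ is a rational polyhedral cone, its rational points are dense in it, so I may perturb $\vec{v}$ to a nearby rational direction in $\rec(U)$ while preserving the sign of $(\nabla_{\baru}-\nabla_g)\cdot\vec{v}$ (using rationality of the gradients guaranteed by Lemmas~\ref{sunshine lemma} and~\ref{determined region has unique extension}), then clear denominators and multiply by $p$. Pick any $\vec{u}\in U\cap\N^d$. By Lemma~\ref{determined region has large balls}, $\rec(D)$ contains open balls of arbitrarily large radius, so there is a lattice point $\vec{z}\in D\cap(\baru\bmod p)$ with $\vec{z}\geq\vec{u}$. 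Because $\vec{v}\in\rec(U)\cap p\N^d$ and $\rec(U)\subseteq\rec(D)$, for every $c\geq 0$ the shift $\vec{u}+c\vec{v}$ lies in $U\cap(\baru\bmod p)$ and $\vec{z}+c\vec{v}$ lies in $D\cap(\baru\bmod p)$. Lemma~\ref{sunshine lemma} and the definition of $g$ then give
\[
    f(\vec{u}+c\vec{v})=\nabla_{\baru}\cdot(\vec{u}+c\vec{v})+b_{\baru},\qquad f(\vec{z}+c\vec{v})=\nabla_g\cdot(\vec{z}+c\vec{v})+B(\baru\bmod p),
\]
and since $\vec{u}+c\vec{v}\leq\vec{z}+c\vec{v}$, nondecreasingness of $f$ yields
\[
    c\,(\nabla_{\baru}-\nabla_g)\cdot\vec{v}\leq \nabla_g\cdot\vec{z}-\nabla_{\baru}\cdot\vec{u}+B(\baru\bmod p)-b_{\baru},
\]
a bound independent of $c$. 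Letting $c\to\infty$ forces $(\nabla_{\baru}-\nabla_g)\cdot\vec{v}\leq 0$.

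For the reverse inequality I would exchange the roles of $U$ and $D$. Because $U$ is eventual, $\rec(U)$ contains a strictly positive rational vector, which after scaling gives some $\vec{w}\in\rec(U)\cap p\N^d$ with all coordinates positive. Fix any $\vec{z}'\in D\cap(\baru\bmod p)$, and set $\vec{u}':=\vec{u}+k\vec{w}$ for $k$ large enough that $\vec{u}'\geq\vec{z}'$; then $\vec{u}'\in U\cap(\baru\bmod p)$. Running the identical shift-by-$c\vec{v}$ argument, now with $\vec{z}'+c\vec{v}\leq\vec{u}'+c\vec{v}$, gives $(\nabla_g-\nabla_{\baru})\cdot\vec{v}\leq 0$. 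Combining the two inequalities yields $\nabla_{\baru}\cdot\vec{v}=\nabla_g\cdot\vec{v}$ for every $\vec{v}\in\rec(U)$, which is exactly the claimed equality of the projections onto $W$.

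The main obstacle is matching congruence classes while arranging the needed pointwise comparabilities: this is what forces us to clear denominators so that $\vec{v},\vec{w}\in p\N^d$ and to use (i) the large balls inside $\rec(D)$ from Lemma~\ref{determined region has large balls} to pick a dominating integer point of $D$ in a prescribed mod-$p$ class, and (ii) the eventual property of $U$, which supplies the strictly positive ray needed in the symmetric direction to lift $\vec{u}$ above an arbitrary point of $D$.
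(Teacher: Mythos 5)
Your proof is correct and uses the same core mechanism as the paper: translate along a direction $\vec{v}\in\rec(U)$ so the shift stays in both regions and in a fixed congruence class, then let nondecreasingness of $f$ squeeze the growth rates. The one place you genuinely improve on the paper's argument is the sign case. The paper writes ``without loss of generality assume $\vec{\nabla}_{\baru}\cdot\vec{y}>\vec{\nabla}_g\cdot\vec{y}$,'' but this is not a clean WLOG: $\rec(U)$ is a cone, not a subspace, so $-\vec{y}\notin\rec(U)$ in general, and it is entirely possible that $(\vec{\nabla}_{\baru}-\vec{\nabla}_g)\cdot\vec{y}\leq 0$ for \emph{every} $\vec{y}\in\rec(U)$ with strict inequality on some ray (and $U$ and $D$ play asymmetric roles, unlike the two congruence classes inside a single region in Lemma~\ref{determined region has unique extension}). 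You avoid this by proving both one-sided inequalities directly: for $(\vec{\nabla}_{\baru}-\vec{\nabla}_g)\cdot\vec{v}\leq 0$ you dominate a point of $U$ by a point of $D$ (using large balls in $\rec(D)$), and for the reverse you dominate a point of $D$ by a point of $U$ (using that $U$ is eventual, hence $\rec(U)$ contains a strictly positive ray, so you can push $\vec{u}$ up within its congruence class). That symmetric treatment is exactly what is needed to make the sign argument airtight, and the insistence on $\vec{v},\vec{w}\in p\N^d$ to preserve congruence classes is handled correctly. No gaps.
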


\begin{proof}
This proof uses similar techniques to the proof of Lemma \ref{determined region has unique extension}: with two nonequal gradients, moving far enough along a recession cone direction to contradict the fact that $f$ is nondecreasing.

Assume toward contradiction that for some $\vec{u}\in U\cap\N^d$, $\mathrm{proj}_W(\vec{\nabla}_{\baru})\neq\mathrm{proj}_W(\vec{\nabla}_g)$. Then since $W=\mathrm{span}(\rec(U))$, we have $\vec{\nabla}_{\baru}\cdot\vec{y}\neq\vec{\nabla}_{g}\cdot\vec{y}$ for some $\vec{y}\in\rec(U)$. Again, we can assume $\vec{y}\in\N^d$ by density of the rationals then scaling to clear denominators. Without loss of generality further assume $\vec{\nabla}_{\baru}\cdot\vec{y}>\vec{\nabla}_{g}\cdot\vec{y}$.

Now pick some $\vec{d}\in D\cap\N^d$ such that $\vec{d}\geq\vec{u}$, so $f(\vec{d})\geq f(\vec{u})$ because $f$ is nondecreasing. Then since $\vec{y}\in\rec(U)\subset\rec(D)$, for any $c\in\N$, $\vec{u}+cp\vec{y}\in U\cap\baru$ and $\vec{d}+cp\vec{y}\in D\cap\overline{\vec{d}}$. Since $f|_{U\cap\baru}(\vec{x})=\vec{\nabla}_{\baru}\cdot\vec{x}+b_{\baru}$ and $f|_{D\cap\overline{\vec{d}}}(\vec{x})=\vec{\nabla}_{g}\cdot\vec{x}+B(\overline{\vec{d}})$, where $\vec{\nabla}_{\baru}\cdot\vec{y}>\vec{\nabla}_{g}\cdot\vec{y}$, for some large enough $c\in\N$, we have $f(\vec{u}+cp\vec{y})>f(\vec{d}+cp\vec{y})$. But this contradicts the fact that $f$ is nondecreasing, since $\vec{u}\leq\vec{d}$.
\end{proof}

Lemma \ref{projection of determined gradient is equal} constrains the behavior of $f$ moving within the subspace $W$. The region $U$, however, could have a finite ``width'' in other directions. This motivates us to separate $U$ into ``strips'', partitioning its integer points to classes lying on translated versions of $W$:

\begin{definition}
Let $U$ be an under-determined region with $W=\mathrm{span}(\rec(U))$. The equivalence relation $\equiv_W$, where $\vec{x}\equiv_W\vec{y}$ if $\vec{x}-\vec{y}\in W$, partitions $U\cap\N^d$ into sets called \emph{strips}. Thus a strip $I=\{\vec{x}\in U\cap\N^d:\vec{x}\equiv_W\vec{u}\}$ for some $\vec{u}\in U\cap\N^d$.
\end{definition}

For a strip $I$, we will consider the smallest affine set containing $I$: the \emph{affine hull} $\mathrm{aff}(I)$ \cite{de2013algebraic}. Note that for every $\vec{u}\in I$, $\mathrm{aff}(I)=\vec{u}+W=\{\vec{u}+\vec{w}:\vec{w}\in W\}$, and $\mathrm{aff}(I)$ is a rational affine subspace. We next show a useful lemma about the distance from rational affine subspaces to surrounding integer points. This will be used to show there are only a finite number of strips, and will also be key to a trick in the later proof of Lemma \ref{strip extension by averaging}.

\begin{lemma}
\label{closest lattice point to affine space}
Let $A\subset\R^d$ be a rational affine subspace, containing some point $\vec{x}\in\N^d$. Then there exists a constant $c>0$ such that for any period $p^*\in\N_+$, for all $\vec{y}\in(\barx\mod p^*)$ with $\vec{y}\notin A$, the distance $\mathrm{dist}(\vec{y},A)\geq cp^*$.
\end{lemma}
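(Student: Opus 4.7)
The plan is to reduce the distance from $\vec{y}$ to $A$ to the norm of a projection onto the orthogonal complement of the direction subspace of $A$, and then use rationality to show this projection sends $\Z^d$ onto a lattice, so its nonzero elements have a positive minimum norm that serves as our constant $c$.

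Write $A = \vec{x} + V$, where $V \subseteq \R^d$ is the rational linear subspace parallel to $A$, and let $W = V^\perp$. For any point $\vec{y} \in \R^d$, the standard distance-to-affine-subspace formula gives $\mathrm{dist}(\vec{y}, A) = \|\mathrm{proj}_W(\vec{y} - \vec{x})\|$. Any $\vec{y} \in (\barx \mod p^*)$ satisfies $\vec{y} = \vec{x} + p^* \vec{z}$ for some $\vec{z} \in \Z^d$, so by linearity of projection
\[
    \mathrm{dist}(\vec{y}, A) = p^* \, \|\mathrm{proj}_W(\vec{z})\|.
\]

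The key step is to prove that the image $L := \mathrm{proj}_W(\Z^d)$ is a discrete subgroup (i.e., a lattice) of $W$. Since $V$ is rational, $\Z^d \cap V$ is a full-rank lattice in $V$ of rank $k = \dim V$, obtained by clearing denominators in a rational basis of $V$. Moreover $\Z^d \cap V$ is a \emph{pure} (primitive) sublattice of $\Z^d$: if $n \vec{w} \in \Z^d \cap V$ for some $n \in \N_+$ and $\vec{w} \in \Z^d$, then $\vec{w} \in V$ because $V$ is a subspace. A standard result (via Smith normal form) says any pure sublattice admits a basis extending to a basis of $\Z^d$, so we obtain a $\Z$-basis $v_1, \dots, v_d$ of $\Z^d$ whose first $k$ elements generate $\Z^d \cap V$. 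Since $v_{k+1}, \dots, v_d$ are linearly independent modulo $V$, their projections $\mathrm{proj}_W(v_{k+1}), \dots, \mathrm{proj}_W(v_d)$ are $d-k$ linearly independent vectors in the $(d-k)$-dimensional space $W$, and they generate $L$; hence $L$ is a rank-$(d-k)$ lattice in $W$. Let $c > 0$ be the minimum norm of a nonzero element of $L$.

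To conclude, if $\vec{y} \notin A$ then $\vec{z} = (\vec{y} - \vec{x}) / p^* \notin V$ (using $p^* \neq 0$ and that $V$ is a subspace), so $\mathrm{proj}_W(\vec{z}) \neq 0$, hence $\|\mathrm{proj}_W(\vec{z})\| \geq c$, which combined with the displayed equation yields $\mathrm{dist}(\vec{y}, A) \geq c p^*$. The main obstacle is the lattice claim: rationality of $V$ is essential, since without it the image of $\Z^d$ under projection onto a generic direction can be dense (e.g., projection onto the line through $(1, \sqrt{2})$), and the extension of a basis of $\Z^d \cap V$ to a basis of $\Z^d$ would fail.
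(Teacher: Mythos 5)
Your proof is correct, but it takes a genuinely different route from the paper's. The paper first treats the case where $A$ is a single rational hyperplane $\{\vec{z} : \vec{a}\cdot\vec{z}=b\}$ with $\vec{a}\in\Z^d$, $b\in\Z$, applies the explicit point-to-hyperplane distance formula $\mathrm{dist}(\vec{y},A)=|\vec{a}\cdot\vec{y}-b|/\norm{\vec{a}}$, and then observes that with $\vec{y}=\vec{x}+p^*\vec{v}$ the numerator becomes $p^*|\vec{a}\cdot\vec{v}|\geq p^*$ because $\vec{a}\cdot\vec{v}$ is a nonzero integer; it finishes by writing a general rational affine subspace as an intersection of rational hyperplanes and taking the relevant constant (with the understanding that the appropriate hyperplane not containing $\vec{y}$ is selected, so one should take the minimum constant over the finitely many defining hyperplanes). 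Your argument instead works with the orthogonal projection onto $W=V^\perp$, reduces $\mathrm{dist}(\vec{y},A)$ to $p^*\norm{\mathrm{proj}_W(\vec{z})}$, and proves the structural fact that $\mathrm{proj}_W(\Z^d)$ is a full-rank lattice in $W$ via the purity of $\Z^d\cap V$ and the Smith-normal-form basis-extension theorem, taking $c$ to be the minimum nonzero norm of that lattice. Your route is more abstract and uniform (no case split on whether $A$ is a hyperplane, and the constant $c$ has a clean lattice-theoretic meaning), but it invokes heavier machinery; the paper's route is more elementary and computational, yielding the explicit bound $c=1/\norm{\vec{a}}$ from the trivial fact that a nonzero integer has absolute value at least $1$.
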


\begin{proof}
Let $A$ be a rational affine subspace containing $\vec{x}\in\N^d$. Let $\vec{y}\in\barx\mod p^*$ with $\vec{y}\notin A$, so we can write $\vec{y}=\vec{x}+p^*\vec{v}$ for some $\vec{v}\in\Z^d$.

First we consider the case that $A$ is a hyperplane, so we can write $A=\{\vec{z}\in\R^d:\vec{a}\cdot\vec{z}=b\}$ for some $\vec{a}\in\Z^d$ and $b\in\Z$. Then using a standard formula \cite{CheneyLinearAlgebra} for the distance from a point $\vec{y}\in\R^d$ to $A$:
\[
\mathrm{dist}(\vec{y},A)=\frac{|\vec{a}\cdot\vec{y}-b|}{\norm{\vec{a}}}=\frac{|\vec{a}\cdot(\vec{x}+p^*\vec{v})-b|}{\norm{\vec{a}}}=\frac{p^*|\vec{a}\cdot\vec{v}|}{\norm{\vec{a}}}\geq\frac{p^*}{\norm{\vec{a}}}
\]
where the inequality follows from $|\vec{a}\cdot\vec{v}|\geq 1$: $\vec{a}\cdot\vec{v}\in\Z$ and $\vec{a}\cdot\vec{v}\neq 0$ since $\vec{y}\notin A$. The desired result then holds taking $c=1/\norm{\vec{a}}$, which depends only on $A$ (and not on $p^*$).

Finally if the rational affine space $A$ is not a hyperplane, it is the intersection of a finite number of hyperplanes, so $A$ is contained in some rational hyperplane $H$ and then by the above result $\mathrm{dist}(\vec{y},A)\geq\mathrm{dist}(\vec{y},H)\geq cp^*$.
\end{proof}

Now we can show there are only a finite number of strips in each under-determined region.

\begin{lemma}
\label{finite number of strips}
The equivalence relation $\equiv_W$ partitions $U\cap\N^d$ into a finite number of strips.
\end{lemma}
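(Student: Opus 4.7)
The plan is to identify each strip with its image under the orthogonal projection $\mathrm{proj}_{W^\perp}:\R^d\to W^\perp$ and bound the number of distinct images. Observe first that $\vec{x}\equiv_W\vec{y}$ if and only if $\mathrm{proj}_{W^\perp}(\vec{x})=\mathrm{proj}_{W^\perp}(\vec{y})$, so the number of strips equals the cardinality of $\mathrm{proj}_{W^\perp}(U\cap\N^d)$.

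First, I would show that $\mathrm{proj}_{W^\perp}(U)$ is a bounded subset of $W^\perp$. Since $U$ is a convex polyhedron, its image under a linear map is also a polyhedron, with recession cone equal to $\mathrm{proj}_{W^\perp}(\rec(U))$. By definition of $W=\mathrm{span}(\rec(U))$, we have $\rec(U)\subseteq W$, so $\mathrm{proj}_{W^\perp}(\rec(U))=\{\vec{0}\}$. A polyhedron with trivial recession cone is bounded, so $\mathrm{proj}_{W^\perp}(U)$ lies in some bounded ball in $W^\perp$.

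Next, I would use Lemma~\ref{closest lattice point to affine space} to obtain a uniform lower bound on the distance between projections of distinct strips. Because $\rec(U)$ is cut out by integer inequalities (derived from $T$ and $S_U$), $W$ is a rational linear subspace containing the integer point $\vec{0}$. Applying the lemma with $A=W$, $\vec{x}=\vec{0}$, and period $p^*=1$ (so the relevant congruence class is all of $\Z^d$), there is a constant $c>0$ such that every $\vec{z}\in\Z^d\setminus W$ satisfies $\mathrm{dist}(\vec{z},W)\geq c$. For any integer points $\vec{u}_1,\vec{u}_2\in U\cap\N^d$ lying in different strips, $\vec{u}_2-\vec{u}_1\in\Z^d\setminus W$, and by translation invariance of Euclidean distance
\[
    \norm{\mathrm{proj}_{W^\perp}(\vec{u}_2)-\mathrm{proj}_{W^\perp}(\vec{u}_1)}=\mathrm{dist}(\vec{u}_2-\vec{u}_1,W)\geq c.
\]

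The proof concludes by noting that a subset of a bounded region whose pairwise distances are all bounded below by a positive constant $c$ is finite, so there are only finitely many distinct projections and hence finitely many strips. I expect the most delicate step to be the first: invoking the fact that a polyhedron's projection is a polyhedron whose recession cone is the projection of the original recession cone. This is standard in convex analysis (and is cited from \cite{RockafellerConvexAnalysis} earlier in the paper) but should be stated carefully, since the whole finiteness argument rests on it. The separation step is the substantive use of Lemma~\ref{closest lattice point to affine space} and is the reason that lemma was proved immediately prior.
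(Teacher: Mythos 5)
Your proof is correct and follows essentially the same route as the paper's: both use Lemma~\ref{closest lattice point to affine space} with $p^*=1$ to get the uniform separation constant $c$, and both use the Minkowski--Weyl structure of the polyhedron $U$ for the boundedness step (the paper writes $U=H+\rec(U)$ and tracks representatives in the bounded polytope $H$; you instead project onto $W^\perp$, which annihilates $\rec(U)\subseteq W$ and yields the bounded polytope $\mathrm{proj}_{W^\perp}(U)$ --- the same underlying fact in slightly different form). Your projection phrasing is a bit cleaner and you are right that the one delicate point is $\rec(L(P))=L(\rec(P))$ for polyhedral $P$, which holds by Minkowski--Weyl.
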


\begin{proof}
Consider the set of unique strips $\{I_1,I_2,\ldots\}$ each with some representative $\vec{u}_j\in I_j$. For each strip $I_j$, consider the affine hull $\mathrm{aff}(I_j)$, which is a rational affine space, which are all parallel. For any $I_j\neq I_k$, since both $\mathrm{aff}(I_j)$ and $\mathrm{aff}(I_k)$ contain integer points, using Lemma \ref{closest lattice point to affine space} with $p=1$ implies that $\mathrm{dist}(\mathrm{aff}(I_j),\mathrm{aff}(I_k))\geq c$ for some constant $c>0$. This lower bound $c$ is the same for all $j,k$ because the $\mathrm{aff}(I_j)$ are all parallel. The affine hulls of the strips being bounded away from each other will imply there can only be finitely many strips.

Since $U$ is a convex polyhedron, we can write it as $U=H+\rec(U)=\{\vec{h}+\vec{y}:\vec{h}\in H,\quad \vec{y}\in\rec(U)\}$, 
the sum of a bounded polytope $H$ and the recession cone \cite{de2013algebraic}. Thus each representative $\vec{u}_j=\vec{h}_j+\vec{y}_j$ for some $\vec{h}_j\in H$ and $\vec{y}_j\in\rec(U)$, so $\vec{h}_j=\vec{u}_j-\vec{y}_j\in \mathrm{aff}(I_j)$ (since $\vec{y}_j\in W$). 
Then $\mathrm{dist}(\vec{h}_j,\vec{h},k)\geq c$ for all $j\neq k$. Since all $\vec{h}_j$ are contained in the bounded polytope $H$, there must be finitely many and thus finitely many strips.
\end{proof}

Since there are a finite number of under-determined regions, by Lemma \ref{finite number of strips} there are a finite total number of strips, so it suffices to consider each strip separately and show there exists an extension from each strip that eventually dominates $f$.

We can build off Lemma \ref{projection of determined gradient is equal} for a strip $I$ to define an extension $g_I$ from $I$ that eventually dominates $f$. Intuitively, we will take the average gradient of the extensions from all determined neighbors. We crucially assume that the gradients of these extensions are not all the same. This will imply that their average grows faster than the minimum (and thus grows faster than $f$) moving away from $\mathrm{aff}(I)$. It is then immediate that $g_I$ will eventually dominate $f$ sufficiently far from $\mathrm{aff}(I)$, but requires a subtle trick to make this hold near $\mathrm{aff}(I)$.

We choose $g_I$ to have a larger period $p^*$ that will guarantee (via Lemma \ref{closest lattice point to affine space}) that points congruent to points in $I$ are sufficiently far from $\mathrm{aff}(I)$ (where $g_I\geq f$). The offsets in these congruence classes mod $p^*$ are uniquely defined so $g_I=f$ on $I$. The remaining offsets are then maximized subject to the constraint that $g_I$ is nondecreasing.

Finally, we must show $g_I$ eventually dominates $f$ on $\mathrm{aff}(I)$. This is only nontrivial for $d\geq 3$. For example if $I$ is a strip in the ``pizza slice'' shaped region 6 in Fig.~\ref{fig:3d-regions}, we must argue that an extension from $I$ will eventually dominate within the whole spanning plane. This is a generalization of Lemma \ref{determined extension eventually dominates f}, since $I$ is essentially determined within $\mathrm{aff}(I)$ (hence us calling $W$ the determined subspace of $U$).


\begin{lemma}
\label{strip extension by averaging}
Let $I$ be a strip of an under-determined eventual region $U$. Let $D_1,\ldots,D_m$ be the determined neighbors of $U$, with extensions $g_1,\ldots,g_m$. Assume for all $\vec{z}\in W^\perp$, the gradients of the extensions along $\vec{z}$ are not all equal: $\vec{\nabla}_{g_i}\cdot\vec{z}\neq\vec{\nabla}_{g_j}\cdot\vec{z}$ for some $i,j$. Then there exists an extension $g_I$ from the strip $I$ that eventually dominates $f$.
\end{lemma}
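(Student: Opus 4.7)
I would define the averaged gradient
\[
\vec{\nabla}^* \;=\; \frac{1}{m}\sum_{k=1}^m \vec{\nabla}_{g_k},
\]
whose projection onto $W$ equals the common projection $\mathrm{proj}_W(\vec{\nabla}_{g_k})$ granted by Lemma~\ref{projection of determined gradient is equal}. I would take the period $p^*$ to be an integer multiple of the global period $p$ of Lemma~\ref{sunshine lemma}, with the multiplier fixed in the domination step below. The extension $g_I:\N^d\to\Z$ then has the form $g_I(\vec{x}) = \vec{\nabla}^*\cdot\vec{x} + B(\barx\mod p^*)$ for a periodic offset $B$ specified as follows.

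\textbf{Choosing the offsets.} On every congruence class $\overline{\vec{v}}\mod p^*$ that meets the strip $I$, I would set $B(\overline{\vec{v}})$ so that $g_I$ agrees with $f$ on $I\cap\overline{\vec{v}}$. This is well-defined: on each such intersection, $f$ is affine with gradient having the prescribed $W$-projection (Lemma~\ref{projection of determined gradient is equal}), and since $I\subseteq\mathrm{aff}(I) = \vec{u}+W$, the $W^\perp$-part of $\vec{\nabla}^*$ contributes only a constant on $I$ that is absorbed into $B$. On every remaining class, I would set $B$ as large as possible subject to $g_I$ being nondecreasing; this supremum is finite, capped by the already-fixed offsets on adjacent classes.

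\textbf{Domination argument.} For an integer $\vec{x}\geq\vec{n}$ with $\vec{n}$ sufficiently large, decompose $\vec{x}-\vec{u}=\vec{w}+\vec{z}$ with $\vec{w}\in W$ and $\vec{z}\in W^\perp$. If $\vec{z}=\vec{0}$, then $\vec{x}\in\mathrm{aff}(I)$; within $\mathrm{aff}(I)$ the strip $I$ has full-dimensional recession cone, so I would replay the argument of Lemma~\ref{determined extension eventually dominates f} localized to $\mathrm{aff}(I)$, producing, in case $g_I<f$ infinitely often on $\mathrm{aff}(I)$, a contradiction sequence inside $I$ that violates Lemma~\ref{dickson contradiction lemma} and hence oblivious computability. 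If $\vec{z}\neq\vec{0}$, then for sufficiently large $\vec{x}$ the point lies in some determined neighbor so that $f(\vec{x})\leq\min_k g_k(\vec{x})$. The hypothesis, together with compactness of the unit sphere in $W^\perp$ and the finiteness of the $g_k$, yields a constant $\gamma>0$ with
\[
\vec{\nabla}^*\cdot\vec{z} \;-\; \min_k \vec{\nabla}_{g_k}\cdot\vec{z} \;\geq\; \gamma\|\vec{z}\|
\quad\text{for all }\vec{z}\in W^\perp.
\]
For $\vec{x}$ in a class not meeting $I$, Lemma~\ref{closest lattice point to affine space} forces $\|\vec{z}\|\geq cp^*$, so $g_I(\vec{x})\geq\min_k g_k(\vec{x})+\gamma cp^*-O(1)\geq f(\vec{x})$ once $p^*$ is chosen large enough; for $\vec{x}$ in a class meeting $I$ but with $\vec{z}\neq\vec{0}$, the forced agreement $g_I=f$ at the $I$-representative in that class plus the same averaging bound along $\vec{z}$ gives domination.

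\textbf{Main obstacle.} The technical heart is that $\vec{\nabla}^*$ is strictly smaller in the $W^\perp$ direction than $\max_k \vec{\nabla}_{g_k}\cdot\vec{z}$, so near $\mathrm{aff}(I)$ the extension $g_I$ could grow more slowly than the ``worst'' neighbor, risking $g_I<f$. The large-period trick is precisely what pushes every integer point outside $I$'s congruence classes to distance $\geq cp^*$ from $\mathrm{aff}(I)$, allowing the averaging gain $\gamma\|\vec{z}\|$ to swamp the $O(1)$ offset discrepancy; balancing $p^*$ against these discrepancies, and handling the on-$\mathrm{aff}(I)$ case via a lower-dimensional recurrence of Lemma~\ref{determined extension eventually dominates f}, are the delicate parts.
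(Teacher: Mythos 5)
Your construction matches the paper's: average gradient, enlarged period $p^*$, offsets forced on classes meeting $I$ to agree with $f$, maximal offsets elsewhere, and a split between $\mathrm{aff}(I)$ and its complement. But the domination step misapplies Lemma~\ref{closest lattice point to affine space} and consequently has a real gap.

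Lemma~\ref{closest lattice point to affine space} gives a distance bound of the form $\mathrm{dist}(\vec{y},\mathrm{aff}(I)) \geq c p^*$ only for points $\vec{y}$ lying in the \emph{same} congruence class $\bmod\ p^*$ as a point of $\mathrm{aff}(I)$ (so in particular, classes meeting $I$). You invoke it for $\vec{x}$ ``in a class not meeting $I$.'' For such a class the lemma says nothing, and indeed these points can sit at distance $O(1)$ from $\mathrm{aff}(I)$ no matter how large $p^*$ is: the lower bound $\gamma\|\vec{z}\|$ from the averaging argument then cannot swamp the $O(1)$ offset discrepancy, and the inequality $g_I(\vec{x})\geq \min_k g_k(\vec{x})+\gamma c p^* - O(1)$ simply does not hold. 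Conversely, the class-meeting-$I$, $\vec{z}\neq\vec{0}$ case is exactly where Lemma~\ref{closest lattice point to affine space} does apply, and you don't invoke it there.

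What is missing is the reduction the paper makes: it suffices to prove $g_I \geq f$ on $I^* = (I+p^*\Z^d)\cap\N^d$, because the maximal-offset choice of $B^*$ on the remaining classes guarantees that for any $\vec{x}\notin I^*$ there is $\vec{y}\in I^*$ with $\vec{y}\geq\vec{x}$ and $g_I(\vec{x})=g_I(\vec{y})$, and then $g_I(\vec{x})=g_I(\vec{y})\geq f(\vec{y})\geq f(\vec{x})$ since $f$ is nondecreasing. Once you restrict to $I^*$, Lemma~\ref{closest lattice point to affine space} applies exactly when $\vec{z}\neq\vec{0}$, giving $\|\vec{z}\|\geq cp^*$. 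A secondary point the paper is careful about and you should be too: the $O(1)$ offset discrepancy between $g_I$ and each $g_j$ must be bounded \emph{independently of $p^*$}, otherwise ``choose $p^*$ large enough'' is circular. On $I^*$ this holds because $B^*$ there is determined by $\barx\bmod p$ (not $p^*$); on the other classes $B^*$ is the result of a maximization that a priori depends on $p^*$, which is another reason your case (b) cannot be argued directly and must go through the $I^*$ reduction.
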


\newcommand{\nabavg}{\vec{\nabla}_{\mathrm{avg}}}

\begin{proof}
Let $D_1,\ldots,D_m$ be determined neighbors of $U$, with a quilt-affine extension from each $D_i$ $g_i(\vec{x})=\vec{\nabla}_{g_i}\cdot\vec{x}+B_{g_i}(\barx\mod p)$. Let $\nabavg=\frac{1}{m}\sum_{i=1}^m\vec{\nabla}_{g_i}\in\Q^d$. 
We will define 
$p^* \in \N_+$ and $B^*:\Z^d/p^*\Z^d \to \Q$
and let the extension be 
\[
g_I(\vec{x})=\nabavg\cdot\vec{x}+B^*(\barx\mod p^*),
\]
which is quilt-affine with a potentially larger period $p^*=kp$ for some $k\in\N_+$, so $(\barx\mod p^*)\subseteq(\barx\mod p)$.
To ensure that $g_I$ still has integer outputs, we pick $p^*$ such that $p^*\nabavg\in\N^d$. We will show later that $p^*$ can be chosen large enough to make $g_I$ eventually dominate $f$. Let $\vec{r}\in I$ be a fixed reference vector we will use to define $B^*$.

First we show that for each $\vec{u}\in I$, $B^*(\baru\mod p^*)$ is uniquely defined so that $g_I(\vec{x})=f(\vec{x})$ for all $\vec{x}\in I\cap(\baru\mod p^*)$ (in other words there are fixed values of $B^*$ for inputs in $I$ that will make $g_I$ an extension of $f$ from $I$). 
By Lemma \ref{sunshine lemma} we have affine partial function $f|_{U\cap(\baru\mod p)}(\vec{x})=\nabla_{\baru}\cdot\vec{x}+b_{\baru}$ and by Lemma \ref{projection of determined gradient is equal} we have $\mathrm{proj}_W(\vec{\nabla}_{\baru})=\mathrm{proj}_W(\vec{\nabla}_{g_i})$ for all gradients of determined neighbor extensions $g_i$. Thus we also have $\mathrm{proj}_W(\vec{\nabla}_{\baru})=\mathrm{proj}_W(\nabavg)$, so $\vec{\nabla}_{\baru}\cdot\vec{w}=\nabavg\cdot\vec{w}$ for all $\vec{w}\in W$. We then define $B^*(\baru\mod p^*)=\vec{\nabla}_{\baru}\cdot\vec{r}-\nabavg\cdot\vec{r}+b_{\baru}$, which depends only on the congruence class $\baru\mod p$ (but doesn't depend on $p^*$). We can now verify that for any $\vec{x}\in I\cap(\baru\mod p^*)$, where $\vec{x}-\vec{r}\in W$ by definition of the strip $I$, we have
\begin{align*}
    g_I(\vec{x})&=\nabavg\cdot\vec{x}+B^*(\barx\mod p^*)\\
    &=\nabavg\cdot(\vec{x}-\vec{r})+\nabavg\cdot\vec{r}+\vec{\nabla}_{\baru}\cdot\vec{r}-\nabavg\cdot\vec{r}+b_{\baru}\\
    &=\vec{\nabla}_{\baru}\cdot(\vec{x}-\vec{r})+\vec{\nabla}_{\baru}\cdot\vec{r}+b_{\baru}
    \ \ \ \ \text{since $\vec{x} - \vec{r} \in W$}
    \\
    &=\nabla_{\baru}\cdot\vec{x}+b_{\baru}=f|_{U\cap(\baru\mod p)}(\vec{x})=f(\vec{x})
\end{align*}

$g_I$ is currently a partial function, only defined on the set $I^*=(I+p^*\Z^d)\cap\N^d$ of points congruent $\mod p^*$ to some $\vec{u}\in I$. For all other congruence classes $\bara\mod p^*\in\Z^d/p^*\Z^d$ such that $\bara\cap I=\emptyset$, we will define $B^*(\bara)$ to be as large as possible while still having $g_I$ be nondecreasing. For $g_I$ to be nondecreasing, $g_I(\vec{x})\leq\min_{\vec{y}\geq\vec{x}}g_I(\vec{y})$ for all $\vec{x}\in\N^d$. We maximize $B^*(\bara)$ such that for all $\vec{x}\in\bara$,
\[
g_I(\vec{x})=\min_{\vec{y}\in I^*,\vec{y}\geq\vec{x}}g_I(\vec{y})
\]
Observe that since the finite differences above each $\vec{x}$ are periodic (as observed formally to prove Lemma \ref{CRC for quilt-affine function}), this required offset $B^*(\vec{a})$ depends only on the congruence class $\bara$ of $\vec{x}$.

Now in order to show that $g_I$ eventually dominates $f$, we claim it suffices to show that $g_I$ eventually dominates $f$ on $I^*$: for some $\vec{n}\in\N^d$, $g_I(\vec{x})\geq f(\vec{x})$ for all $\vec{x}\in I^*$ with $\vec{x}\geq\vec{n}$. If this holds, then for any $\vec{x}\notin I^*$ with $\vec{x}\geq\vec{n}$, we have $g_I(\vec{x})=g_I(\vec{y})$ for some $\vec{y}\in I^*,\vec{y}\geq\vec{x}$, so
\[
g_I(\vec{x})=g_I(\vec{y})\geq f(\vec{y})\geq f(\vec{x})
\]
showing that $g_I$ eventually dominates $f$ as long as $g_I$ eventually dominates $f$ on $I^*$.

We will next show $g_I$ eventually dominates $f$ for $\vec{x}$ sufficiently far from $\mathrm{aff}(I)$ by comparing $g_I$ to the extension $g_j(\vec{x})=\vec{\nabla}_{g_j}\cdot\vec{x}+B_{g_j}(\barx\mod p)$ from some determined neighbor $D_j$. 
Let $\vec{x}\in I^*$ and let $g_j$ be the extension of any determined neighbor $D_j$. 
Writing $\vec{x}=\vec{r}+\vec{w}+\vec{z}$ for the fixed reference $\vec{r}\in I$, $\vec{w}=\mathrm{proj}_W(\vec{x}-\vec{r})\in W$ and $\vec{z}=\mathrm{proj}_{W^{\perp}}(\vec{x}-\vec{r})\in W^{\perp}$, we have
\begin{align*}
    g_I(\vec{x})-g_j(\vec{x})&=\nabavg\cdot(\vec{r}+\vec{w}+\vec{z})+B^*(\barx\mod p^*)-\vec{\nabla}_{g_j}\cdot(\vec{r}+\vec{w}+\vec{z})-B_{g_j}(\barx\mod p)\\
    &=\nabavg\cdot\vec{z}-\vec{\nabla}_{g_j}\cdot\vec{z}+[\nabavg\cdot\vec{r}-\vec{\nabla}_{g_j}\cdot\vec{r}+B^*(\barx\mod p^*)-B_{g_j}(\barx\mod p)]
\end{align*}
Notice that the term $[\ldots]$ depends only on $j$ and $\barx\mod p$ (since $B^*$ was uniquely defined on $I^*$ based only on $\barx\mod p$). Thus minimizing over all finitely many $j$ and $\barx\mod p$ gives some (possibly negative) lower bound $-q\in\Q$ (which crucially does not depend on the choice $p^*$) such that
\[
 g_I(\vec{x})-g_j(\vec{x})\geq \nabavg\cdot\vec{z}-\vec{\nabla}_{g_j}\cdot\vec{z}-q
\]
for all $\vec{x}\in I^*$ and extensions $g_j$.

Now we use the crucial assumption that for any $\vec{z}\in W^{\perp}$,  $\vec{\nabla}_{g_i}\cdot\vec{z}\neq\vec{\nabla}_{g_j}\cdot\vec{z}$ for some $i,j$. Considering first unit vectors $\vec{v}\in W^{\perp}$ with $\norm{\vec{v}}=1$, then there is some $j$ minimizing $\vec{\nabla}_{g_j}\cdot\vec{v}$ with $\nabavg\cdot\vec{v}-\vec{\nabla}_{g_j}\cdot\vec{v}>0$. 
We claim that there exists $\epsilon>0$ such that $\nabavg\cdot\vec{v}-\vec{\nabla}_{g_j}\cdot\vec{v}\geq\epsilon$ for all such $\vec{v}$ and their corresponding $j$.
If not, then there is a sequence $(\vec{v}_i)$ of unit vectors with $\nabavg\cdot\vec{v}_i-\vec{\nabla}_{g_j}\cdot\vec{v}_i\to0$. Since the unit ball is compact, there must be a subsequence of $(\vec{v}_i)$ converging to some $\vec{v}$, which implies $\nabavg\cdot\vec{v}-\vec{\nabla}_{g_j}\cdot\vec{v}=0$.
This completes the claim that such $\epsilon>0$ exists.
Then as long as $\norm{\vec{z}}\geq q/\epsilon$, we have for some $j$ (which minimizes $\vec{\nabla}_{g_j}\cdot\vec{z}$)
\[
 g_I(\vec{x})-g_j(\vec{x})\geq\norm{\vec{z}}\qty(\nabavg\cdot\frac{\vec{z}}{\norm{\vec{z}}}-\vec{\nabla}_{g_j}\cdot\frac{\vec{z}}{\norm{\vec{z}}})-q\geq\norm{\vec{z}}\epsilon-q\geq 0
\]

Since $\vec{x}=\vec{r}+\vec{w}+\vec{z}$ for some $\vec{r}+\vec{w}\in\mathrm{aff}(I)$ and $\vec{z}\in W^{\perp}$, we have $\norm{\vec{z}}=\mathrm{dist}(\vec{x},\mathrm{aff}(I))$. Thus we have shown for $\vec{x}$ sufficiently far from $\mathrm{aff}(I)$, $g_I(\vec{x})\geq g_j(\vec{x})$ for some quilt-affine $g_j$ which itself eventually dominates $f$ (by Lemma \ref{determined extension eventually dominates f}). Crucially this bound $\norm{\vec{z}}\geq q/\epsilon$ did not depend on $p^*$, so we will now use Lemma \ref{closest lattice point to affine space} to choose $p^*$ large enough that $\mathrm{dist}(\vec{x},\mathrm{aff}(I))\geq q/\epsilon$ for all $\vec{x}\in I^*$ with $\vec{x}\notin\mathrm{aff}(I)$. Since such $\vec{x}\in(\baru\mod p^*)$ for some $\vec{u}\in I\subset\mathrm{aff}(I)$ and $\mathrm{aff}(I)$ is a rational affine subspace, by Lemma \ref{closest lattice point to affine space} there is some bound $c>0$ (depending only on $\mathrm{aff}(I)$) such that $\mathrm{dist}(\vec{x},\mathrm{aff}(I))\geq cp^*$. Thus we choose a large enough multiple $p^*=kp$ such that $cp^*\geq q/\epsilon$.

We have shown that $g_I(\vec{x})$ eventually dominates $f(\vec{x})$ for all $\vec{x}\in I^*$ with $\vec{x}\notin\mathrm{aff}(I)$. It finally remains to show that $g_I$ eventually dominates $f$ on $\mathrm{aff}(I)$. This is true for the same reasons as Lemma \ref{determined extension eventually dominates f} (because $g_I$ is a quilt-affine extension of $f$ from $I$) following the same proof strategy. In the proof of Lemma \ref{determined extension eventually dominates f} we assumed toward contradiction a sequence of ``bad points'', and used the fact that a determined region $D$ contained arbitrarily large open balls to construct a contradiction sequence for Lemma \ref{dickson contradiction lemma}. Now we are only showing $g_I$ eventually dominates $f$ on $\mathrm{aff}(I)$, so all ``bad points'' would be in $\mathrm{aff}(I)$. We can construct a contradiction sequence $(\vec{a}_i)\in I$ by the same argument. Since $W=\mathrm{span}(\rec(U))$, by the same argument as Lemma \ref{determined region has large balls}, $\rec(U)$ contains arbitrarily large open balls within the subspace $W$.  This is sufficient to ensure $(\vec{a}_i)\in I$, since the sequence of vectors $(\vec{v}_i)$ with $\vec{a}_i+\vec{v}_i$ being ``bad points'' are in $W$ because all ``bad points'' are in $\mathrm{aff}(I)$.

Thus $g_I$ eventually dominates $f$ everywhere, as desired.
\end{proof}

Lemma \ref{strip extension by averaging} crucially assumed that along any $\vec{z}\in W^\perp$, the gradients of all extensions from determined neighbors are not equal. If this does not hold, $f$ could fail to be obliviously-computable. For example, consider the function 
\begin{equation}
\label{example:depressed-strip}
f(x_1,x_2)=
\begin{cases}
x_1+x_2+1, & \text{if } x_1\neq x_2\\
x_1+x_2, & \text{if } x_1 = x_2
\end{cases}
\end{equation}
which is a single affine function, depressed by $1$ along the diagonal $x_1=x_2$. $f$ is semilinear and nondecreasing. The two determined regions where $x_1>x_2$ and $x_1<x_2$ have the same quilt-affine extension ($(1,1)\cdot\vec{x}+1$), which eventually dominates $f$. On the underdetermined region, which here consists of just the single strip where $x_1=x_2$, $f$ is strictly smaller. There does not actually exist given a quilt-affine extension from this strip that eventually dominates $f$. (One can show directly $f$ is not obliviously-computable by Lemma \ref{dickson contradiction lemma}, with $\vec{a}_i=(i,0)$ and $\vec{\Delta}_{ij}=(0,j)$).

The remaining case thus serves to disallow general versions of this counterexample. Lemma \ref{strip extension by averaging} assumed for all $\vec{z}\in W^\perp$, $\vec{\nabla}_{g_i}\cdot\vec{z}\neq\vec{\nabla}_{g_j}\cdot\vec{z}$ for some $i,j$. We will now consider the negation: that for some $\vec{z}\in W^\perp$ we have $\vec{\nabla}_{g_i}=\vec{\nabla}_{g_j}$ for all $i,j$. To proceed in this case, we will need to be able to identify the \emph{neighbor of $U$ in the direction of} $\vec{z}$.

For example, consider the under-determined eventual region 5 in Fig.~\ref{fig:3d-regions}. The determined subspace $W$ is 1D, while the orthogonal complement $W^\perp$ is a 2D. For each $\vec{z}\in W^\perp$, the neighbor in the direction of $\vec{z}$ will correspond to one of the 8 other pictured regions.




We now identify which threshold hyperplanes can distinguish a region from its neighbors. Recall the threshold hyperplanes $H_i=\{\vec{x}\in\R^d:\vec{t}_i\cdot\vec{x}=h_i\}$
for $i=1,\ldots,l$.
We now show that some of these hyperplanes must be parallel to all vectors in $\rec(U)$:

\begin{lemma}
\label{lem-neighbor-thresholds-exist}
Let $U$ be an under-determined eventual region. Then there exists some threshold hyperplane $H_i=\{\vec{x}\in\R^d:\vec{t}_i\cdot\vec{x}=h_i\}$ such that $\vec{t}_i\cdot\vec{y}=0$ for all $\vec{y}\in\rec(U)$.
\end{lemma}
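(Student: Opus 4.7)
The plan is to combine two ingredients: a standard polyhedral-geometry fact and the eventuality of $U$. The fact: for any polyhedral cone $C = \{\vec{y} : A\vec{y} \geq \vec{0}\}$ in $\R^d$, letting $J$ be the set of rows $A_j$ that vanish on all of $C$ (i.e., $A_j \vec{y} = 0$ for every $\vec{y} \in C$), we have $\mathrm{span}(C) = \{\vec{y} : A_j \vec{y} = 0 \text{ for all } j \in J\}$. The nontrivial inclusion ($\supseteq$) follows by taking a point $\vec{w}$ in the relative interior of $C$: the constraints tight at $\vec{w}$ are exactly those in $J$, so for any $\vec{y}$ in the candidate subspace, both $\vec{w} + \epsilon \vec{y}$ and $\vec{w} - \epsilon \vec{y}$ remain in $C$ for sufficiently small $\epsilon > 0$, exhibiting $\vec{y}$ as a difference of two points of $C$.

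I will apply this to $C = \rec(U) = \{\vec{y} \in \R^d_{\geq 0} : S_U T \vec{y} \geq \vec{0}\}$, whose defining inequalities are the nonnegativity constraints $\vec{y}(k) \geq 0$ together with the threshold constraints $s_i(\vec{t}_i \cdot \vec{y}) \geq 0$ for $i = 1, \ldots, l$. The key preliminary is producing a vector $\vec{v} \in \rec(U)$ with every coordinate strictly positive. For this I use the Minkowski--Weyl decomposition $U = H + \rec(U)$ with $H$ a bounded polytope: since $U$ is eventual, for each $n$ there exists $\vec{x}_n \in U$ with all coordinates at least $n$, and writing $\vec{x}_n = \vec{h}_n + \vec{y}_n$ with $\vec{h}_n \in H$ and $\vec{y}_n \in \rec(U)$, the boundedness of $H$ forces $\vec{y}_n(k) \to \infty$ for every coordinate $k$, so $\vec{y}_n > \vec{0}$ for all sufficiently large $n$.

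With such a strictly positive $\vec{v} \in \rec(U)$ in hand, no nonnegativity constraint $\vec{y}(k) \geq 0$ is tight at $\vec{v}$, and hence none lies in $J$. But $U$ being under-determined means $\dim \rec(U) < d$, so $\mathrm{span}(\rec(U))$ is a proper subspace of $\R^d$; by the polyhedral fact, $J$ must be nonempty. Since $J$ contains no nonnegativity constraint, it must contain some threshold constraint $s_i(\vec{t}_i \cdot \vec{y}) \geq 0$, which gives $\vec{t}_i \cdot \vec{y} = 0$ for every $\vec{y} \in \rec(U)$, as required.

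The only delicate point is the polyhedral-geometry fact itself together with the production of the strictly positive $\vec{v}$; neither is deep, but both warrant explicit justification. Everything else is a straightforward chain of implications: under-determined $\Rightarrow$ span is proper $\Rightarrow$ some defining constraint vanishes on the whole cone, and eventual $\Rightarrow$ the vanishing constraint cannot be a nonnegativity one, so it must come from a threshold hyperplane.
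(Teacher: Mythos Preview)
Your proof is correct, and the underlying idea is the same as the paper's: among the inequalities defining the polyhedral cone $\rec(U)$, those that vanish identically on the cone cut out its linear span, so under-determinedness forces at least one such inequality, and eventuality rules out the nonnegativity constraints. The packaging differs. The paper argues by contradiction: assuming every $\vec{t}_i$ has a witness $\vec{y}_i\in\rec(U)$ with $\vec{t}_i\cdot\vec{y}_i\neq 0$, it sums the $\vec{y}_i$ to obtain a single $\vec{y}$ at which every threshold inequality is strict, and declares $\vec{y}\in\mathrm{int}(\rec(U))$, contradicting that $U$ is under-determined. You instead state the general span characterization for polyhedral cones, prove it via a relative-interior point, and argue directly that $J$ is nonempty and contains no coordinate constraint. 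One point in your favor: you explicitly invoke eventuality (through the Minkowski--Weyl decomposition $U=H+\rec(U)$) to produce a strictly positive $\vec{v}\in\rec(U)$, which is exactly what is needed to exclude the nonnegativity constraints from $J$. The paper's argument requires the same fact to place its summed $\vec{y}$ in $\R^d_{>0}$ (and hence in the topological interior), but does not spell this step out; your treatment is cleaner on that detail.
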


\begin{proof}
Assume toward contradiction that for all $\vec{t}_i$, there exists some $\vec{y}_i\in\rec(U)$ such that $\vec{t}_i\cdot\vec{y}_i\neq 0$, so $s_i(\vec{t}_i\cdot \vec{y}_i)>0$, where $s_i$ is the $i$th sign that defined the region $U$, and $s_j(\vec{t}_j\cdot\vec{y}_i)\geq 0$ for all $j=1,\ldots,l$ since $\vec{y}_i\in\rec(U)$. Then let $\vec{y}=\sum_{i=1}^l\vec{y}_i\in\rec(U)$ since $\rec(U)$ is closed under addition, so $s_i(\vec{t}_i\cdot\vec{y})>0$ for all $i=1,\ldots,l$.

Recall that $\rec(U)=\{\vec{x}\in\R^d_{\geq 0}:(\forall i)s_i(\vec{t}_i\cdot\vec{x})\geq 0\}$ is a closed convex polyhedron that is the intersection of closed half-spaces. Then $\mathrm{int}(\rec(U))=\{\vec{x}\in\R^d_{>0}:(\forall i)s_i(\vec{t}_i\cdot\vec{x})>0\}$ is the intersection of the respective open half-spaces, and we have $\vec{y}\in\mathrm{int}(\rec(U))$. Thus because $\mathrm{int}(\rec(U))$ is nonempty, $\rec(U)$ must be full-dimensional, contradicting that $U$ is an under-determined region.
\end{proof}

We call such hyperplanes \emph{neighbor-separating hyperplanes} for reasons that will be made clear shortly. If $\vec{t}_i\cdot\vec{y}=0$ for all $\vec{y}\in\rec(U)$, we also have $\vec{t}_i\cdot\vec{y}=0$ for all $\vec{y}\in W$, so by definition $\vec{t}_i\in W^\perp$. Then let 
$L_U = \{ i \in \{1,\ldots,l\} \mid \vec{t}_i\in W^\perp \}$
be the subset of labels of all such neighbor-separating hyperplanes for $U$. 

For example, in Fig.~\ref{fig:3d-regions}, for under-determined region 5, all four hyperplanes are neighbor-separating hyperplanes. For under-determined region 6, only the pair of horizontally oriented hyperplanes are neighbor-separating.

Recalling Definition \ref{def-region}, let $S_U=\mathrm{diag}(s_1,\ldots,s_l)$ be the sign matrix that defined $U$. For $\vec{z}\in W^\perp$,
define $R_\vec{z}$, the \emph{neighbor of $U$ in the direction of} $\vec{z}$, by a related sign matrix $S_\vec{z}=\mathrm{diag}(s_1',\ldots,s_l')$, where if $i\in L_U$ and $\mathrm{sign}(\vec{t}_i\cdot\vec{z})=-s_i$, 
then let $s_i'=-s_i$, but otherwise $s_i'=s_i$ for all other $i=1,\ldots,l$. Intuitively, for all neighbor-separating hyperplanes, $R_\vec{z}$ is on the same side as the direction $\vec{z}$, but is otherwise identical to $U$.

The following lemma justifies the use of the word ``neighbor'' in the previous definition,
and further shows that such a neighbor is ``more determined'' 
(recession cone has higher dimension) than $U$.

\begin{lemma}
\label{lem-neighbor-in-direction-z}
Let $U$ be an under-determined eventual region with $W = \mathrm{span}(\rec(U))$,
let $\vec{z} \in W^\perp$,
and let the region $R_\vec{z}$ be the neighbor of $U$ in the direction of $\vec{z}$. Then $R_\vec{z}\cap\N^d$ is nonempty, $R_\vec{z}$ is a neighbor of $U$ and furthermore $\dim \rec(U)<\dim \rec(R_\vec{z})$.
\end{lemma}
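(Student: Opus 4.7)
The plan is to establish all three claims at once by exhibiting a single vector $\vec{v}' \in \rec(R_\vec{z}) \setminus W$ of the form $\vec{v}' = \vec{v} + \epsilon \vec{z}$, where $\vec{v}$ is a carefully chosen ``deep'' point of $\rec(U)$ and $\epsilon > 0$ is small. Geometrically, moving from $U$ in direction $\vec{z}$ crosses exactly the neighbor-separating hyperplanes $H_i$ ($i \in L_U$) with $\vec{t}_i \cdot \vec{z} \neq 0$; the sign flips in $S_\vec{z}$ record precisely these crossings, and $\rec(R_\vec{z})$ gains the direction $\vec{z}$ orthogonal to $W$ on top of the recession directions of $U$.

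First I construct a rational vector $\vec{v} \in \rec(U)$ with $\vec{v} > \vec{0}$ componentwise and the strict property $s_i(\vec{t}_i \cdot \vec{v}) > 0$ for every $i \notin L_U$. A strictly positive vector in $\rec(U)$ exists because $U$ is eventual. The strict sign conditions come from a relative-interior argument on the polyhedral cone $\rec(U)$: if $s_i(\vec{t}_i \cdot \vec{x}) \geq 0$ were an implicit equality on $\rec(U)$ for some $i \notin L_U$, then $\vec{t}_i$ would be orthogonal to $\rec(U)$ and hence to $W$, forcing $i \in L_U$, a contradiction. So each such inequality admits some $\vec{y}_i \in \rec(U)$ with $s_i(\vec{t}_i \cdot \vec{y}_i) > 0$; a positive combination of these $\vec{y}_i$ with the strictly positive vector yields $\vec{v}$, taken integer after scaling. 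Since $S_\vec{z}$ depends only on the signs $\mathrm{sign}(\vec{t}_i \cdot \vec{z})$ for $i \in L_U$, I may replace $\vec{z}$ by a rational vector in $W^\perp$ with the same sign pattern on the $\vec{t}_i$ (exists by density), then set $\vec{v}' = \vec{v} + \epsilon \vec{z}$ for small rational $\epsilon > 0$, rescaled to $\N^d$.

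Second, I verify $\vec{v}' \in \rec(R_\vec{z})$ by checking $s_i'(\vec{t}_i \cdot \vec{v}') \geq 0$ for each $i$. For $i \notin L_U$, $s_i' = s_i$ and $s_i(\vec{t}_i \cdot \vec{v}) > 0$, so the inequality remains strict for small $\epsilon$. For $i \in L_U$, $\vec{t}_i \in W^\perp$ and $\vec{v} \in W$ force $\vec{t}_i \cdot \vec{v} = 0$, leaving $s_i'(\vec{t}_i \cdot \vec{v}') = \epsilon \, s_i'(\vec{t}_i \cdot \vec{z}) \geq 0$ directly from the defining choice $s_i' = \mathrm{sign}(\vec{t}_i \cdot \vec{z})$ whenever that sign is nonzero (and the quantity is $0$ otherwise).

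The three conclusions then follow quickly. For $R_\vec{z} \cap \N^d \neq \emptyset$: pick any $\vec{u} \in U \cap \N^d$ and apply the same sign analysis to $s_i'(\vec{t}_i \cdot (\vec{u} + M\vec{v}') - h_i)$; nonnegativity holds for every sufficiently large integer $M$. For $\rec(U) \subseteq \rec(R_\vec{z})$: each $\vec{y} \in \rec(U) \subseteq W$ satisfies $\vec{t}_i \cdot \vec{y} = 0$ for $i \in L_U$, so the sign flips are immaterial, and the remaining inequalities are inherited from $\rec(U)$. For the strict dimension inequality: $\vec{v}'$ has nonzero $W^\perp$-component $\epsilon \vec{z}$, so $\vec{v}' \notin W = \mathrm{span}(\rec(U))$, which combined with $\rec(U) \subseteq \rec(R_\vec{z})$ yields $\dim \rec(R_\vec{z}) \geq \dim \rec(U) + 1$. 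The main technical obstacle is the relative-interior construction of $\vec{v}$ with all the required strict sign conditions simultaneously; once $\vec{v}$ is in hand, everything else reduces to routine sign bookkeeping and the crucial fact that $\vec{t}_i \in W^\perp$ for neighbor-separating indices.
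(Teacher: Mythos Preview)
Your proposal is correct and follows essentially the same approach as the paper: construct a relative-interior point $\vec{v}\in\rec(U)$ with $s_i(\vec{t}_i\cdot\vec{v})>0$ for all $i\notin L_U$ by summing witnesses $\vec{y}_i$, perturb by $\epsilon\vec{z}$ to obtain $\vec{v}'\in\rec(R_\vec{z})\setminus W$, and then push an integer point $\vec{u}\in U$ along $\vec{v}'$ into $R_\vec{z}$. Your version is in fact slightly more careful than the paper's in one respect: you explicitly arrange $\vec{v}>\vec{0}$ componentwise (using that $U$ is eventual), which is needed to guarantee $\vec{v}'\in\R^d_{\geq 0}$ when $\vec{z}$ has negative coordinates; the paper glosses over this when asserting $\vec{y}^+\in\N^d$.
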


\begin{proof}
We first show that $R_\vec{z}$ is a neighbor of $U$, i.e., that
\[
\rec(U)=\{\vec{x}\in\R^d_{\geq 0}:S_UT\vec{x}\geq 0\}\subset\rec(R_\vec{z})=\{\vec{x}\in\R^d_{\geq 0}:S_\vec{z}T\vec{x}\geq 0\}.
\]
Let $\vec{x}\in\rec(U)$. Then for all $i\in L_U$, $s_i\cdot\vec{x}=0$, so $s_i'\cdot\vec{x}\geq 0$, and otherwise $s_i'=s_i$, so $S_\vec{z}T\vec{x}\geq 0$.
This implies $\vec{x} \in \rec(R_\vec{z})$,
i.e., $\rec(U) \subset \rec(R_\vec{z})$,
proving the claim that $R_\vec{z}$ is a neighbor of $U$.

Next we argue that $\dim \rec(U)<\dim \rec(R_\vec{z})$.
Now similar to the proof of Lemma \ref{lem-neighbor-thresholds-exist}, for each $i\notin L_U$ there exists some $\vec{y}_i\in\rec(U)$ such that $\vec{t}_i\cdot\vec{y}_i\neq 0$, so $s_i(\vec{t}_i\cdot\vec{y}_i)>0$. Then taking $\vec{y}=\sum_{i\notin L_U}\vec{y}_i$ we have $\vec{y}\in\rec(U)$, and also $s_i(\vec{t}_i\cdot\vec{y})>0$ for all $i\notin L_U$.\footnote{
    In fact $\vec{y}$ can be shown to be in the \emph{relative interior} of $\rec(U)$, where the relative interior is the interior within the affine hull.\cite{de2013algebraic}
}

We will show for some $\epsilon>0$, $\vec{y}+\epsilon\vec{z}\in\rec(R_\vec{z})$, which will imply $\vec{z}\in\mathrm{span}(\rec(R_\vec{z}))$ so $\dim \rec(R_\vec{z})\geq\dim \rec(U)+1$ since $\vec{z}\in W^\perp$. 
Intuitively, to show this, we perturb the vector $\vec{y}$ by a slight amount in the direction of $\vec{z}$ to be on the correct side of all neighbor-separating hyperplanes, while remaining on the same side of all other hyperplanes. 
Formally,
for all $i\in L_U$, we have $s'_i(\vec{t}_i\cdot\vec{z})\geq 0$ by construction of $s'_i$. This might not hold for $i\notin L_U$, but in that case $s'_i(\vec{t}_i\cdot\vec{y})>0$. Thus we can pick some small enough $\epsilon>0$ such that $s'_i(\vec{t}_i\cdot(\vec{y}+\epsilon\vec{z}))\geq0$ for all $i\notin L_U$. For $i\in L_U$, we have $\vec{t}_i\cdot\vec{y}=0$ (by definition of $L_U$ since $\vec{y}\in\rec(U)$), so we also have $s'_i(\vec{t}_i\cdot(\vec{y}+\epsilon\vec{z}))\geq0$ and thus $\vec{y}+\epsilon\vec{z}\in\rec(R_\vec{z})$. 
This concludes the claim that $\dim \rec(U)<\dim \rec(R_\vec{z})$.

Finally, we argue that $R_\vec{z}\cap\N^d$ is nonempty, so the region $R_\vec{z}$ is meaningfully defined.
We can further assume that the vector $\vec{y}^+=\vec{y}+\epsilon\vec{z}\in\N^d$ (again by density assuming that the pieces are rational and then scaling up to clear denominators). 
Now consider a point $\vec{u}\in U\cap\N^d$, so $S_U(T\vec{u}-\vec{h})\geq0$, and consider moving along $\vec{y}^+$. For all $i$ such that $s_i\neq s'_i$, we have $s_i'(\vec{t}_i\cdot\vec{y}^+)>0$. Thus for all sufficiently large constants $c$, $S_\vec{z}(T(\vec{u}+c\vec{y}^+)-\vec{h})\geq0$ so $\vec{u}+c\vec{y}^+\in R_\vec{z}$. Intuitively, the path from $U$ along the vector $\vec{y}^+$ will eventually remain in the region $R_\vec{z}$. This shows that the neighbor of $U$ in the direction of $\vec{z}$ is well-defined.
\end{proof}


For under-determined eventual region $U$, by repeatedly applying Lemma \ref{lem-neighbor-in-direction-z} using directions $\pm\vec{z}$ for any $\vec{z}\in W^{\perp}$, we can show that determined neighbors must actually exist:

\begin{corollary}
\label{cor-determined-neighbor-exists}
An under-determined eventual region $U$ has at least $2$ determined neighbors.
\end{corollary}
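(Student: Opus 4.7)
The plan is to produce both determined neighbors in a single application of Lemma~\ref{lem-neighbor-in-direction-z}, by choosing a sufficiently generic $\vec{z}\in W^\perp$ rather than iterating as the surrounding text suggests. Since $U$ is under-determined, $\dim W < d$, so $W^\perp$ is at least one-dimensional; by Lemma~\ref{lem-neighbor-thresholds-exist} the index set $L_U = \{i : \vec{t}_i \in W^\perp\}$ is nonempty. Since $L_U$ is finite, I would pick $\vec{z}\in W^\perp$ avoiding the finitely many proper subspaces $\{\vec{w}\in W^\perp : \vec{t}_i\cdot \vec{w} = 0\}$ for $i\in L_U$, so that $\vec{t}_i\cdot\vec{z}\neq 0$ for every $i\in L_U$.

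Next, I would apply Lemma~\ref{lem-neighbor-in-direction-z} to $\vec{z}$ and to $-\vec{z}$, obtaining neighbors $R_+ = R_{\vec{z}}$ and $R_- = R_{-\vec{z}}$ of $U$. Unpacking the definition of $R_{\vec{z}}$, for each $i\in L_U$ the sign-matrix entry of $R_+$ at position $i$ equals $\mathrm{sign}(\vec{t}_i\cdot\vec{z})$, whereas for $R_-$ it equals $-\mathrm{sign}(\vec{t}_i\cdot\vec{z})$. Because $L_U\neq\emptyset$ and these signs are nonzero by the choice of $\vec{z}$, the sign matrices $S_{R_+}$ and $S_{R_-}$ differ, hence $R_+\neq R_-$.

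The main step is showing both $R_+$ and $R_-$ are determined; the argument is symmetric, so consider $R_+$. Suppose for contradiction $R_+$ is under-determined. By Lemma~\ref{lem-neighbor-thresholds-exist} applied to $R_+$, some $\vec{t}_{i^*}$ is orthogonal to $\rec(R_+)$; since $\rec(U)\subseteq \rec(R_+)$, also $\vec{t}_{i^*}\in W^\perp$, so $i^*\in L_U$. On the other hand, the construction inside the proof of Lemma~\ref{lem-neighbor-in-direction-z} exhibits $\vec{y}\in\rec(U)$ and $\vec{y}+\epsilon\vec{z}\in\rec(R_+)$ for some $\epsilon>0$, which forces $\vec{z}\in\mathrm{span}(\rec(R_+))$. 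Combined with $\vec{t}_{i^*}\perp \rec(R_+)$, this gives $\vec{t}_{i^*}\cdot\vec{z}=0$, contradicting the genericity of $\vec{z}$. Hence $R_+$ is determined, and likewise $R_-$, yielding two distinct determined neighbors of $U$.

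The main obstacle I anticipate is precisely step three, namely ruling out that $R_{\vec{z}}$ might itself be under-determined. The clean way out is the genericity of $\vec{z}$: any hypothetical further neighbor-separating direction of $R_{\vec{z}}$ would have to be orthogonal to $\vec{z}$, which is precisely what was forbidden. This simultaneously bypasses the iterative recursion on $\pm\vec{z}$ hinted at in the excerpt and makes distinctness of $R_+$ and $R_-$ transparent from their opposite sign patterns on $L_U$.
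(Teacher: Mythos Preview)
Your argument is correct and takes a genuinely different route from the paper. The paper's sketch is iterative: from $U$ apply Lemma~\ref{lem-neighbor-in-direction-z} to some $\vec{z}\in W^\perp$ to get a neighbor of strictly larger recession-cone dimension, and repeat (using transitivity of the neighbor relation) until reaching a determined region; doing this for both $\pm\vec{z}$ and tracking a sign that flips at the first step and is then frozen yields two distinct determined neighbors. Your approach instead picks a generic $\vec{z}\in W^\perp$ with $\vec{t}_i\cdot\vec{z}\neq 0$ for every $i\in L_U$ and lands on determined regions in a single step, with distinctness of $R_{\vec{z}}$ and $R_{-\vec{z}}$ read off directly from their opposite sign patterns on $L_U$. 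This is cleaner: it avoids the bookkeeping of the iteration (checking at each step that the intermediate region is again eventual, and that the distinguishing sign is preserved through further applications).

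One small gap to patch: to invoke Lemma~\ref{lem-neighbor-thresholds-exist} for $R_+$ you need $R_+$ to be an \emph{eventual} region, which you did not verify. This is immediate, though: Lemma~\ref{lem-neighbor-in-direction-z} gives $R_+\cap\N^d\neq\emptyset$ and $\rec(U)\subseteq\rec(R_+)$, and since $U$ is eventual there is a strictly positive integer vector $\vec{v}\in\rec(U)\subseteq\rec(R_+)$; adding multiples of $\vec{v}$ to any integer point of $R_+$ shows $R_+$ is eventual. With that line added, your contradiction via $\vec{t}_{i^*}\perp\rec(R_+)\Rightarrow i^*\in L_U\Rightarrow \vec{t}_{i^*}\cdot\vec{z}=0$ goes through as written.
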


We are now ready to consider the remaining case left after Lemma \ref{strip extension by averaging}, when all determined neighbor gradients agree along some $\vec{z}\in W^\perp$.

\begin{lemma}
\label{strip extension as top neighbor}
Let $I$ be a strip of an under-determined eventual region $U$. Let $D_1,\ldots,D_m$ be the determined neighbors of $U$, with extensions $g_1,\ldots,g_m$. Assume there exists $\vec{z}\in W^\perp$ such that the gradients of the extensions along $\vec{z}$ are all equal: $\vec{\nabla}_{g_i}\cdot\vec{z}=\vec{\nabla}_{g_j}\cdot\vec{z}$ for all $i,j$. Let $R_\vec{z}$ be the neighbor of $U$ in the direction $\vec{z}$, with extension extension $g_\vec{z}$. Then $g_\vec{z}$ is also an extension from $I$: $g_\vec{z}(\vec{x})=f(\vec{x})$ for all $\vec{x}\in I$, so taking $g_\vec{z}$ gives an extension from $I$ which eventually dominates $f$.
\end{lemma}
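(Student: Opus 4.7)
The plan is to show that the extension $g_\vec{z}$ from the more-determined neighbor $R_\vec{z}$ simultaneously serves as an extension from the strip $I$; once we have $g_\vec{z}(\vec{x})=f(\vec{x})$ for all $\vec{x}\in I$, eventual domination on $I$ is inherited from the fact that $g_\vec{z}$ already eventually dominates $f$ as an extension from $R_\vec{z}$. That latter fact comes by induction on $\dim\rec(R)$: the base case is Lemmas~\ref{determined region has unique extension} and \ref{determined extension eventually dominates f}, and the inductive step combines Lemma~\ref{strip extension by averaging} with this lemma, since Lemma~\ref{lem-neighbor-in-direction-z} guarantees $\dim\rec(R_\vec{z})>\dim\rec(U)$.

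The first substantive step is to verify that the gradients of $f$ and $g_\vec{z}$ match along $\mathrm{aff}(I)=\vec{u}+W$. Applying Lemma~\ref{projection of determined gradient is equal} to $R_\vec{z}$ (whose determined neighbors form a subset of those of $U$) gives $\mathrm{proj}_{\mathrm{span}(\rec(R_\vec{z}))}(\vec{\nabla}_{g_\vec{z}})=\mathrm{proj}_{\mathrm{span}(\rec(R_\vec{z}))}(\vec{\nabla}_{g_i})$ for any such determined neighbor gradient $\vec{\nabla}_{g_i}$. Because $W\subseteq\mathrm{span}(\rec(R_\vec{z}))$, we get $\vec{\nabla}_{g_\vec{z}}\cdot\vec{w}=\vec{\nabla}_{g_i}\cdot\vec{w}$ for all $\vec{w}\in W$, and composing with Lemma~\ref{projection of determined gradient is equal} applied to $U$ itself shows $\vec{\nabla}_{g_\vec{z}}\cdot\vec{w}=\vec{\nabla}_{\bara}\cdot\vec{w}$ for $\vec{w}\in W$ and any congruence class $\bara$ intersecting $I$. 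Thus $f|_{I\cap\bara}$ and $g_\vec{z}|_{I\cap\bara}$ are both affine with identical slopes along $\mathrm{aff}(I)$ in each congruence class modulo $p_\vec{z}$ (which we may take to be a multiple of the global period $p$).

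It then remains to match values at one reference point $\vec{u}_0\in I\cap\bara$ per congruence class. Using the vector $\vec{y}^+=\vec{y}+\epsilon\vec{z}\in\N^d$ from the proof of Lemma~\ref{lem-neighbor-in-direction-z} with $\vec{y}$ in the relative interior of $\rec(U)$, and choosing $c$ as a large multiple of $p_\vec{z}$, the point $\vec{u}_0+c\vec{y}^+$ lies in $R_\vec{z}\cap\bara$, so $f(\vec{u}_0+c\vec{y}^+)=g_\vec{z}(\vec{u}_0+c\vec{y}^+)=g_\vec{z}(\vec{u}_0)+c\vec{\nabla}_{g_\vec{z}}\cdot\vec{y}^+$. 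On the other hand, by the hypothesis that all determined-neighbor gradients of $U$ coincide along $\vec{z}$ combined with the $W$-agreement already established, one computes $f(\vec{u}_0+c\vec{y}^+)-f(\vec{u}_0)=c\vec{\nabla}_{g_\vec{z}}\cdot\vec{y}^+$ directly from the affine pieces of $f$ along the segment, forcing $f(\vec{u}_0)=g_\vec{z}(\vec{u}_0)$. Combined with the gradient agreement, this yields $f=g_\vec{z}$ on $I$, completing the proof.

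The main obstacle is making the segment-change calculation rigorous: the path from $\vec{u}_0$ to $\vec{u}_0+c\vec{y}^+$ may transit through intermediate regions that are not direct neighbors of $U$, and a straight summation of affine gradients requires controlling all these intermediate behaviors simultaneously. A cleaner alternative is to argue by dichotomy using Lemma~\ref{dickson contradiction lemma}: if $f(\vec{u}_0)>g_\vec{z}(\vec{u}_0)$ then translating by $c\vec{y}^+$ would violate $g_\vec{z}=f$ in $R_\vec{z}$, while if $f(\vec{u}_0)<g_\vec{z}(\vec{u}_0)$ then (arguing as in Lemma~\ref{determined extension eventually dominates f}) one constructs a contradiction sequence in $I$ with deltas along $\vec{y}^+$, contradicting the fact that $f$ is obliviously-computable.
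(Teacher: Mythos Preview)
Your overall scaffold is right: set up the induction on $\dim\rec$, match gradients along $W$, and then resolve the offsets by a dichotomy appealing to Lemma~\ref{dickson contradiction lemma}. But the dichotomy as you describe it has a real gap in the case $g_\vec{z}(\vec{u}_0)>f(\vec{u}_0)$.

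Your proposed contradiction sequence lies in $I$ with deltas along $\vec{y}^+$. To apply Lemma~\ref{dickson contradiction lemma} you need, for $i<j$, the strict inequality $f(\vec{a}_i+\vec{\Delta}_{ij})-f(\vec{a}_i)>f(\vec{a}_j+\vec{\Delta}_{ij})-f(\vec{a}_j)$. With $\vec{a}_i,\vec{a}_j\in I$ (where $f<g_\vec{z}$) and $\vec{a}_i+\vec{\Delta}_{ij},\vec{a}_j+\vec{\Delta}_{ij}\in R_\vec{z}$ (where $f=g_\vec{z}$), both sides exceed the corresponding $g_\vec{z}$-difference by the same positive amount, and you get equality, not a strict gap. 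The point is that a single neighbor $R_\vec{z}$ does not give you enough room: the three configurations $\vec{a}_i$, $\vec{a}_i+\vec{\Delta}_{ij}$, $\vec{a}_j+\vec{\Delta}_{ij}$ must land in places where $f$ agrees with $g_\vec{z}$, while the fourth $\vec{a}_j+\vec{\Delta}_{ij}$ (or some permutation) lands in $I$ where $f$ is strictly smaller. You cannot arrange this using only $I$ and $R_\vec{z}$.

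The missing ingredient is the opposite neighbor $R_{-\vec{z}}$. The paper first proves that $g_\vec{z}$ and $g_{-\vec{z}}$ coincide on the entire affine slab $A=\mathrm{aff}(I)+\mathrm{span}(\vec{z})$: their gradients agree along $\mathrm{span}(W,\vec{z})$ by the hypothesis and Lemma~\ref{projection of determined gradient is equal}, and if the offsets differed in some congruence class then one of them would fail to eventually dominate $f$ on the other region. With this in hand, the contradiction sequence is placed in $R_\vec{z}$, the ``bad'' points $\vec{a}_j+\vec{v}_j$ lie in $I$, and the crossed points $\vec{a}_i+\vec{v}_j$ for $i<j$ are pushed all the way into $R_{-\vec{z}}$, where $f=g_{-\vec{z}}=g_\vec{z}$ again. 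This three-region structure is what produces the strict inequality.

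For the other case $g_\vec{z}(\vec{u}_0)<f(\vec{u}_0)$, your translation argument implicitly reuses the segment computation you already flagged as problematic. The clean argument is the one you essentially have in hand: since the gradients of $f|_{I\cap\baru}$ and $g_\vec{z}$ agree on $W$, the inequality $g_\vec{z}<f$ propagates to all of $I\cap(\baru\bmod p^*)$, which is unbounded since $U$ is eventual, directly contradicting that $g_\vec{z}$ eventually dominates $f$.
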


\begin{proof}
Let $I$ be a strip of under-determined eventual region $U$, with extensions $g_1,\ldots,g_m$ from determined neighbors $D_1,\ldots,D_m$ respectively. Let $\vec{z}\in W^\perp$ such that $\vec{\nabla}_{g_i}\cdot\vec{z}=\vec{\nabla}_{g_j}\cdot\vec{z}$ for all $i,j$. We will consider the regions $R_{\vec{z}}$ and $R_{-\vec{z}}$ which are the neighbors of $U$ in the directions $\vec{z}$ and $-\vec{z}$. Recall by Lemma \ref{lem-neighbor-in-direction-z} that $\dim \rec(U)<\dim \rec(R_{\pm\vec{z}})$. The proof will proceed by induction on the codimension: $d-\dim \rec(U)$. For $U$ with codimension $1$, $R_{\pm\vec{z}}$ must be determined regions with unique extensions. In general, $R_{\pm\vec{z}}$ could be under-determined, but will have lower codimension, so by the inductive hypothesis we assume $R_{\pm\vec{z}}$ have extensions which eventually dominate $f$ (considering this Lemma alongside Lemma \ref{strip extension by averaging}). Thus there exist quilt-affine extensions $g_\vec{z}$ and $g_{-\vec{z}}$ (from $R_\vec{z}$ and $R_{-\vec{z}}$) which eventually dominate $f$. Note these may have a larger period $p^*$ as used in the proof of Lemma \ref{strip extension by averaging}. We assume $g_\vec{z}$ and $g_{-\vec{z}}$ have common period $p^*$ by taking the least common multiple if necessary. Thus we can write $g_\vec{z}(\vec{x})=\vec{\nabla}_{g_\vec{z}}\cdot\vec{x}+B_{g_\vec{z}}(\barx\mod p^*)$ and $g_{-\vec{z}}(\vec{x})=\vec{\nabla}_{g_{-\vec{z}}}\cdot\vec{x}+B_{g_{-\vec{z}}}(\barx\mod p^*)$.

Now by assumption $\vec{\nabla}_{g_i}\cdot\vec{z}=\vec{\nabla}_{g_j}\cdot\vec{z}$ for any determined neighbors $D_i,D_j$. Also by Lemma \ref{projection of determined gradient is equal}, $\mathrm{proj}_W(\vec{\nabla}_{g_i})=\mathrm{proj}_W(\vec{\nabla}_{g_j})$. Thus all determined gradients agree along $\mathrm{span}(W,\vec{z})$. The regions $R_{\pm\vec{z}}$ are either determined, or their determined neighbors are among $D_1,\ldots,D_m$ (by transitivity of the neighbor relation). Regardless, we can say $\mathrm{proj}_{\mathrm{span}(W,\vec{z})}(\vec{\nabla}_{g_\vec{z}})=\mathrm{proj}_{\mathrm{span}(W,\vec{z})}(\vec{\nabla}_{g_{-\vec{z}}})$. For this proof, we will consider the affine space $A=\mathrm{aff}(I)+\mathrm{span}(\vec{z})=\{\vec{i}+\vec{w}+c\vec{z}:\vec{i}\in I,\vec{w}\in W,c\in\R\}$ containing all points reachable from $I$ by vectors in $\mathrm{span}(W,\vec{z})$. We now claim that $g_\vec{z}(\vec{x})=g_{-\vec{z}}(\vec{x})$ for all $\vec{x}\in A\cap\N^d$. The gradients along directions in $A$ ($\mathrm{span}(W,\vec{z})$) were already shown to be equal. Thus if $g_\vec{z}(\vec{x})\neq g_{-\vec{z}}(\vec{x})$ (without loss of generality $g_\vec{z}(\vec{x})< g_{-\vec{z}}(\vec{x})$), then $g_\vec{z}(\vec{y})< g_{-\vec{z}}(\vec{y})$ for all congruent $\vec{y}\in\barx\mod p^*$. However, $g_{-\vec{z}}$ is an extension of $f$ from $R_{-\vec{z}}$, so we have $g_\vec{z}(\vec{y})< f(\vec{y})$ for all $\vec{y}\in R_{-\vec{z}}\cap(\barx\mod p^*)$. This contradicts the fact that $g_\vec{z}$ eventually dominates $f$, and completes the claim that $g_\vec{z}(\vec{x})=g_{-\vec{z}}(\vec{x})$ on $A\cap\N^d$.

Thus we must have $g_\vec{z}(\vec{x})=f(\vec{x})$ for all $\vec{x}\in A\cap R_\vec{z} \cap \N^d$ (since $g_\vec{z}$ is an extension from $R_\vec{z}$) and $\vec{x}\in A\cap R_{-\vec{z}} \cap \N^d$ (since $g_\vec{z}=g_{-\vec{z}}$, the extension from $R_{-\vec{z}}$). 
We now show also that $g_\vec{z}(\vec{x})=f(\vec{x})$ for all $\vec{x}\in I$. Assume toward contradiction that $g_\vec{z}(\vec{u})\neq f(\vec{u})$ for some $\vec{u}\in I$. By Lemma \ref{sunshine lemma} we have affine partial function $f|_{U\cap(\baru\mod p)}(\vec{x})=\nabla_{\baru}\cdot\vec{x}+b_{\baru}$ and by Lemma \ref{projection of determined gradient is equal} we have $\mathrm{proj}_W(\vec{\nabla}_{\baru})=\mathrm{proj}_W(\vec{\nabla}_{g_\vec{z}})$. Then for any $\vec{x}\in I\cap(\baru\mod p^*)$, $\vec{x}=\vec{u}+\vec{w}$ for some $\vec{w}\in W$ by definition of $I$, so
\begin{eqnarray}
    g_\vec{z}(\vec{x})-f(\vec{x})
&=&
    \vec{\nabla}_{g_\vec{z}}\cdot(\vec{u}+\vec{w})+B_{g_\vec{z}}(\baru\mod p^*)-\nabla_{\baru}\cdot(\vec{u}+\vec{w})-b_{\baru} \nonumber
\\&=&
    \underbrace{ (\vec{\nabla}_{g_\vec{z}} \cdot \vec{w} - \nabla_{\baru} \cdot \vec{w})}_{=\, 0 \text{ since } \vec{w} \in W }
    +
    ( \vec{\nabla}_{g_\vec{z}} \cdot \vec{u} + B_{g_\vec{z}}(\baru\mod p^*) )
    - 
    ( \nabla_{\baru} \cdot \vec{u} + b_{\baru} ) \nonumber
\\&=&
    g_\vec{z}(\vec{u})-f(\vec{u}) \label{eq:diff-g-f-equal}
\end{eqnarray}
In other words, if $g_\vec{z}(\vec{u})\neq f(\vec{u})$, they are also unequal for all $\vec{x}$ within the strip $I$ on the entire congruence class $\baru$.

If we had $g_\vec{z}(\vec{u})<f(\vec{u})$, then $g_\vec{z}(\vec{x})<f(\vec{x})$ for all $\vec{x}\in I\cap(\baru\mod p^*)$ by~\eqref{eq:diff-g-f-equal}, which contradicts that $g_\vec{z}$ eventually dominates $f$. 

The other case is that $g_\vec{z}(\vec{u})>f(\vec{u})$, so again by~\eqref{eq:diff-g-f-equal}, we have $g_\vec{z}(\vec{x})>f(\vec{x})$ for all $\vec{x}\in I\cap(\baru\mod p^*)$. (This is the behavior of our example~\eqref{example:depressed-strip}, which will be shown to not be obliviously-computable by the following general argument). Here, similar to the proof of Lemma \ref{determined extension eventually dominates f}, we will apply Lemma \ref{dickson contradiction lemma} by creating a contradiction sequence $(\vec{a}_1,\vec{a}_2,\ldots)\in \N^d$ such that for all $i<j$ there exists some $\vec{\Delta}_{ij}\in\N^d$ with
\[
f(\vec{a}_i+\vec{\Delta}_{ij})-f(\vec{a}_i)>f(\vec{a}_j+\vec{\Delta}_{ij})-f(\vec{a}_j).
\]
To do this, we will find a sequence $(\vec{a}_1,\vec{a}_2,\ldots)\in A\cap R_\vec{z} \cap (\baru\mod p^*)$, so $g_\vec{z}(\vec{a}_i)=f(\vec{a}_i)$ for all $i$. 
We will then find another sequence $(\vec{v}_1,\vec{v}_2,\ldots)\in\N^d$
such that $\vec{a}_i+\vec{v}_i\in I\cap(\baru\mod p^*)$ for all $i$, 
implying $g_\vec{z}(\vec{a}_i+\vec{v}_i)>f(\vec{a}_i+\vec{v}_i)$.
Also, we need that for all $i<j$, $\vec{a}_i+\vec{v}_j\in A\cap R_{-\vec{z}} \cap (\baru\mod p^*)$, so $g_\vec{z}(\vec{a}_i+\vec{v}_j)=f(\vec{a}_i+\vec{v}_i)$. 
If these are both true, 
then choosing $\vec{\Delta}_{ij}=\vec{v}_j$ for $i<j$ gives
\begin{eqnarray*}
&&\mathclap{f(\vec{a}_i+\vec{\Delta}_{ij})-f(\vec{a}_i)}
\\&=&
    g_\vec{z}(\vec{a}_i+\vec{v}_j)-g_\vec{z}(\vec{a}_i)
\\&=&
    g_\vec{z}(\vec{a}_j+\vec{v}_j)-g_\vec{z}(\vec{a}_j)
    \qquad \text{since $g_\vec{z}$ is quilt-affine and $\vec{a}_i \equiv \vec{a}_j \mod p^*$}
\\&>& 
    f(\vec{a}_j+\vec{v}_j)-f(\vec{a}_j)=f(\vec{a}_j+\vec{\Delta}_{ij})-f(\vec{a}_j).
\end{eqnarray*}
Lemma \ref{dickson contradiction lemma} then tells us that $f$ is not obliviously-computable, a contradiction.
It remains to show that such sequences $\vec{a}_i \in R_\vec{z}$ and $\vec{v}_i \in \N^d$ can be found satisfying $\vec{a}_i+\vec{v}_i\in I\cap(\baru\mod p^*)$ for all $i$, 
and for all $i<j$, $\vec{a}_i+\vec{v}_j\in A\cap R_{-\vec{z}} \cap (\baru\mod p^*)$.

Now from the proof of Lemma \ref{lem-neighbor-in-direction-z}, we take the same $\vec{y}\in\rec(U)$ and perturbed $\vec{y}^+=\vec{y}+\epsilon\vec{z}\in\rec(R_\vec{z})$ that were defined in that proof.
Recall we showed for $\vec{x}\in U$, for all large enough $c$, $\vec{x}+c\vec{y}^+\in R_\vec{z}$. Likewise, we also have $\vec{y}^-=\vec{y}-\epsilon\vec{z}\in\rec(R_{-\vec{z}})$ (taking $\epsilon$ small enough to work for both $R_\vec{z}$ and $R_{-\vec{z}}$) and we can assume (by density of rationals and scaling up denominators) that $\vec{y}^+,\vec{y}^-\in\N^d$. 

Pick $c\in\N$ large enough that $\vec{u}+cp^*\vec{y}^+\in R_\vec{z}$ and $\vec{u}+cp^*\vec{y}^-\in R_{-\vec{z}}$. Then for all $i\in\N_+$, let $\vec{a}_i=\vec{u}+icp^*\vec{y}^+$ and $\vec{v}_i=icp^*\vec{y}^-$. Since $\vec{y}^+,\vec{y}^-\in\mathrm{span}(W,\vec{z})$, we have $\vec{a}_i\in A$ for all $i$, and the multiple of $p^*$ ensures all points are in $(\baru\mod p^*)$ as desired. Finally, we can check that
\[
\vec{a}_i+\vec{v}_i=\vec{u}+icp^*\vec{y}^+ + icp^*\vec{y}^- = \vec{u}+2icp^*\vec{y}\in I
\]
since $\vec{y}^+ + \vec{y}^- = 2\vec{y} \in \rec(U)$. Also, for $i<j$ we have
\[
\vec{a}_i+\vec{v}_j = \vec{u}+icp^*\vec{y}^+ + jcp^*\vec{y}^- = \vec{u}+(j-i)cp^*\vec{y}^- + icp^*\vec{y} 
\]
Note that $j-i \geq 1$,  
$\vec{u}+(j-i)cp^*\vec{y}^- \in R_{-\vec{z}}$, 
and $icp^*\vec{y}\in\rec(R_{-\vec{z}})$.
Thus
\[
    \vec{a}_i+\vec{v}_j 
= \vec{u}+(j-i)cp^*\vec{y}^- + icp^*\vec{y} \in R_{-\vec{z}}
\]
as required.

Thus Lemma \ref{dickson contradiction lemma} gives a contradiction that $f$ cannot be obliviously-computable. We reached this contradiction by assuming that $g_\vec{z}(\vec{u})\neq f(\vec{u})$ for some $\vec{u}\in I$. Thus we conclude that $g_\vec{z}(\vec{x})=f(\vec{x})$ for all $\vec{x}\in I$, so $g_\vec{z}$ is an extension from $I$ that eventually dominates $f$.
\end{proof}

For any strip $I$ of an under-determined eventual region $U$, one of the cases from Lemmas \ref{strip extension by averaging} or \ref{strip extension as top neighbor} applies to show there exists an extension from $I$ that eventually dominates $f$. There are only finitely many such strips (Lemma \ref{finite number of strips}), so alongside the unique extensions from the determined regions (Lemmas \ref{determined region has unique extension} and \ref{determined extension eventually dominates f}), we have identified a finite collection $g_1,\ldots,g_m$ of quilt-affine functions to complete the proof of Theorem \ref{MC implies min}.
}

\opt{full}{
\section{Comparison to continuous case}

In \cite{ContinuousMC}, the authors classified the power of output-oblivious \emph{continuous} CRNs to stably compute real-valued functions $f:\R^d_{\geq0}\to\R_{\geq0}$. 
We can generalize to also consider such functions by introducing the following natural scaling:

\begin{definition}
For a function $f:\N^d\to\N$, the \emph{$\infty$-scaling} $\hat{f}:\R_{\geq0}^d\to\R_{\geq0}$ is given by 
\opt{sub,full}{
    \[\hat{f}(\vec{z})=\lim_{c\to\infty}\frac{f(\floor{c\vec{z}})}{c}.\]
}
\opt{final}{
    $\hat{f}(\vec{x})=\lim\limits_{c\to\infty}\frac{f(\floor{c\vec{x}})}{c}.$
}
\end{definition}

Note this limit may not exist for arbitrary $f:\N^d\to\N$, but it will exist for all obliviously-computable $f$.

The next theorem 
shows that in this scaling limit, our output-oblivious function class exactly corresponds to the real-valued function class from \cite{ContinuousMC} (see Fig.~\ref{fig:2D-scaling_limit}).
\opt{final}{The proof appears in~\cite{severson2019composable}.}

\begin{theorem}
\label{thm-continuous-comparison}
If $f:\N^d\to\N$ is obliviously-computable, then the $\infty$-scaling $\hat{f}:\R_{\geq0}^d\to\R_{\geq0}$ is obliviously-computable by a continuous CRN.
Furthermore,
every function obliviously-computable by a continuous CRN
is the $\infty$-scaling of some function obliviously-computable by a discrete CRN.
\end{theorem}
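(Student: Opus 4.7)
I would prove the two directions separately. For the forward direction, given an obliviously-computable $f:\N^d\to\N$, I would invoke Theorem~\ref{thm:main result} to write $f(\vec{x}) = \min_k g_k(\vec{x})$ for all $\vec{x} \geq \vec{n}$, where each $g_k(\vec{x}) = \vec{\nabla}_k \cdot \vec{x} + B_k(\vec{x}\mod p)$ is quilt-affine with bounded periodic offset $B_k$. For any $\vec{z} \in \R^d_{>0}$, once $c$ is large enough that $\floor{c\vec{z}} \geq \vec{n}$, dividing through by $c$ and using $\floor{c\vec{z}}/c \to \vec{z}$ and $|B_k|/c \to 0$ immediately yields
\[
    \hat{f}(\vec{z}) \;=\; \min_k\, \vec{\nabla}_k \cdot \vec{z}.
\]
This simultaneously establishes existence of the limit in the interior of the orthant and gives $\hat{f}$ the min-of-nondecreasing-linear form required there by the characterization in \cite{ContinuousMC}.

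For a boundary point $\vec{z}$ with, say, $\vec{z}(i)=0$, the iterates satisfy $\floor{c\vec{z}}(i)=0$, so $f(\floor{c\vec{z}}) = f_{[\vec{x}(i)\to 0]}(\floor{c\vec{z}})$. The recursive condition~\eqref{defn-obv-computable-3} of Theorem~\ref{thm:main result} says $f_{[\vec{x}(i)\to 0]}$ is obliviously-computable, so by induction on $d$ its $\infty$-scaling is an obliviously-computable continuous function of the remaining $d-1$ inputs. Iterating this reduction produces a consistent definition of $\hat{f}$ on every face, together with an explicit continuous CRN for $\hat{f}$ built by combining the linear ``slope'' CRNs of~\cite{ContinuousMC} with continuous analogues of the min, fan-out, and indicator modules used in Lemma~\ref{lem-general-construction}.

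For the reverse direction, let $\hat{g}:\R^d_{\geq 0}\to\R_{\geq 0}$ be obliviously-computable by a continuous CRN. The characterization of~\cite{ContinuousMC} gives $\hat{g}(\vec{z}) = \min_k \vec{a}_k \cdot \vec{z}$ on the interior with each $\vec{a}_k \in \Q^d_{\geq 0}$. I would define $g_k(\vec{x}) = \floor{\vec{a}_k \cdot \vec{x}}$, which is quilt-affine (with period a common denominator of the coordinates of $\vec{a}_k$) and nondecreasing since $\vec{a}_k \geq \vec{0}$. Then $f(\vec{x}) = \min_k g_k(\vec{x})$ is a nondecreasing function $\N^d \to \N$. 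It satisfies conditions~\eqref{defn-obv-computable-1} and~\eqref{defn-obv-computable-min} of Theorem~\ref{thm:main result} globally (with $\vec{n}=\vec{0}$), and each fixed-input restriction has the same min-of-quilt-affine form, so condition~\eqref{defn-obv-computable-3} follows by induction on $d$; hence $f$ is discretely obliviously-computable. Since $\bigl|\floor{\vec{a}_k\cdot\floor{c\vec{z}}} - c\,\vec{a}_k\cdot\vec{z}\bigr| = O(1)$, taking $c\to\infty$ recovers $\hat{f}(\vec{z}) = \min_k \vec{a}_k \cdot \vec{z} = \hat{g}(\vec{z})$ on the interior, and the boundary matches by the same inductive argument.

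The hardest part will be careful bookkeeping at the orthant boundary. The characterization of~\cite{ContinuousMC} is stated only for strictly positive inputs, so the interior min-of-linear form of $\hat{f}$ does not automatically determine $\hat{f}$ on faces; one must verify at each level of the induction on $d$ that the discrete restriction $f_{[\vec{x}(i)\to 0]}$ scales to exactly the continuous boundary restriction of $\hat{g}$. A secondary technical point is translating between the discrete recursive construction of Lemma~\ref{lem-general-construction} and its continuous counterpart, which relies on the fact that minimum, scaling, indicators of positivity, and fixed-input restriction are all obliviously-computable in both the discrete and continuous settings.
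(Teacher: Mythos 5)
Your high-level plan — scaling quilt-affine functions to their gradients, invoking Theorem~\ref{thm:main result}, and inducting on $d$ via fixed-input restrictions for the faces — matches the paper's, but two genuine gaps prevent the proposal from closing.

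\textbf{Forward direction: superadditivity is missing.}
You state that showing $\hat{f}(\vec{z}) = \min_k \vec{\nabla}_k\cdot\vec{z}$ ``gives $\hat{f}$ the min-of-nondecreasing-linear form required by the characterization in \cite{ContinuousMC},'' but this misstates what must be shown. The characterization of \cite{ContinuousMC} is that $\hat{f}$ is \emph{superadditive, positive-continuous, and piecewise rational-linear}. On the interior, min-of-linear is indeed superadditive, but your proposal never addresses superadditivity between faces: for $\vec{a}$ on a boundary face and $\vec{b}$ interior, $\hat{f}(\vec{a}+\vec{b})$ uses the interior min form while $\hat{f}(\vec{a})$ uses the $D_S$-specific one, and there is no automatic reason these compare correctly. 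The paper devotes the bulk of its argument to exactly this: one shows $\hat{g}^S_i(\vec{a}) \geq \hat{f}(\vec{a})$ for the minimizing interior piece $\hat{g}^S_i$ even when $\vec{a}\notin D_S$, via an $\epsilon$-perturbation ($\vec{a}+\epsilon\vec{b}\in D_S$) that uses monotonicity of $\hat{f}$. You flagged boundary ``bookkeeping'' as the hard part, but identified it as a continuity issue rather than a superadditivity issue, and left the real step unaddressed.

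\textbf{Reverse direction: defining $f$ from only the interior linear pieces is wrong.}
You set $f(\vec{x}) = \min_k \floor{\vec{a}_k\cdot\vec{x}}$ globally, where the $\vec{a}_k$ come from the interior min form of $\hat{g}$. The $\infty$-scaling of this $f$ on a boundary face $\{z_i=0\}$ is then $\min_k \vec{a}_k\cdot\vec{z}$ restricted to $z_i=0$ — but $\hat{g}$ is only \emph{positive-continuous}, so its value on that face can be strictly smaller than the limit from the interior (e.g.\ $\hat{g}=z_1+z_2$ on the interior but $\hat{g}(0,z_2)=0$, which is still superadditive and positive-continuous). Your $f$ would then have the wrong $\infty$-scaling on that face, and ``the boundary matches by the same inductive argument'' does not rescue it because the induction never sees the face-specific linear pieces. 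The paper avoids this by defining $f$ \emph{piecewise}: on each $D_S \cap \N^d$ it uses the $D_S$-specific min $\min_k g^S_k$, where the $\hat{g}^S_k$ come from applying Lemma~8 of \cite{ContinuousMC} to $\hat{g}|_{D_S}$. You would need this piecewise definition; with it, the fixed-input restrictions of $f$ are tailored to match the correct face behavior, making both the recursion and the scaling-limit computation work.

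Fixing both issues essentially reproduces the paper's proof, so the overall approach is right; the gaps are real, though, and are precisely the two places where the continuous–discrete translation is delicate.
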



    \begin{proof}
    To prove the first statement, let $f:\N^d\to\N$ be obliviously-computable. We will show the $\infty$-scaling $\hat{f}:\R_{\geq0}^d\to\R_{\geq0}$ satisfies the main classification of \cite{ContinuousMC}: that $\hat{f}$ is superadditive, positive-continuous, and piecewise rational-linear.
    
    First we prove that for any quilt-affine $g:\N^d\to\Z$, the $\infty$-scaling $\hat{g}$ is nonnegative and rational-linear. From Definition \ref{quilt-affine def}, we can express $g(\vec{x})=\vec{\nabla}_g\cdot\vec{x}+B(\barx\mod p)$ for $\vec{\nabla}_g\in\Q^d_{\geq0}$ and $B:\Z^d/p\Z^d\to\Q$.
    Then for any $\vec{z}\in\R_{\geq0}^d$,
    \[
    \hat{g}(\vec{z})=\lim\limits_{c\to\infty}\frac{\vec{\nabla}_g\cdot\floor{c\vec{z}}+B(\overline{\floor{c\vec{z}}}\mod p)}{c}
    =\vec{\nabla}_g\cdot\vec{z},
    \]
    since $B$ is bounded. Because $\vec{\nabla}_g\in\Q^d_{\geq0}$, $\hat{g}$ is nonnegative and rational-linear.
    
    Now by the eventually-min condition \eqref{defn-obv-computable-min} of Theorem \ref{thm:main result}, there exists quilt-affine $g_1,\ldots,g_m:\N^d\to\Z$ and $\vec{n}\in\N^d$ such that $f(\vec{x})=\min_k(g_k(\vec{x}))$ for all $\vec{x}\geq\vec{n}$. Then for any $\vec{z}\in\R^d_{>0}$, $\floor{c\vec{z}}\geq\vec{n}$ for large enough $c$, so
    
    \begin{equation}
    \label{eq-hatf-is-min-of-hatg}
        \hat{f}(\vec{z})
        = \lim_{c\to\infty}\frac{f(\floor{c\vec{z}})}{c}
        = \lim_{c\to\infty}\frac{\min_k(g_k(\floor{c\vec{z}}))}{c}
        = \min_k\qty(\lim_{c\to\infty}\frac{g_k(\floor{c\vec{z}})}{c})
        = \min_k(\hat{g}_k(\vec{z})),
    \end{equation}
    where we pass the limit through the $\min$ function because min is continuous. Since $\hat{g}_k(\vec{z})=\vec{\nabla}_{g_k}\cdot\vec{z}$ are all rational-linear, on the domain $\R_{>0}$, $\hat{f}(\vec{z})= \min_k(\hat{g}_k(\vec{z}))$ is continuous and piecewise rational-linear.
    
    We will now generalize this argument to show on the full domain $\R^d_{\geq0}$, 
    $\hat{f}$ is piecewise rational-linear and \emph{positive-continuous}: 
    for each subset $S\subseteq\{1,\ldots,d\}$, $\hat{f}$ is continuous on domain $D_S=\{\vec{z}\in\R^d_{\geq0}:\vec{z}(i)=0\iff i\in S\}$. 
    Fix any such $S$. By repeatedly applying the recursive condition \eqref{defn-obv-computable-3} of Theorem~\ref{thm:main result}, 
    the fixed-input restriction $f_{[(\forall i\in S)\ \vec{x}(i) \to 0]}$ fixing input coordinates in $S$ to $0$, is obliviously-computable. Then by the eventually-min condition \eqref{defn-obv-computable-min}, there exists quilt-affine $g^S_1,\ldots,g^S_m$ and $\vec{n}\in\N^d$ such that $f_{[(\forall i\in S)\ \vec{x}(i) \to 0]}(\vec{x})=\min_k(g^S_k(\vec{x}))$ for all $\vec{x}\geq\vec{n}$, but it sufficient for $\vec{x}(i)\geq \vec{n}(i)$ for all $i\notin S$. Now let $\vec{z}\in D_S$, so $\vec{z}(i)=0\iff i\in S$. Then for large enough $c$, 
    $\floor{c\vec{z}}(i)\geq\vec{n}(i)$ for all $i\notin S$, so
    \[
    f(\floor{c\vec{z}})=f_{[(\forall i\in S)\ \vec{x}(i) \to 0]}(\floor{c\vec{z}})=\min_k\qty(g^S_k(\floor{c\vec{z}})).
    \]
    Now repeating equation~\eqref{eq-hatf-is-min-of-hatg}, we have 
    $\hat{f}(\vec{z})=\min_k(\hat{g}^S_k(\vec{z}))$, so $\hat{f}$ is continuous and piecewise rational-linear on $D_S$. This holds for all $S\subseteq\{1,\ldots,d\}$, so $\hat{f}$ is positive-continous and piecewise rational linear.
    
    It remains to show that $\hat{f}$ must be superadditive: $\hat{f}(\vec{a})+\hat{f}(\vec{b})\leq \hat{f}(\vec{a}+\vec{b})$ for all $\vec{a},\vec{b}\in\R^d_{\geq0}$. Let $\vec{a},\vec{b}\in\R^d_{\geq0}$ with $\vec{a}+\vec{b}\in D_S$ for a domain $D_S$ defined as above. Then
    \[
    \hat{f}(\vec{a}+\vec{b})=\min_k(\hat{g}^S_k(\vec{a}+\vec{b}))=\hat{g}^S_i(\vec{a}+\vec{b})=\hat{g}^S_i(\vec{a})+\hat{g}^S_i(\vec{b})
    \]
    for some minimizing rational-linear $\hat{g}^S_i$. It remains to show $\hat{g}^S_i(\vec{a})\geq\hat{f}(\vec{a})$ (and by symmetry $\hat{g}^S_i(\vec{b})\geq\hat{f}(\vec{b})$). This is immediate if $\vec{a}\in D_S$ since $\hat{f}(\vec{a})=\min_k(\hat{g}^S_k)$. Otherwise if $\vec{a}\notin D_S$, assume toward contradiction that $\hat{f}(\vec{a})>\hat{g}^S_k(\vec{a})$. Then for some small enough $\epsilon>0$, we also have $\hat{f}(\vec{a})>\hat{g}^S_k(\vec{a}+\epsilon\vec{b})$. Observing that $\vec{a}+\epsilon\vec{b}\in D_S$, then $\hat{g}^S_k(\vec{a}+\epsilon\vec{b})\geq\hat{f}(\vec{a}+\epsilon\vec{b})$. But then $\hat{f}(\vec{a})>\hat{f}(\vec{a}+\epsilon\vec{b})$, a contradiction since $\hat{f}$ must be nondecreasing as the $\infty$-scaling of the nondecreasing function $f$.
    
    Thus $\hat{f}$ is semilinear, positive-continuous, and piecewise rational-linear as desired. 
    
    Next, to prove the second statement, let $\hat{f}:\R_{\geq0}^d\to\R_{\geq0}$ be any semilinear, positive-continuous, and piecewise rational-linear function. 
    We will show that there exists some obliviously-computable
    $f:\N^d\to\N$ such that its $\infty$-scaling is $\hat{f}$. 
    
    On each domain $D_S=\{\vec{z}\in\R^d_{\geq0}:\vec{z}(i)=0\iff i\in S\}$ for $S\subseteq\{1,\ldots,d\}$, $\hat{f}|_{D_S}$ is superadditive, continuous, and piecewise rational-linear. By Lemma~8 in \cite{ContinuousMC}, $\hat{f}|_{D_S}$ can be written as the minimum of a finite number of rational linear functions $\hat{g}^S_k(\vec{z})=\vec{\nabla}_{g_k}\cdot\vec{z}$. For each $\hat{g}^S_k$, we will identify a quilt-affine $g^S_k$ with gradient $\vec{\nabla}_{g_k}$. 
    In particular, we can 
    define $g^S_k:\N^d \to \N$ for all $\vec{x} \in \N^d$ by 
    $g^S_k(\vec{x})=\floor{\vec{\nabla}_{g_k}\cdot\vec{x}}$, which will be quilt-affine.
    
    Now for all $S$ and integer $\vec{x}\in D_S\cap\N^d$, define $f(\vec{x})=\min_k(g^S_k(\vec{x}))$. From the above proof it follows that $\hat{f}$ is the $\infty$-scaling of $f$. It is also straightforward to verify that $f$ is obliviously-computable by satisfying Theorem~\ref{thm:main result}. $f$ is nondecreasing, satisfying condition~(\ref{defn-obv-computable-1}), because $\hat{f}$ was semilinear and thus nondecreasing. $f$ satisfies eventually-min condition~(\ref{defn-obv-computable-min}) since for all $\vec{x}\geq(1,\ldots,1)$, $\vec{x}\in D_\emptyset=\R^d_{>0}$, so $f(\vec{x})=\min_k(g^{\emptyset}_k(\vec{x}))$. For all other $S\neq\emptyset$, the fixed-input restriction $f_{[(\forall i\in S)\ \vec{x}(i) \to 0]}(\vec{x})=\min_k(g_k^S(\vec{x}))$. It follows that $f$ satisfies recursive condition~(\ref{defn-obv-computable-3}), because any fixed-input restriction will be eventually-min of quilt-affine functions.
    
    Thus any function $\hat{f}$ obliviously-computable by a continuous CRN is the $\infty$-scaling limit of some $f$ obliviously-computable by a discrete CRN.
    \end{proof}




}
\section{Leaderless one-dimensional case}
\label{sec:leaderless}
In this section we show a characterization of 1D functions $f:\N\to\N$ that are obliviously-computable \emph{without} a leader.
The general case for leaderless oblivious computation in higher dimensions remains open.


Note that the following observation applies to any number of dimensions. We say $f:\N^d\to\N$ is \emph{superadditive} if 
$f(\vec{x})+f(\vec{y}) \leq f(\vec{x}+\vec{y})$ 
for all $\vec{x},\vec{y}\in\N^d$.

\begin{observation}
\label{MC0 is superadditive}
Every $f$ obliviously-computable by a leaderless CRN is superadditive.
\end{observation}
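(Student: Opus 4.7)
The plan is to use the key property of leaderless CRNs that the initial configuration encoding $\vec{x}+\vec{y}$ is exactly the sum of the initial configurations encoding $\vec{x}$ and $\vec{y}$ (with no leader species to cause a mismatch), together with additivity of the reachability relation and the output-oblivious property.

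Concretely, suppose $\mathcal{C}$ is a leaderless output-oblivious CRN stably computing $f$. Fix $\vec{x},\vec{y}\in\N^d$. Since $\mathcal{C}$ stably computes $f(\vec{x})$, there is a sequence of reactions $\alpha_\vec{x}$ from $\vec{I}_\vec{x}$ to a stable configuration $\vec{O}_\vec{x}$ with $\vec{O}_\vec{x}(Y)=f(\vec{x})$, and similarly a sequence $\alpha_\vec{y}$ from $\vec{I}_\vec{y}$ to a stable $\vec{O}_\vec{y}$ with $\vec{O}_\vec{y}(Y)=f(\vec{y})$. Because there is no leader, $\vec{I}_{\vec{x}+\vec{y}}=\vec{I}_\vec{x}+\vec{I}_\vec{y}$. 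By additivity of reachability, applying $\alpha_\vec{x}$ to $\vec{I}_{\vec{x}+\vec{y}}$ reaches $\vec{O}_\vec{x}+\vec{I}_\vec{y}$, and then applying $\alpha_\vec{y}$ reaches $\vec{O}_\vec{x}+\vec{O}_\vec{y}$, which contains $f(\vec{x})+f(\vec{y})$ copies of $Y$.

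Now to stably compute $f(\vec{x}+\vec{y})$, from $\vec{O}_\vec{x}+\vec{O}_\vec{y}$ there must be a reachable stable configuration $\vec{O}$ with $\vec{O}(Y)=f(\vec{x}+\vec{y})$. But since $\mathcal{C}$ is output-oblivious, $Y$ never appears as a reactant, so the count of $Y$ is nondecreasing along any reaction sequence. Hence $\vec{O}(Y)\geq f(\vec{x})+f(\vec{y})$, giving $f(\vec{x}+\vec{y})\geq f(\vec{x})+f(\vec{y})$ as claimed.

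The only subtle point (and the main thing to flag) is that this argument fails as soon as a leader is present: then $\vec{I}_{\vec{x}+\vec{y}}$ has only one leader while $\vec{I}_\vec{x}+\vec{I}_\vec{y}$ has two, so $\alpha_\vec{x}$ and $\alpha_\vec{y}$ cannot both be executed in parallel. This is precisely why $\min(1,x)$ (Fig.~\ref{fig:CRN-reactions-with-leader}) is obliviously-computable with a leader but not without one; the observation above confirms this, since $\min(1,x)$ is not superadditive ($\min(1,1)+\min(1,1)=2>1=\min(1,2)$).
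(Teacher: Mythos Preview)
Your proof is correct and follows essentially the same approach as the paper: both use that leaderlessness gives $\vec{I}_{\vec{x}+\vec{y}}=\vec{I}_\vec{x}+\vec{I}_\vec{y}$, apply the two reaction sequences in succession via additivity of reachability, and then invoke output-obliviousness to conclude. The only cosmetic difference is that the paper phrases it as a contrapositive (overproducing $Y$ forces a $Y$-consuming reaction), whereas you argue directly that the $Y$-count is monotone along reaction sequences; these are the same argument.
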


\begin{proof}
    Let $\mathcal{C}$ be a leaderless CRN stably computing $f$.
    We prove the observation by contrapositive.
    Suppose $f$ is not superadditive.
    Then there are $\vec{x},\vec{z} \in \N^d$ such that 
    $f(\vec{x}) + f(\vec{z}) > f(\vec{x} + \vec{z})$.
    Recall $\vec{I}_\vec{w}$ is the initial configuration of $\mathcal{C}$ representing input $\vec{w}$.
    Let $\alpha_\vec{x}$ be a sequence of reactions applied to $\vec{I}_\vec{x}$ to produce $f(\vec{x})$ copies of $Y$,
    and let $\alpha_\vec{z}$ be a sequence of reactions applied to $\vec{I}_\vec{z}$ to produce $f(\vec{z})$ copies of $Y$.
    
    Since $\mathcal{C}$ is leaderless, 
    $\vec{I}_{\vec{x}+\vec{z}} = \vec{I}_\vec{x} + \vec{I}_\vec{z}$.
    Thus we can apply $\alpha_\vec{x}$ to $\vec{I}_{\vec{x}+\vec{z}}$,
    followed by $\alpha_\vec{z}$,
    producing $f(\vec{x}) + f(\vec{z})$ copies of $Y$.
    Since this is greater than $f(\vec{x}+\vec{z})$,
    to stably compute $f$,
    $\mathcal{C}$ must have a reaction consuming $Y$, so it is not output-oblivious.
    Since $\mathcal{C}$ was arbitrary, $f$ cannot be obliviously-computable. 
\end{proof}

This added condition of superadditivity gives us the 1D leaderless characterization.
\opt{final}{The proof appears in~\cite{severson2019composable}.}

\begin{theorem}
\label{1d MC0 theorem}
    For any $f:\N\to\N$,
    $f$ is obliviously-computable by a leaderless CRN $\iff f$ is semilinear and superadditive.
\end{theorem}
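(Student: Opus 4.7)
The forward direction is immediate: an obliviously-computable $f$ is stably computable, hence semilinear by Lemma \ref{Semilinear Are Stably Computable Lemma}, and superadditive by Observation \ref{MC0 is superadditive}.

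For the reverse direction, let $f:\N\to\N$ be semilinear and superadditive. Superadditivity forces $f(0)=0$ and that $f$ is nondecreasing, so the structural argument already used in the proof of Theorem \ref{1d MC theorem} applies: $f$ is eventually quilt-affine, i.e., there exist $n \in \N$, period $p \in \N_+$, and finite differences $\delta_{\overline{0}},\ldots,\delta_{\overline{p-1}} \in \N$ such that $f(x{+}1)-f(x) = \delta_{\overline{x}\bmod p}$ for all $x \geq n$. Setting $\sigma = \sum_{\overline{a}} \delta_{\overline{a}}$, Fekete's lemma applied to the superadditive sequence $f(x)$ yields the additional bound $f(x) \leq (\sigma/p)\, x$ for every $x$, so in the quilt-affine representation for $x \geq n$ the periodic offset is nonpositive.

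The plan is to build a leaderless output-oblivious CRN whose backbone is the cascade $X \to \sigma Z$, $pZ \to Y$: on input $x$ this alone produces $\lfloor \sigma x / p \rfloor$ copies of $Y$, which is an upper bound on $f(x)$ by the previous paragraph. To descend from this upper bound to the exact value of $f(x)$, I would introduce a finite set of auxiliary species --- residue-tracking species $R_{\overline{0}},\ldots,R_{\overline{p-1}}$ and a handful of species for the transient region $x<n$ --- together with reactions that, in place of the bulk $pZ \to Y$, absorb $Z$'s into $R_{\overline{r}}$'s in the appropriate residue class, and in place of $X \to \sigma Z$ for small inputs produce a dedicated combination of $Y$ and residue tokens matching $f(k)$ for each $k < n$. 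Since the periodic offset and the transient deviations of $f$ from $\lfloor \sigma x/p \rfloor$ are uniformly bounded (depending only on $f$), a fixed finite collection of such reactions suffices.

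The main obstacle will be verifying correctness in the absence of a leader, which requires both (a) that no reachable configuration has more than $f(x)$ copies of $Y$, and (b) that from every reachable configuration a stable configuration with exactly $f(x)$ copies of $Y$ remains reachable. Property (a) I would prove via the invariant $\vec{C}(Y) \leq f(x - a)$, where $a$ counts the X-like tokens that have not yet been ``fully processed''; superadditivity is precisely what is needed to propagate this invariant through each reaction. Property (b) is the delicate part: the CRN must not be able to ``commit'' prematurely to a wrong residue class, so I would design the residue-tracking reactions to be reversible on the residue species (but irreversible in $Y$, preserving output-obliviousness), allowing the system to re-route the accumulated residues to the correct final configuration. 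This reversibility-on-residues plus superadditivity together give the required completion argument, analogous in spirit to the leader-based construction of Theorem \ref{1d MC theorem} but using the input population itself in place of a distinguished leader.
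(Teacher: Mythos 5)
Your forward direction matches the paper. The reverse direction, however, diverges from the paper's construction in a way that hides a genuine gap.

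Your backbone $X\to\sigma Z$, $pZ\to Y$ is designed to produce $\lfloor\sigma x/p\rfloor$ copies of $Y$, which (by Fekete) is an \emph{upper bound} on $f(x)$, and you then hope to ``descend'' to $f(x)$ by diverting some $Z$'s into residue-tracking species instead of into $Y$. But in an output-oblivious CRN, $Y$ once produced can never be removed, and there is nothing in a leaderless system to referee the race between $pZ\to Y$ and whatever alternative $Z$-absorbing reactions you add: a scheduler can fire $pZ\to Y$ to completion before any diversion happens, yielding $\lfloor\sigma x/p\rfloor>f(x)$ copies of $Y$ from a reachable configuration. The ``reversibility-on-residues'' idea cannot undo this, precisely because the overproduction is in $Y$, not in the residue species. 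Similarly, the phrase ``in place of $X\to\sigma Z$ for small inputs'' presupposes the CRN can tell that the input is small, which a leaderless CRN cannot do at the level of individual $X$ molecules. Your invariant $\vec{C}(Y)\leq f(x-a)$ is the right shape of statement, but the backbone you wrote down does not satisfy it, and you never specify a reaction set that does.

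The paper takes the opposite tack, building $Y$ up from \emph{below} rather than down from an upper bound. Each $X$ fires $X\to f(1)Y + L_1$, and then ``auxiliary-leader'' tokens coalesce pairwise via reactions of the form $L_i + L_j \to D_{i,j}Y + L_{i+j}$ (and analogous $L_i+P_{\overline a}$ and $P_{\overline a}+P_{\overline b}$ reactions for the periodic regime), where the correction $D_{i,j}=f(i+j)-f(i)-f(j)$ is nonnegative \emph{precisely because $f$ is superadditive}. The running invariant is that if the surviving tokens carry values $i_1,\dots,i_k$ with $\sum i_m = x$ then $\vec{C}(Y)=\sum_m f(i_m)\leq f(x)$, so $Y$ never overshoots, and when the tokens have all merged into one the output is exactly $f(x)$. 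This is where superadditivity actually does the work; your construction uses it only to get the Fekete bound, which points in the wrong direction for an output-oblivious CRN. You should rework the construction along these ``merge and emit the superadditivity defect'' lines.
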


\opt{full}{

\begin{proof}
    $\implies:$ By Lemma~\ref{Semilinear Are Stably Computable Lemma} and Observation~\ref{MC0 is superadditive}.
    
    $\impliedby:$ If $f$ is superadditive, then $f$ is also nondecreasing (since $f(x+1)\geq f(x)+f(1)\geq f(x)$). 
    Then as in the Proof of Theorem \ref{1d MC theorem}, $f$ is eventually quilt-affine, so there exist $n\in\N$, period $p\in\N_+$, and finite differences $\delta_{\overline{0}},\ldots,\delta_{\overline{p-1}}\in\N$, such that for all $x\geq n$,  $f(x+1)-f(x)=\delta_{(\overline{x}\mod p)}$. Also, without loss of generality assume $p$ divides $n$, so $\overline{n}\mod p=\overline{0}$.
    
    The new CRN construction is motivated by trying to simply remove the leader species $L$ from the construction used in Theorem \ref{1d MC theorem}. Recall that set of reactions was 
        \begin{align*}
            L&\to f(0)Y+L_0
            \\
            L_i+X&\to[f(i+1)-f(i)]Y+L_{i+1}\qquad\text{for all }i=0,\ldots,n-2
            \\
            L_{n-1}+X&\to[f(n)-f(n-1)]Y+P_{\overline{n}}
            \\
            P_{\overline{a}}+X&\to\delta_{\overline{a}} Y+P_{\overline{a+1}}
            \qquad\text{for all }\overline{a}=\overline{0},\ldots,\overline{p-1}. 
        \end{align*}
    Since $f$ is superadditive, we must have $f(0)=0$. We then remove the species $L$ and $L_0$, and the two reaction that contain them, and add the first reaction
    \[
    X \to f(1)Y+L_{1}
    \]
    If this reaction only occurred once, this would still correctly compute $f$. Otherwise, however, there will be multiple ``auxiliary leader species'' from $\{L_1,\ldots,L_{n-1},P_{\overline{0}},\ldots,P_{\overline{p-1}}\}$ in the system. To correctly compute $f$, we must introduce pairwise reactions between these species that reduce the count of auxiliary leaders and add a corrective difference. 
    
    For all $i,j\in\{1,\ldots,n-1\}$, the reaction between $L_i$ and $L_j$ is
    \[
    L_i+L_j\to D_{i,j}Y+
    \begin{cases}
    L_{i+j} \text{ if }i+j<n\\
    P_{\overline{i+j}} \text{ if }i+j\geq n
    \end{cases}
    \]
    where $D_{i,j}:=f(i+j)-f(i)-f(j)\geq 0$ by superadditivity, and is the difference between how much output $Y$ was released in the reactions that produced $L_i$ and $L_j$ and how much should have been produced from the input that led to $L_i$ and $L_j$.
    
    We have similar reactions between $L_i$ and $P_j$ for all $i\in\{1,\ldots,n-1\}$ and $\overline{a}\in\{\overline{0},\ldots,\overline{p-1}\}$
    \[
    L_i+P_{\overline{a}}\to D_{i,\overline{a}}Y+P_{\overline{i+a}}
    \]
    where $D_{i,\overline{a}}:=f(i+n+a)-f(i)-f(n+a)\geq 0$ by superadditivity. The reaction sequences that produced $L_i$ consumed $i$ copies of input $X$, and those that produced $P_{\overline{a}}$ consumed $n+a+kp$ for some $k\in\N$, so we have undercounted by $f(i+n+a+kp)-f(i)-f(n+a+kp)=D_{i,\overline{a}}$ since the periodic differences cancel.
    
    Finally, the reactions between $P_{\overline{a}}$ and $P_{\overline{b}}$ for all $\overline{a},\overline{b}\in\{\overline{0},\ldots,\overline{p-1}\}$ are
    \[
    P_{\overline{a}}+P_{\overline{b}}\to D_{\overline{a},\overline{b}}Y+P_{\overline{a+b}}
    \]
    where $D_{\overline{a},\overline{b}}:=f(n+a+n+b)-f(n+a)-f(n+b)\geq 0$ by superadditivity, and this gives the corrective difference in output by a similar argument.
    
    Note that the rest of the reactions used in Theorem \ref{1d MC theorem} are not strictly necessary, since if all input $X$ undergoes the first reaction $X\to f(1)Y+L_1$, the corrective difference reactions will then reduce the count of auxiliary leader species down to $1$, while outputting the correct differences to produce precisely $f(x)$ output. 
\end{proof}
}

\section{Conclusion}
\label{sec:conclusion}

An obvious question is the computational power of output-oblivious CRNs without an initial leader.
A leaderlessly-obliviously-computable function must be superadditive, which is a strictly stronger condition than being nondecreasing. The continuous result \cite{ContinuousMC} had the same restriction of superadditivity, so our ``scaling limit'' reduction to their function class (Theorem~\ref{thm-continuous-comparison}) shows our main function class is already ``almost superadditive.'' We also showed in the 1D case, $f:\N\to\N$ is leaderlessly-obliviously-computable if and only if $f$ is semilinear and superadditive (Theorem \ref{1d MC0 theorem}).

Does adding the additional constraint of superaddivity to our full result (Theorem \ref{thm:main result}) classify leaderlessly-obliviously-computable $f:\N^d\to\N$? 
If this were true, a proof would require modifying our construction (Section \ref{Construction}) to eliminate the leader $L$. 
We successfully modified the 1D construction (Theorem \ref{1d MC theorem}) to remove the leader in proving Theorem \ref{1d MC0 theorem}, but it has been difficult to extend the same ideas to our much more complicated general construction.

An initial leader can also help make computation faster~\cite{kosowski2018brief, angluin2006fast, belleville2017hardness}. Many recent results in population protocols have shown time upper and lower bounds for computational tasks such as leader election and function/predicate computation~\cite{kosowski2018brief, gasieniec2018fast, LeaderElectionDIST, belleville2017hardness, alistarh2018space, alistarh2017timespace}.
These techniques, however, are not at all designed to handle the constraint of output-obliviousness.
It would be interesting to study how this constraint affects the time required for computation.

\paragraph{Acknowledgements.}
We thank Anne Condon, Cameron Chalk, Niels Kornerup, Wyatt Reeves, and  David Soloveichik for discussing their related work with us and contributing early ideas.

\opt{sub}{\newpage}
\bibliography{references,refs}
\bibliographystyle{plain}

\end{document}